\renewcommand*{\phi}{\varphi}
\newtcolorbox[auto counter,
		      crefname={diagram}{diagrams},
		      Crefname={Diagram}{Diagrams}]{diagrambox}[2][]{%
		      colback=white,colframe=purple!75!black,fonttitle=\bfseries,
		      title=Diagram~\thetcbcounter: #2,#1}
\NewDocumentEnvironment{diagram}{o m m}{
  \IfNoValueTF{#1}{
    \begin{diagrambox}[float,label=#2]{#3}
  }{
    \begin{diagrambox}[float,floatplacement=#1,label=#2]{#3}
  }
}{
  \end{diagrambox}
}
\numberwithin{equation}{section}
\begin{document}
\title{Separation Logic of Generic Resources via Sheafeology}
\author[B.~van~Starkenburg]{Berend {van~Starkenburg}}
\orcid{0009-0001-5379-193X}
\affiliation{%
\institution{Leiden University}
\department{Leiden Institute of Advanced Computer Science}
\city{Leiden}
\country{The Netherlands}}
\email{bcrvan.starkenburg@liacs.leidenuniv.nl}
\author[H.~Basold]{Henning Basold} 
\orcid{0000-0001-7610-8331}
\affiliation{%
\institution{Leiden University}
\department{Leiden Institute of Advanced Computer Science}
\city{Leiden}
\country{The Netherlands}}
\email{h.basold@liacs.leidenuniv.nl}
\author[C.~Ford]{Chase Ford} 
\orcid{0000-0003-3892-5917}
\affiliation{%
\institution{Leiden University}
\department{Leiden Institute of Advanced Computer Science}
\city{Leiden}
\country{The Netherlands}}
\email{m.c.ford@liacs.leidenuniv.nl}
\begin{abstract}
Separation logic was originally conceived in order to make the verification of pointer programs scalable to large systems and it has been extremely effective at that.
The principle idea is that programs typically do not have to interact with the whole available memory, but only small parts on which reasoning can thus be focused.
This idea is implemented in separation logic by extending first-order logic with separating connectives, which inspect local regions of memory.
It turns that this approach not only applies to pointer programs, but also to programs involving other resource structures.
Various theories have been put forward to extract and apply the ideas of separation logic more broadly.
This resulted in algebraic abstractions of memory and many variants of separation logic for, e.g., concurrent programs and stochastic processes.
However, none of the existing approaches formulate the combination of first-order logic with separating connectives in a theory that could immediately yield program logics for different resources.
In this paper, we propose a framework based on the idea that separation logic can obtained by making first-order logic resource-aware.
First-order logic can be understood in terms of categorical logic, specifically fibrations.
Our contribution is to make these resource-aware by developing categorical logic internally in categories of sheaves, which is what we call sheafeology.
The role of sheaves is to model views on resources, through which resources can be localised and combined, which enables the scalability promised by separation logic.
We contribute constructions of an internal fibration in sheaf categories that models predicates on resources, and that admits first-order and separating connectives.
Thereby, we attain a general framework of separation logic for generic resources, a claim we substantiate by instantiating our framework to various memory models and random variables.
\end{abstract}
\begin{CCSXML}
<ccs2012>
<concept>
<concept_id>10003752.10003753.10003761</concept_id>
<concept_desc>Theory of computation~Concurrency</concept_desc>
<concept_significance>500</concept_significance>
</concept>
<concept>
<concept_id>10003752.10003790.10002990</concept_id>
<concept_desc>Theory of computation~Logic and verification</concept_desc>
<concept_significance>500</concept_significance>
</concept>
<concept>
<concept_id>10003752.10003790.10011742</concept_id>
<concept_desc>Theory of computation~Separation logic</concept_desc>
<concept_significance>500</concept_significance>
</concept>
<concept>
<concept_id>10003752.10010124.10010138.10010142</concept_id>
<concept_desc>Theory of computation~Program verification</concept_desc>
<concept_significance>500</concept_significance>
</concept>
</ccs2012>
\end{CCSXML}

\ccsdesc[500]{Theory of computation~Concurrency}
\ccsdesc[500]{Theory of computation~Logic and verification}
\ccsdesc[500]{Theory of computation~Separation logic}
\ccsdesc[500]{Theory of computation~Program verification}
\keywords{separation logic, internal fibration, sheaves, Day convolution}
\received{10 July 2025}
\received[revised]{}
\received[accepted]{}
\maketitle							
\section{Introduction}
\label{sec:introduction}

A ubiquitous challenge in program verification is to achieve scalability for reasoning about the use and sharing of program resources.
Prototypical examples of such resources include memory with pointer structures that are shared between concurrent processes and the combination of processes with probabilistic behaviour.
In both cases, scalability of reasoning can be achieved via logical connectives that can separate processes and their local resources, while allowing properties of processes on local resources to be combined into properties of composed processes.
For pointer programs, one can use Reynold's separation logic, while probabilistic separation logic can handle compositional reasoning on stochastic processes.
In this paper, we propose a separation logic of generic resources based on categorical logic internal in categories of sheaves, or \emph{sheafeology} for short. 
The ensuing framework affords a uniform perspective on a host of memory models and probabilistic behaviour by instantiation.

Separation logic and its semantics comes in many flavours, not only differing in the kind of resources one can reason about but also how they are treated.
For instance, memory models can be strict (all memory cells allocated), partial (allocation at runtime) and support pointer structures (locations can be stored in memory).
Also the differences between finite, finitely supported and infinite memory have been debated~\cite{dHd23:LogicSeparationLogic, HT16:CompletenessFirstOrderAbstract, Weber04:MechanizedProgramVerification}.
Similar questions come up for probabilistic reasoning: Does one need finite distributions, countable distributions or proper measures~\cite{LiEA24}?
The combination of probabilistic memory and the introduction of concurrency then multiplies the options.
In order to push the idea of computing with resources and using separation logic to scale reasoning about localisation further, it is clear that we need a better understanding of how resourceful computations and localised reasoning via separation work for generic resources.
To this end, we propose sheafeology as a common framework for the uniform treatment of different models and flavours of separation logic.
In what follows, we briefly outline sheafeology and how it underpins the framework sketched above.
Sheafeology combines two main ideas.

\begin{wrapfigure}[11]{r}{2.8cm}
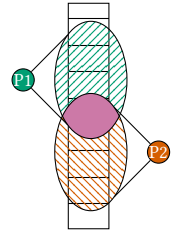

  \centering
  \vspace*{-0.3cm}
  \memoryViews{}
  \vspace*{-0.2cm}
  \caption{\small Two views on regions of shared memory}
  \label{fig:memory-views}
\end{wrapfigure}
\paragraph{Sheaves}
Resources in separation logic have the crucial property that views on them can be localised and compatible views can be combined.
This is illustrated in \cref{fig:memory-views} on the right, which displays the views of two processes P1 and P2 on shared memory.
These processes can operate independently on the disjoint regions but also interact via the overlapping region.
If the overlapping region is empty, then separating conjunction can immediately localise reasoning to the two processes.
However, if the region is inhabited, one needs to employ a weak form of separating conjunction~\cite{BrotherstonEA20} that requires compatibility between the actions of the two processes.

The language of sheaves provides a uniform viewpoint on both cases.
In the simplest case, suppose that $\Loc$ is some set of memory locations and $\Val$ a set of values that can be stored in memory.
For a region $U \subseteq \Loc$, we can model the possible assignments of values to locations as the set $\IntHom{U}{\Val}$ of maps $\sigma \from U \to \Val$.
These are also called (strict) \emph{heaps} and we write $MU$ for $\IntHom{U}{\Val}$.
Heaps can be \emph{localised} to sub-regions $V \subseteq U$ simply by restricting $\sigma \in MU$ to $\restr{\sigma}{V} \in MV$.
Given regions $U_{1}$ and $U_{2}$ for the two processes and heaps $\sigma_{1} \in M U_{1}$ and $\sigma_{2} \in MU_{2}$, we can \emph{amalgamate} these heaps into one heap $\sigma \in M(U_{1} \cup U_{2})$ if $\restr{\sigma_{1}}{U_{1} \cap U_{2}} = \restr{\sigma_{2}}{U_{1} \cap U_{2}}$.
In this case, we define $\sigma$ as follows.
\begin{equation*}
  \sigma(x) =
  \begin{cases}
    \sigma_{1}(x), & x \in U_{1} \\
    \sigma_{2}(x), & x \in U_{2}
  \end{cases}
\end{equation*}

Without going into much detail, we can equip the powerset $\Pow{\Loc}$ with a map $J$ that assigns to each $U \subseteq \Loc$ a set $J(U)$ of \emph{covers} of $U$.
A cover of $U$ is a family $\set{U_{i}}_{i \in I}$ with $U_{i} \subseteq U$ and $\bigcup_{i \in I} U_{i} = U$ for an arbitrary index set $I$.
The strict heaps form a \emph{sheaf} for $(\Pow{\Loc}, J)$, in the sense that for all covers $\set{U_{i}}_{i \in I} \in J(U)$ and all heaps $\sigma_{i} \from U_{i} \to \Val$ with $\restr{\sigma_{i}}{U_{i} \cap U_{j}} = \restr{\sigma_{j}}{U_{i} \cap U_{j}}$ for all $i, j \in I$, there is a unique $\sigma \from U \to \Val$ with $\restr{\sigma}{U_{i}} = \sigma_{i}$.

The sheaf-theoretic view allows us to handle different flavours of memory models in one framework.
Partial memory is obtained by moving from total maps in $MU$ to \emph{partial maps} given by
\begin{equation*}
  M_{p}U = \IntHom{U}{\Val + \set{\bot}} \, ,
\end{equation*}
where $\Val + \set{\bot}$ adjoins a new element $\bot$ to $\Val$ that signals ``undefined''.
Common models of separation logic further restrict attention to finitely supported heaps $M_{f}U \subseteq \IntHom{U}{\Val + \set{\bot}}$, where $\sigma \in M_{f}U$ if the set $\setDef{x \in U}{\sigma \neq \bot}$ is finite.
These finitely supported heaps form a sheaf for \emph{finite covers} $J_{f}(U)$, which are covers over finite index sets.
Similarly, one can restrict finite probability distributions using renormalisation and amalgamate compatible distributions on finite covers.
The more general case of probability measure requires more care, but can also be handled using sheaves~\cite{LiEA24}.
This sheaf-theoretic view thus enables us to model different flavours of resources in one framework, where covers determine the local viewpoints on resources and sheaves model resources that can be restricted and amalgamated according to those viewpoints.

\paragraph{Internal categorical logic}

Separation logic typically comprises three kinds of connectives: first-order logic atoms and connectives, separating connectives and model specific atoms like the points-to predicate~\cite{Reynolds02,Reynolds05}:
\begin{equation*}
  \phi \Coloneqq
  \textcolor{WongDarkPink!80!black}{\underbrace{s \pointsTo t \mid \dotsm}_{\text{model specific atoms}}}
  \quad \mid
  \textcolor{WongRedOrange}{\underbrace{\phi \sepconj \phi \mid \phi \sepimp \phi}_{\text{separating connectives}}}
  \quad \mid \quad
  \textcolor{WongDarkBlue}{
    \underbrace{\bot
      \mid s \doteq t
      \mid \phi \land \phi
      \mid \phi \to \phi
      \mid \all{x} \phi
      \mid \dotsm
    }_{\text{first-order logic atoms and connectives}}
  }
\end{equation*}

An established categorical logic view on first-order logic is via fibrations~\cite{Jacobs99}.
Let us illustrate the idea on the fibration of predicates over a category $\SetC$ of sets and maps.
We define the category $\SetPred$ to have as objects pairs $(X, Q)$ where $Q \subseteq X$ is a predicate, and its morphisms $(X, Q) \to (Y, R)$ are maps $f \from X \to Y$ such that the image of $Q$ under $f$ lies in $R$ or, in symbols, $\img{f}(Q) \subseteq R$.
This category comes with a projection functor $p \from \SetPred \to \Set$ with $p(X, Q) = Q$.
The projection allows us to partition $\SetPred$ into its \emph{fibres} $\SetPred_{X}$, which are subcategories containing only objects of the form $(X, Q)$ and morphism $(X, Q) \to (X, R)$ of the form $\id_{X}$ with $Q = \img{\id_{X}}(Q) \subseteq R$.
General substitution in first-order logic is handled via \emph{reindexing}, but let us focus on the special case of \emph{weakening}.
Let $\pi_{Y} \from X \times Y \to X$ be the projection from the product onto~$X$.
This map induces a functor $\reidx{\pi_{Y}} \from \SetPred_{X} \to \SetPred_{X \times Y}$ by taking preimages:
$\reidx{\pi_{Y}}(X, Q) = (X \times Y, \preimg{\pi_{Y}}(Q))$.
Explicitly, 
\begin{equation*}
  \preimg{\pi_{Y}}(Q) = \setDef{(x, y)}{x \in Q} \, .
\end{equation*}
That is, the functor $\reidx{\pi_{Y}}$ implements weakening.
The other way around, we can define a functor $\forall_{Y} \from \SetPred_{X \times Y} \to \SetPred_{X}$ that is a right adjoint of $\reidx{\pi_{Y}}$ as follows:
\begin{equation*}
  \forall_{Y}(X \times Y, Q) = (X, \setDef{x \in X}{\all{y \in Y} (x, y) \in Q}.
\end{equation*}
Fibrations provide a general definition for reindexing, which enables an abstract theory of existential and universal quantification as, respectively, left and right adjoint functors to weakening functors.
The equality predicate can also be handled in a similar spirit.
Finally, the propositional connectives are functors on fibres that are preserved by reindexing.
For instances, conjunction is a functor $\land_{X} \from \SetPred_{X} \times \SetPred_{X} \to \SetPred_{X}$, such that $Q \land_{Y} R$ is a categorical product for all $Q, R \subseteq Y$ and such that for all $f \from X \to Y$ we have $\reidx{f}(Q \land_{Y} R) = \reidx{f}(Q) \land_{X} \reidx{f}(R)$.
The latter equation should be compared to the usual recursive definition of substitution with $(\phi \land \psi)[\gamma] = \phi[\gamma] \land \psi[\gamma]$.
Fibrations provide us with a framework for first-order logic, but they are lacking the separating connectives.

In order to integrate separating connectives with fibrations, we need to make the latter resource-aware.
Simply put, we develop categorical logic internally in categories of sheaves, the advertised sheafeology, and prove that there is an internal analogue of the predicate fibration.
To be slightly more precise, given a pair $(\CC, J)$ of a small category with covers, called a \emph{site}, we show that there are internal categories $\Univ$ and $\Pred$ in the category $\Sh[1]{\CC, J}$ of sheaves on this site that together form an internal fibration $\Pred \to \Univ$.
Here, $\Sh[1]{\AC, J}$ is the category of sheaves mapping to a category $\Set[1]$ of large sets, whose set-theoretic technicalities are deferred to \cref{sec:prelim}.
The sheaf $\Univ$ can be seen as replacing the category $\Set$ by a \emph{universe}, over which we can model resource-aware sets like memory or probability spaces.
Using the internal category $\Pred$ of predicates on resource-aware sets, we establish general conditions that allow us to construct internal monoids on such predicates that resemble resource monoids and yield our generic model of separation logic.

Finally, the model-specific atoms are interpreted as internal predicates for the corresponding resource model.
For instance, the points-to predicate can be interpreted internally as element of the fibre $\Pred_{M}$, $\Pred_{M_{p}}$ or $\Pred_{M_{f}}$, depending on the flavour of memory model that is chosen.
In the first case, the points-to does not include a check if a location is allocated, since allocation is not part of the model.
However, allocation is tested in the latter two cases.
For probability distributions, a specific atom would be the comparison of a random variable against a fixed distribution~\cite{LiEA24}.
By recasting separation logic via sheafeology, that is, categorical logic internal to sheaf categories, we obtain a framework for modelling generic resources and separation logic for reasoning about them.


\paragraph*{Related Work}
\label{sec:related-work}
This paper develops a categorical logic internally in sheaf categories to reason about resources such as memory and probabilistic data.
Instead of proposing new logics tailored to specific applications, we present a general semantic framework in which such logics arise internally from categorical structure.
This includes familiar instances such as separation logic~\cite{Reynolds02,OHearnEA09} and BI~\cite{IO01}, which introduced spatial connectives for heap ownership.
Pym~\cite{CP09:AlgebraLogicResourcebased, Pym02:SemanticsProofTheory} models the multiplicative structure via Day convolution on a resource category.
In contrast, present it as monoids on predicates over internalised small sheaves, which model resources in an ambient large sheaf category.
Logical connectives are given by Pym using the Heyting algebra on the subobject classifier, while we use the Heyting algebra structure on internalised predicates on small sheaves.

Separation algebras~\cite{COY07:LocalActionAbstract, DHA09:FreshLookSeparation} provide an axiomatisation for separation logic. 
The local view and the separating conjunction as a monoid reminds of our set up. However, our framework has a different perspective for separation, while the disjointness axiom is not enforced.

Our work extends categorical logic based on fibrations~\cite{Jacobs99} to internal categories using 2-categorical techniques~\cite{LR20}.
Predicates are elements in fibres of internal fibrations, and the logical connectives are operations on those fibres.
In particular, separating conjunction becomes a monoid for Day convolution~\cite{Day70}.
Pointer semantics derived from BI-hyperdoctrines~\cite{BieringEA07} feature as special cases.
Convolution-based models that assume disjointness~\cite{RW25} can be recovered by using a site with coproducts which is the case for the Schanuel topos~\cite{LiEA23}.
Unlike the internal logic of a topos~\cite[Chap.~VII]{MM94:SheavesGeometryLogic}, where predicates correspond to global subobjects classified by the subobject classifier, our framework defines a fibered internal logic over a resource sheaf. 
Logical structure arise not only from the Heyting algebra of subobjects, but also from monoidal reindexing and Day convolution, making the logic resource-aware.

In our framework, both weak and strong models of separating conjunction~\cite{BrotherstonEA20, PZ22} can be realised by varying the monoidal structure on the resource sheaf: disjoint gluing yields strong separation, while compatible amalgamation gives rise to weak separation.

Iris~\cite{KTB17:InteractiveProofsHigherorder,JungEA18} and Actris~\cite{HBK22:ActrisAsynchronousSessionType} are frameworks for higher-order concurrent separation logic, supporting modularity, invariants, and mechanised verification in Rocq based on the implementation of semantic approaches to separation logic and step-indexing in Rocq. 
Even though our framework is based on abstract category theory, it may inform new implementation techniques to enable new application areas of Iris and Actris.

In probabilistic settings, separation logic has been interpreted via product~\cite{BartheEA19} and independent combination~\cite{LiEA23}. 
Li et al.~\cite{LiEA24} capture the models as resource models for Day convolution in sheaf categories and unite them. We adopt a similar categorical setup and recover the product model internally. Other sheaf-based approaches to probabilistic reasoning, such as those using atomic sheaves for conditional independence\cite{Simpson24}, are also relevant.

The internal setting is well-suited for reasoning about computational effects.
Prior work includes local state monads~\cite{MM15,PP01} and algebraic theories~\cite{Power05,HP06}.
Extending our framework to internalise these structures could yield sheaves of predicate transformers~\cite{AK20}, 
supporting weakest precondition reasoning.
We also envision use cases where program logics for effects might be captured semantically within our framework~\cite{ND21}.

Game-semantic treatments of concurrency~\cite{MS18,MS18b,MS20} model interaction via execution traces.
Our framework differs in focus: it provides an internal account of predicate and resource structure, rather than externally specifying execution behaviour or communication.
Our approach opens up the possibility of treating resource-aware computations and reasoning in one framework, which will enable us to provide sheaf models of concurrent~\cite{TO15} and distributed systems~\cite{SS09} that have previously been treated externally on an application-specific basis. 

\todo[inline, disable]{~\cite{ND21} Could be a potential use case but I did not yet have time to analyse how and when.}

\paragraph*{Outline}
\label{sec:outline}
\cref{sec:prelim} reviews the basic ingredients, such as set-theoretic assumptions, category-theoretic terminology, sheaves and internal monoids, and outlines the view on resources as sheaves.
In \cref{sec:internalcategoriessheaves}, we introduce the main technical tools to carry out sheafeology, namely internal categories and fibrations, provide the main technical constructions of an internal universe and predicate fibration, and illustrate these constructions on basic memory models and resource-aware first-order logic.
\cref{sec:internallogic} elaborates on the view on separating connectives as internal monoids on the predicates that stem from the fibration in the previous section.
We round off this section with model-specific atoms for memory, and then provide an extensive example for probabilistic reasoning in \cref{sec:probability}.
In \cref{sec:conclusion}, we give a brief summary and an outlook on future work.

\section{A Sheaf-Theoretic View on Memory}
\label{sec:prelim}
While we assume familiarity with basic concepts of category theory~\cite{MacLane98:CatWorkingMath,Borceux94,Riehl16:CatTheoryinContext}, we will use this section to review the necessary background on sheaves~\cite[Chapter 3]{MM94:SheavesGeometryLogic}.
We will incrementally develop a sheaf-based model of memory, in line with~\cref{sec:introduction}, in the examples.

Category theory is, form the logical perspective, a first-order axiomatisable theory.
However, subtleties arise when we want to reflect category theory back into itself.
A famous instance of this are the hom-functor and the Yoneda embedding, both of which require that the morphisms between to fixed objects are elements of some category of sets.
Since this plays a central role in sheaf theory and thus our programme of resource-aware logic, we first need to fix our set-theoretic assumptions.
There are different possible choices of foundations for category theory, like Grothendieck universes~\cite[Expos\'{e} 1]{GV72}, each having different strengths~\cite{Schulman08}.
Our approach is inspired by the elementary theory of categories and sets (ETCS)~\cite{Leinster14:RethinkingSetTheory} and cumulative universes in type theory:
We assume that for every natural number $k$ there is a category $\Set[k]$ of \emph{$k$-sets and maps}, such that 1) each category is closed under the usual set-theoretic constructions, such as powerset, limits and colimits, 2) $\Set[k] \subseteq \Set[k+1]$ and 3) the collection of objects of $\Set[k]$ is an object in $\Set[k+1]$.
By default, $\Set$ will refer to $\Set[0]$ and we may call its elements \emph{small sets}.


We then use the following terminology for categories.
A \emph{$k$-category} $\CC$ comprises a set $\CC_{0} \in \Set[k]$ of objects and for all $A, B \in \CC_{0}$ a set $\CC(A, B) \in \Set[k]$ of morphisms, subject to the usual axioms of categories.
If $\CC$ is a $(k+1)$-category, such that $\CC(A, B) \in \Set[k]$, then we call $\CC$ locally $k$-small.
As usual, $0$-categories are called \emph{small} and locally $0$-small categories are just called locally small.


For categories $\CC$, we employ the usual notation:
We shorten $A \in \CC_{0}$ to just $A \in \CC$.
Morphisms $f \in \CC(A, B)$ are denoted by $f \from A \to B$, and
we write $g\comp f$ or $gf$ for the composite morphism of $g \from B \to C$ with $f$.
The identity morphism of $A \in \CC$ is denoted $\id_A$.
The \emph{opposite category} of~$\CC$ is denoted $\op{\CC}$: it has the same objects as~$\CC$,
and~$\mor{\op{\CC}}(A, B) = \mor{\CC}(B, A)$ for all objects~$A,B$ in $\CC$.

\subsection{Presheaves}
\label{sec:presheaves}

Presheaves formalise the first important ingredient to separation logic: the localisation of resources to along relations of views.
Given a $k$-small category $\CC$, we denote by $\FunCat{\CC}{\Cat{B}}$ the locally $k$-small category with functors from~$\CC$ into~$\Cat{B}$ as objects and natural transformations between them as morphisms.
A \emph{$k$-presheaf on~$\CC$} is an object in the category
\begin{equation*}
\PSh[k]{\CC} \coloneqq \FunCat{\op{\CC}}{\Set[k]}.
\end{equation*}
Unless we need to explicitly handle set levels, we will leave out the subscript level and just write $\PSh{\CC}$ instead of $\PSh[k]{\CC}$.
We interpret presheaves in this paper as follows.
The objects $A$ in $\CC$ model views on resources and morphisms their relations.
Given a presheaf $F \in \PSh{\CC}$, we understand $FA$ as the set of resources for the view $A$, while for a morphism $f \from A \to B$ we can read the map $Ff \from FB \to FA$ as restriction or localisation.

\begin{example}
\label{expl:powerset}
For a set~$X$, we view the powerset $\pow X$ as the small category with subsets~$A\subseteq X$ as objects and the inclusions~$\iota_{A, B}\colon A\hra B$ as morphisms (also denoted~$\iota$ if~$A$ and~$B$ are clear from context).
Thus, there is a unique morphism~$A\to B$ if~$A\subseteq B$, and otherwise~$\pow X(A, B)$ is empty.

Fix a set $\Loc$ of \emph{locations} and a set $\Val$ of \emph{values}, and let $\caL$ be the powerset $\pow\Loc$.
We define the \emph{strict memory presheaf} $M\colon\op{\caL}\to\Set$ as follows.
For $U\subseteq\Loc$ set $M(U) \coloneqq [U, \Val]$, where $\IntHom{U}{\Val}$ is the set of all maps~$U\to\Val$.
The action $M$ is given by pre-composition with inclusions, which corresponds to restriction:
\begin{equation*}
M(\iota_{U_1, U_2})\colon [U_2, \Val]\to[U_1, \Val], \qquad
M(\iota_{U_1, U_2})(f) \coloneqq (U_1\xra{\iota}U_2\xra{f}\Val).
\end{equation*}
\end{example}

Resources that are independent of the view can be modelled by constant presheaves.
\begin{example}
\label{expl:constant}
For any set~$X$, the assignment~$A\mapsto X$ is the object-part of a presheaf on~$\Cat{A}$ with the action defined by~$K(f):=\id_X$ for every morphism~$f\colon A\to B$.
We refer to~$K$ as the \emph{constant presheaf at~$X$}.
The constant presheaf at a terminal object~$1:=\{*\}$ of~$\Set$ is a terminal object in~$\PSh{\Cat{A}}$.
\end{example}

A essential tool in understand presheaves is given by viewing relations between views themselves as resources, which then allows us to probe resources.
\begin{example}
  \label{expl:yoneda}
  For a locally $k$-small category $\CC$, we refer to the functor~$\yo\colon\CC\to\PSh[k]{\CC}$, which sends $A\in\CC$ to the \emph{corepresentable presheaf}~$\CC(-, A)\colon\op{\CC}\to\Set$, as the \emph{Yoneda embedding} ($\yo$ is pronounced ``yo'').
  Explicitly, on objects $\Yo(A)(B) = \CC(B, A)$ and it acts on a morphism $f\colon B\to C$ by pre-composition: $\Yo(A)(f)(g) = B\xra{f} C\xra{g} A$.
We often write $\Yo_{A}$ for the application of $\Yo$ to $A$.
A fundamental result is the so-called Yoneda lemma, which asserts that the elements of a presheaf $F$, that is, the local resources, can be probed by natural transformations, in that there is an isomorphism $FA \cong \PSh[k]{\CC}(\Yo_{A}, F)$ of sets that is natural in $A$.
\end{example}

\subsection{Sheaves}
\label{sec:sheaves}

As outlined in the introduction, the second ingredient that we need, besides localisation, is the amalgamation of resources.
This is an operation supported by sheaves, which form a full subcategory of presheaves.
In order to define them, we need to formalise the notion of cover, for which we will use the general notion of Grothendieck coverages that capture the essential properties of covers on topological spaces but allow for arbitrary categories instead of merely a poset of open sets.

Let $\CC$ be a (small) category.
A \emph{sieve} of an object~$A\in\CC$ is a family $S$ of morphisms of the form $S = \set{f \from A_f\to A}$ that is closed under pre-composition:~$f \comp g \in S$ for every~$f \in S$ and every morphism~$g\colon B\to A_f$ in~$\CC$.
We note that a sieve of~$A$ is equivalently a \emph{subfunctor} $S\hra\yo_{A}$, which is a functor with a natural transformation into the Yoneda presheaf (\cref{expl:yoneda}) with all components being inclusion maps.

Given a sieve $S$ of $A$, and a morphism~$h\colon B\to A$ in~$\CC$, we write~$\pbCover{h}(S)$ for the \emph{pullback} of $S$ along $h$, given by $\pbCover{h}(S) = \setDef{g \from C \to B}{h \comp g \in S}$.
A \emph{(Grothendieck) coverage} of~$\CC$ is an assignment of a set~$J(A)$ of sieves to each~$A\in\CC$ satisfying the following \emph{saturation conditions}.
\begin{enumerate}
\item
the \emph{maximal sieve}~$\Yo_{A}$ is in~$J(A)$;
\item
$\pbCover{h}(S)\in J(B)$ for each morphism~$h\colon B\to A$ and each $S\in J(A)$;
\item
$R\in J(A)$ if there exists~$S\in J(A)$ such that~$\pbCover{h}(R)\in J(B)$ for all~$(h\colon B\to A)\in S$.
\end{enumerate}
We refer to these axioms as \emph{maximality, stability}, and \emph{transitivity}, respectively.
If we speak of a sieve that is member of a specified coverage, then we call it a \emph{cover}.

\begin{definition}
\label{defn:sheaf}
A \emph{(Grothendieck) site} is a pair~$(\Cat{C}, J)$ where~$\Cat{C}$ is a small category~$\Cat{C}$ and~$J$ is a coverage.
We call~$(\Cat{C}, J)$ \emph{cartesian} if~$\Cat{C}$ is finitely complete.
For~$F\in\PSh[k]{\Cat{C}}$ a ($k$-)presheaf, an \emph{$F$-compatible family} for a cover $S = \set{f\colon A_f\to A}$ in $J(A)$ is a family of points~$\set{x_f\in F(A_f)}_{f \in S}$ such that for every commutative square of the form
\begin{equation*}
\begin{tikzcd}[column sep =20mm]
B\arrow[r, "h"]\arrow[d, swap, "k"]	&A_g\arrow[d, "g"]		\\
A_f\arrow[r, "f"]					&A
\end{tikzcd}
\end{equation*}
we have~$F(k)(x_f) = F(h)(x_g)$ in~$FB$.
A \emph{($k$-)sheaf} over~$(\Cat{C}, J)$ is a presheaf~$F\in\PSh[k]{\Cat{C}}$ such that for any object~$A\in\Cat{A}$ and any $F$-compatible family $\set{x_{f}}_{f \in S}$ for cover $S \in J(A)$, there exists a unique \emph{amalgamation} $a\in FA$, such that

\noindent ~$Ff(a) = x_f$ for all~$f \in S$.
We write~$\Sh[k]{\Cat{C}, J}$ for the full subcategory of~$\PSh[k]{\Cat{C}}$ spanned by sheaves over~$(\Cat{C}, J)$.
\end{definition}

Various concepts of coverage are found in the literature on sheaves which are obtained by dropping some, or even all, of the saturation conditions detailed above (see, e.g., Johnstone~\cite{Johnstone02} for a comprehensive overview).
First of all, the condition that a cover is closed under pre-composition is not fulfilled by the usual understanding of cover on open sets of a topological space.
In general, let us call a family of morphisms $(f \from A_f\to A)$ a \emph{pre-cover} of an object~$A$.
A sieve can be obtained from a pre-cover by closing under pre-composition, which amounts in the case of pre-covers in topological spaces to taking the downwards closure.
Second, the saturation conditions on coverages can be weakened.
By a \emph{pre-coverage} we understand an assignment of a set of pre-covers~$J(A)$ to each object~$A$ satisfying the following condition:
for every pre-cover $(f \from A_f\to A)\in J(A)$ and every morphism~$h \from B\to A$ in~$\Cat{A}$, there exists a pre-cover~$(g \from B_g\to B)\in J(B)$ such that for every $g$ there is an $f$ and a $k \from B_{g} \to A_{f}$ with $fk = hg$.
Given a pre-coverage, we can always saturate it to obtain a Grothendieck coverage.
The latter are simpler to use in proofs, while pre-coverages are often easier to obtain in practice.
These notions of coverage are essentially equivalent: every pre-coverage induces a unique Grothendieck coverage with equivalent categories of sheaves.
We pass freely between these equivalent presentations of sheaves without further mention.

\begin{example}
\label{expl:site-powerset}
We obtain a coverage $J$ of~$\pow X$ (\cref{expl:powerset}) by defining letting $J(A)$ consist of downwards closed covers $U$, in the topological sense, of $A$ by subsets of $A$: 
\begin{equation*}
  \bigcup_{f \in U} A_{f} = A \text{ and if } B \subseteq A_{f} \text{ then } \iota_{B, A} \in U
\end{equation*}
\end{example}

We can use the site on the powerset to model the resource of strict memory as sheaf.
\begin{example}
\label{expl:memory-cover}
Consider the site~$(\caL, J)$ where~$J$ is the coverage of~\cref{expl:site-powerset} above.
A~$M$-compatible family (\cref{expl:powerset}) of a cover~$S = \set{f\colon A_f\to A}$ is given by a family of points
$a_f\in[A_f, \Val]$ such that for any pair~$f,g\in S$ and any~$B\subseteq A_f \cap A_g$ we have
\begin{equation*}
M(\iota_{B, A_f})(a_f) = a_f\comp \iota_{B, A_f}
					       = a_g\comp\iota_{B, A_g} = M(\iota_{B, A_g})(a_g).
\end{equation*}
That is, the restriction of $a_{f}$, given by $\restr{a_f}{B} = (B\hra A_f\xra{a_f} \Val)$, and the restriction $\restr{a_g}{B}$ are equal.
This shows that $M$ is a sheaf, which we call the \emph{strict memory sheaf}.
\end{example}

So far, we have only used the powerset and a coverage that corresponds to the discrete topology.
However, it is crucial that we allow general sites in order to enable, e.g., finitary memory models.

\begin{example}
  \label{ex:partial-finitary-memory}
  Let $\caL$ again be the powerset of locations and $\Val$ the set of values from \cref{expl:powerset}.
  A \emph{partial memory} or \emph{heap} on $U \subseteq \Loc$ is a map $\sigma \from U \to \Val + \set{\bot}$ into the coproduct of $\Val$ with $\set{\bot}$.
  The fresh element $\bot$ is used to signal that a location is not allocated.
  We define the \emph{support} of $\sigma$ by $\supp \sigma = \setDef{x \in U}{\sigma(x) \neq \bot}$.
  This allows us to define the presheaves $M_{p}$ and $M_{f}$ on $\caL$ of heaps and \emph{finitary heaps} as follows.
  \begin{equation*}
    M_{p} U = \IntHom{U}{\Val + \set{\bot}}
    \quad \text{and} \quad
    M_{f} U = \setDef{\sigma \in M_{p}U}{\supp \sigma \text{ is finite}}
  \end{equation*}
  The restriction along inclusion works precisely as for the strict memory.
  Since $M_{p}U$ contains all maps, just like $M$, it is a sheaf for $(\caL, J)$ from \cref{expl:memory-cover}.

  However, $M_{f}$ is not a sheaf on that site because an arbitrary family of finitary heaps cannot be amalgamated into one finitary heap.
  Let us instead define the coverage $J_{f}$ of finite covers.
  This works for an arbitrary distributive lattice, not necessarily bounded, but we just illustrate it for $\caL$.
  A finite pre-cover of a set $U \subseteq \Loc$ is a finite set $S \subseteq \Pow\caL$, such that $\bigcup S = U$.
  We obtain from such a pre-cover $S$ a cover by taking the downwards closure $\downset{S}$ under the inclusion order.
  One can easily check that the assignment
  \begin{equation*}
    J_{f}(U) = \setDef{\downset{S}}{S \text{ finite pre-cover of } U}
  \end{equation*}
  yields a coverage and thus a site $(\caL, J_{f})$.
  Moreover, $M_{f}$ is a sheaf on this site because for every $\downset{S} \in J_{f}(U)$ and compatible family $\sigma_{V} \in M_{f}V$, we can simply define its amalgamation $\sigma$ by $\sigma(x) = \sigma_{V}(x)$ for $V \in S$ and $x \in V$.
  This map is finitely supported because $S$ is finite and the unique amalgamation just as in \cref{expl:memory-cover}.
\end{example}

Other models for finitary heaps are obtained using sheaves on (a small skeleton on) the opposite category of finite sets and injective maps with an appropriate site~\cite{LiEA24}.
Similarly, one cane view random variables as resources using a sheaves on a specific site, see \cref{sec:probability}.

\subsection{Monoidal Categories and Day Convolution}
\label{sec:day-convolution}

Previous algebraic abstractions of separation logic have given semantics in terms of separation algebras~\cite{COY07:LocalActionAbstract, DHA09:FreshLookSeparation} and resource monoids~\cite{CP09:AlgebraLogicResourcebased}.
We will integrate separating conjunction as well as a monoid operation, but internally in sheaf categories.
Crucial to the usual notion of a monoid carried by a set $A$, is that the monoid operation can be seen as a map $A \times A \to A$ on the product and the monoid unit as a map $\ast \to A$ from a singleton set to $A$.
To make such operations resource-aware and internalise them into the category of sheaves, we need an analogue of the product and the singleton set.
One could use just products of sheaves, as the category $\Sh{\CC, J}$ is complete, but that is not the right perspective for separating conjunction.
For instance, for a sheaf $\Pred_{M}$ of predicates on the memory sheaf $M$, which we can informally see as sheaf with $\Pred_{M}(U) = \Pow{MU}$, separating conjunction aims to split the locations across two predicates to be combined:
\begin{equation*}
  P_{1} \ast_{U_{1}, U_{2}} P_{2} = \setDef{\sigma \in M (U_{1} \cup U_{2})}{\restr{\sigma}{U_{k}} \in P_{k}},
  \text{ where } P_{k} \in \Pred_{M}(U_{k})
\end{equation*}
This splitting can be achieved with so-called Day convolution $\Day$, which is a monoidal tensor but not a product on (pre)sheaves.
Our aim is then to establish separating conjunction as a monoid $\Pred_{M} \Day \Pred_{M} \to \Pred_{M}$ for a certain sheaf of predicates on $M$ or more generally on $\Pred_{F}$ for sheaves $F$ that model resources.

Suppose that~$\CC$ comes equipped with the structure of a \emph{symmetric monoidal category}~\cite{MacLane98:CatWorkingMath}.
This means that~$\CC$ has a \emph{tensor bifunctor}~$\tens \from \CC \times \CC \to \CC$,
a \emph{unit object}~$I\in\CC$, and isomorphisms
\begin{equation*}
\alpha \from (A\tens B)\tens C\to A\tens (B\tens C),
\qquad
\lambda \from I\tens A\to A,
\quad\text{and}\quad
\rho \from A\tens I\to A
\end{equation*}
that are natural in~$A,B,C\in\CC$, called the \emph{associator, left unitor}, and \emph{right unitor}, respectively.
Symmetry means that~$\CC$ additionally admits a natural isomorphism~$\beta_{A,B} \from A \tens B\to B \tens A$, called the \emph{braiding}, such that $\beta_{A,B}^{-1} = \beta_{B, A}$.
All this data is subject to coherence laws~\cite[Chapter XI]{MacLane98:CatWorkingMath}.

\begin{example}
\label{expl:powerset-monoidal}
A \emph{preordered monoid}, that is, a monoid $(M, +, 0)$ equipped with a preorder~$\leq$ making the operation~$+$ monotone in each argument, can be seen as a monoidal category whose objects are the elements of $M$, morphism are given by the preorder, the tensor by the monoid operation and the unit object by the monoid unit~$0$.
This results even in a \emph{strict} monoidal monoidal category because the associators and unitors are identity morphisms, and it is symmetric if the monoid operation is commutative.
The poset~$\pow X$ from \cref{expl:powerset} admits the structure of a commutative monoidal preorder with set-theoretic union as monoid operation and the empty set as unit.
In fact, Pym uses this kind of structure as the basis for Kripke semantics of separation logic~\cite{Pym02:SemanticsProofTheory}.
\end{example}

For a symmetric monoidal category $(\CC, \tens, \tensU)$, \textcite{Day70} showed how $\PSh{\CC}$ canonically becomes a symmetric monoidal category.
First, define the \emph{external tensor} of~$F, G \in \PSh{\CC}$ to be the presheaf
\begin{equation*}
  F\etens G \from\op{(\CC\times\CC)}\to\Set
  \quad \text{with} \quad
  (F\etens G)(A, B) \coloneqq F(A)\times G(B).
\end{equation*}

\begin{definition}
\label{defn:Day}
\emph{Day convolution} $\Day$ is the left Kan extension of the external tensor along~$\tens$:
\begin{equation*}
  \begin{tikzcd}
    \op{\CC \times \CC} \ar[d, "F \etens G"{swap}, ""{name=E, anchor=center, pos=0.4}] \rar{\op{\tens}}
    & \op{\CC} \ar[dl, "F \Day G", ""{name=D, anchor=center, pos=0.35}]
    \\
    \Set
    \arrow["\kappa", shorten >=5pt, shorten <=5pt, Rightarrow, from=E, to=D]
  \end{tikzcd}
\end{equation*}
\end{definition}
This makes $(\PSh{\CC}, \Day, \YoP{I})$ the free symmetric monoidal co-completion of $\CC$, such that $\Yo$ is a strong monoidal functor~\cite{IK86:UniversalPropertyConvolution}.
Moreover, $\PSh{\CC}$ is monoidal closed, since for each presheaf $F$, the functor $- \Day F \from \PSh{\CC} \to \PSh{\CC}$ has a right adjoint $F \multimap - \from \PSh{\CC} \to \PSh{\CC}$.

\begin{example}
\label{expl:day}
For the monoidal structure given by union on $\caL$, see \cref{expl:powerset-monoidal}, the Day convolution of presheaves $F, G\in\PSh{\caL}$ reduces to the coproduct, as we have for all $U\in\caL$
    \begin{equation*}
        (F\Day G)(U) \cong \coprod_{U_1 \cup U_2 = U} F(U_1)\times G(U_2)
    \end{equation*}
\end{example}

The following definition specifies what a monoid in a monoidal category is, which will the allow us to  model separating conjunction as a monoid for Day convolution.
\begin{definition}
\label{defn:monoid}
Let~$\CC$ be a monoidal category with tensor~$\otimes$ and unit object~$I$.
A \emph{monoid object} in~$\CC$ is an object~$A\in\CC$ equipped with morphisms
\begin{equation*}
\mu \from A\tens A\to A
\qquad\text{and}\qquad
\eta \from I\to A,
\end{equation*}
called the \emph{multiplication} and the \emph{unit} of the monoid.
These morphisms must satisfy the \emph{associativity law}

\noindent $\mu\comp (\id_{A}\tens\mu)\comp\mu = \mu\comp (\mu\tens\id_{A})$, and the \emph{unit laws}
$\lambda = \mu\comp (\eta\tens\id_{A})$ and~$\rho = \mu\comp (\id_{A}\tens\eta)$.
\end{definition}

A monoid in $(\PSh{\CC}, \Day, \YoP{e})$ over a monoidal category $(\CC, \tens, \tensU)$ is a a presheaf $F \in \PSh{\CC}$ with natural transformations~$\mu: F \Day F \to F$ and~$\eta: \YoP{I} \to F$ satisfying:
\begin{equation*}
  \mu \circ (\mu \Day \id_F) = \mu \circ (\id_F \Day \mu) 
  \quad \text{and} \quad
  \mu \circ (\eta \Day \id_F) = \id_F = \mu \circ (\id_F \Day \eta) 
\end{equation*}





\begin{example}
  \label{ex:partial-memory-monoid}
    The partial memory sheaf $M_{p} \from \op{\caL}\to \Set$, see \cref{ex:partial-finitary-memory}, is a monoid for Day convolution on $(\Sh{\caL}, \cup, \emptyset)$, with monoid product
    \begin{equation*}
      \mu \from M_{p} \Day M_{p} \to M_{p}
      \quad \text{and} \quad
      \eta \from \YoP{\emptyset} \to M_{p}
    \end{equation*}
    given as follows.
    By the Yoneda lemma, $\eta$ corresponds uniquely to an element of $M_{p}\emptyset$, which has to be thus the empty map $\emptyset \to \Val + \set{\bot}$.
    The multiplication is given by first defining a family of maps $m_{U_{1}, U_{2}} \from M_{p}(U_1) \times M_{p}(U_2) \to M_{p}(U)$ for all $U$ and $U_{1} \cup U_{2} = U$ by
    \begin{equation*}
      m_{U_{1}, U_{2}}(\sigma_{1}, \sigma_{2}) =
      \begin{cases}
        \sigma_{1}(x), & \text{if } x \in U_{1} \setminus U_{2} \text{, or } x \in U_{1} \cap U_{2} \text{ and } \sigma_{1}(x) = \sigma_{2}(x) \\
        \sigma_{2}(x), & \text{if } x \in U_{2} \setminus U_{1} \\
        \bot, & \text{otherwise}
      \end{cases}
    \end{equation*}
    and the using the universal property of coproducts to lift this family to
    \begin{equation*}
      \mu_U  \from \coprod_{U_1 \cup U_2 = U} M_{p}(U_1) \times M_{p}(U_2) \to M_{p}(U),
    \end{equation*}
    which essentially implements amalgamation without the compatibility assumption.
    We note that this monoid operation requires the use of partial memory and cannot work for strict memory
\end{example}

Our ultimate aim is to establish monoid operations on predicates on memory and other resources to model separating conjunction.
We will establish in what follows the general picture of how predicates on resources, their logic and finally monoid operations on them can be generally constructed.
Once we have our framework in place, we come back to a detailed discussion of monoid operations on memory predicates in \cref{sec:intepreting-separating-conjunction} and on random variables in \cref{sec:probability}.
     
\section{Sheafeology: Categorical Logic Internal in Categories of Sheaves}
\label{sec:internalcategoriessheaves}

In this section, we introduce the main techniques to carry out our programme of resource-aware logic via sheafeology.
After a review of general internal category theory in \cref{sec:internal-categories}, we come to our main technical results.
The first step is construct an internal category $\Univ$ in the category of $1$-sheaves, which we call the universe and which allows us to internalise small $0$-sheaves as objects in this category.
This universe becomes the internal analogue of $\Set$, thus enabling resource-aware sets and logic.
Indeed, we show in \cref{sec:predicates} that there is a internal category $\Pred$ of predicates over $0$-sheaves and with an internal projection functor $\Pred \to \Univ$.
Finally, we show in \cref{sec:internal-fibration} that this functor is an internal fibration, models propositional logic and posses existential quantification, which we then use in \cref{sec:internal-propositional-logic} to interpret the propositional part of separation logic.

\subsection{Internal Categories}
\label{sec:internal-categories}
We proceed to review the necessary material on \emph{internal categories}~\cite[Ch.~8]{Borceux94} and fix notation. 
We illustrate on several examples that will be of interest in the subsequent.
Throughout this section, fix a category~$\EC$ with pullbacks; we call such a category a \emph{base category}.

A \emph{category internal to}~$\EC$ is a pair of objects~$\IntCat{A}=(A_0,A_1)$ in~$\EC$ equipped with morphisms
$s \colon A_1\to A_0$ and $t\colon A_1\to A_0$, the \emph{source morphism} and \emph{target morphism}, an \emph{identity morphism}
$e\colon A_0\to A_1$, and a \emph{composition morphism}

\noindent $c\colon\pb{A_1}{A_1}{A_0}\to A_1$,
where the pullback~$(\pb{A_1}{A_1}{A_0}, p_{\ell}, p_r)$ is formed along the source and target maps as depicted below:
\begin{equation*}
\begin{tikzcd}[column sep = 20mm]
\pb{A_1}{A_1}{A_0}\arrow[r, "p_r"]\arrow[d, swap, "p_{\ell}"]   &A_1\arrow[d, "t"] \\
A_1\arrow[r, "s"]							       &A_0
\end{tikzcd}
\end{equation*}
This data is subject to the following laws:
\begin{align}
  \label{item:id-law}
  & \text{Typing of identities:} && s\comp e = \id_{A_0} = t\comp e \\
  \label{item:comp-type}
  & \text{Typing of composition:} && s\comp p_{r} = s\comp c \text{ and } t\comp p_{\ell} = t\comp c \\
  \label{item:comp-unit}
  & \text{Unit laws:} && c\comp\pair{\id_{A_1}, e\comp s} = \id_{A_1} = c\comp\pair{e\comp t, \id_{A_1}} \\
  \label{item:comp-assoc}
  & \text{Associativity:} && c\comp(\pb{\id_{A_1}}{c}{A_0}) = c\comp(\pb{c}{\id_{A_1}}{A_0})
\end{align}


We see that a category~$\IntCat{A}$ internal to~$\Set$ is a small category by viewing~$A_0$ as the \emph{set of objects} and~$A_1$ as the \emph{set of morphisms}. 
In particular,~$A_1$ is the coproduct $\coprod\IntCat{A}(a, b)$ over all~$a,b\in A_0$.
The morphisms~$s,t\colon A_1\to A_0$ then provide the assignment of a source and target to each morphism, where we view~$f\in A_1$ as a morphism $s(f)\to t(f)$.
The pullback~$\pb{A_1}{A_1}{A_0}$ is isomorphic to
\begin{equation*}
\set{(g,f)\in A_1\times A_1\setsep s(g) = t(f)} \, ,
\end{equation*}
which consists of all \emph{composable pairs} of morphisms in~$A_1$.
The map $c\colon\pb{A_1}{A_1}{A_0}\to A_1$ is then the composition of~$\IntCat{A}$.
We typically write~$gf$ instead of~$c(g,f)$ for~$(g,f)\in\pb{A_1}{A_1}{A_0}$ if confusion is unlikely.
Axiom~(\ref{item:id-law}) states that the identity~$e(a)$ for~$a\in A_0$ is an endomorphism~$a\to a$,
and axiom~(\ref{item:comp-type}) asserts that the composition~$gf$ has type~$s(f)\to t(g)$.
The remaining axioms~(\ref{item:comp-unit}) and~(\ref{item:comp-assoc}) express that~$e$ is a left and right unit of composition, and that composition is associative.
Thus, internal categories generalise the concept of a small category to one relative to the category~$\EC$.

We next describe a suitable concept of morphism between internal categories.
\begin{definition}
\label{def:internal-functor}
Let~$\IntCat{A},\IntCat{B}$ be categories internal to~$\EC$. 
An \emph{internal functor} from~$\IntCat{A}$ to $\IntCat{B}$ is a pair of morphisms

\noindent $(f_0\colon A_0\to B_0, f_1\colon A_1\to B_1)$ such that the following diagram commutes:
\begin{equation*} 
\begin{tikzcd}[column sep = 20mm]
A_0 \arrow[r, "e" description]
	\arrow[d, swap, "f_0"] 
& A_1 \arrow[l, "t", shift left=2]
	  \arrow[l, "s"', shift right=2]
	  \arrow[d, "f_1"] 
&\pb{A_1}{A_1}{A_0} \arrow[l, "c_A"']
	 		  	  \arrow[d, "\pb{s}{t}{f_1}"] 
	   \\
 B_0 \arrow[r, "e" description]                  
&B_1 \arrow[l, "s"', shift right=2] 
			 \arrow[l, "t", shift left=2]                  
&\pb{B_1}{B_1}{B_0} \arrow[l, "c_B"]                  
\end{tikzcd}
\end{equation*}
\end{definition}

The last ingredient to internal category theory is the notion of internal natural transformations.
\begin{definition}
  Let $\IntCat{A}, \IntCat{B}$ be categories internal to $\Int{\EC}$. Given internal functors
  \begin{equation*}
    (f_0, f_1),\ (g_0, g_1) \colon (A_0, A_1) \to (B_0, B_1),
  \end{equation*}
  an \emph{internal natural transformation} from $(f_0, f_1)$ to $(g_0, g_1)$ is a morphism $\alpha\colon A_0 \to B_1$ in $\EC$ such that the following diagrams commute:
\begin{itemize}
\item 
\textbf{Component assignment:} The morphism $\alpha$ assigns to each object of $\IntCat{A}$ a morphism in $\IntCat{B}$ from $f_0$ to $g_0$, i.e.,~$s \circ \alpha = f_0$ and~$t \circ \alpha = g_0.$
This is expressed in the diagram:
    \begin{equation*}
      \begin{tikzcd}
        & A_0 \arrow[dl, "f_0"'] \arrow[dr, "g_0"] \arrow[d, "\alpha"] & \\
        B_0 & B_1 \arrow[l, "s"] \arrow[r, "t"'] & B_0
      \end{tikzcd}
    \end{equation*}

    \item \textbf{Internal naturality condition:} The following square must commute:
    \begin{equation*}
      \begin{tikzcd}[column sep=large, row sep=large]
        A_1 \arrow[r, "\pair{\alpha \comp s,\, g_1}"] \arrow[d, "\pair{f_1,\, \alpha \comp t}"'] & B_1 \times_{B_0} B_1 \arrow[d, "c"] \\
        B_1 \times_{B_0} B_1 \arrow[r, "c"'] & B_1
      \end{tikzcd}
    \end{equation*}
    where~$c\colon B_1 \times_{B_0} B_1 \to B_1$ is the internal composition in $\IntCat{B}$,
    and the pairings are defined by using the universal property of pullbacks as in the following diagrams.
    \begin{equation*}
      \begin{tikzcd}[column sep=large]
        A_1 \arrow[ddr, bend right=25, "g_1"'] \arrow[dr, dashed, "\pair{\alpha \comp s,\, g_1}"{description}] \arrow[r, "s"] & A_0 \arrow[dr, bend left=10, "\alpha"] \\
        & B_1 \times_{B_0} B_1 \arrow[r, "p_r"] \arrow[d, "p_\ell"'] & B_1 \arrow[d, "t"] \\
        & B_1 \arrow[r, "s"] & B_0
      \end{tikzcd}
      \qquad
      \begin{tikzcd}[column sep=large]
        A_1 \arrow[drr, bend left=15, "f_1", below] \arrow[dr, dashed, "\pair{f_1,\, \alpha \comp t}"{description}] \arrow[d, "t", below]\\
        A_ 0 \arrow[dr, bend right=10, "\alpha"] & B_1 \times_{B_0} B_1 \arrow[r, "p_r"] \arrow[d, "p_\ell"'] & B_1 \arrow[d, "t"] \\
        & B_1 \arrow[r, "s"] & B_0
      \end{tikzcd}
    \end{equation*}
\end{itemize}
\end{definition}

Vertical composition of internal natural transformations results in an internal natural transformation, 
leading to a 2-category~\cite{KS74}
$\Int{\EC}$
of categories internal to~$\EC$ as 0-cells, internal functors as 1-cells, and internal natural transformations~\cite{Miranda20}.
\subsection{The Universe Sheaf}
\label{sec:internal-sheaves}
Given a site $(\CC, J)$, we construct a sheaf~$\Univ\in\Sh[1]{\CC, J}$ such that every sheaf in~$\Sh[0]{\CC, J}$ can be internalised as a global element of~$\Univ$, that is, a natural transformation~$\TSheaf\to\Univ$ where~$\TSheaf$ is the terminal sheaf.
Intuitively, $\Univ$ will be our internal universe for resources.

The object part of the universe sheaf is defined by
\begin{equation}
\label{eq:univ-object-def}
\Univ(A) = \Sh[0]{\slice{\CC}{A}, \SliceCoverage{A}} \, 
\end{equation}
where $\SliceCoverage{A}$ is the coverage induced by $J$ on the slice category $\slice{\CC}{A}$ according to the following lemma.

\begin{lemmaE}[][category=internal-categories-sheaves]
\label{lem:slice-coverage}
Let $\Site$ be a Grothendieck site and $A \in \CC$ an object.
The mapping $\SliceCoverage{A}$ given by
\begin{equation*}
\SliceCoverage{A}(p) = \setDef{F \subseteq \Yo_{p}}{\sliceDom[A]F \in J(\sliceDom[A]p)} 
\end{equation*}
is a Grothendieck coverage and thus $\SliceSite{A}$ is a Grothendieck site.
Moreover, if $F(q) \subseteq \slice{\CC}{A}(q, p)$ is a subset of morphism, such that $\sliceDom[A]F \in J(A)$, then $F \in \SliceCoverage{A}(p)$.
\end{lemmaE}
	\begin{proofE}
First, we check pullback stability.
Let $p \from B \to A$, $F \in \SliceCoverage{A}(p)$ with $\sliceDom[A]F = F' \in J(B)$ and $g \from q \to p$.
We obtain a pullback cover $\pbCover{(\sliceDom[A]g)}{F'} \in J(C)$, where $q \from C \to A$.
For $u \from D \to C$ in $\pbCover{(\sliceDom[A]g)}{F'}$ with $g \comp u = f \comp v$ for some $f \from r \to p$ and $v \from C \to \dom f$, we have that
\begin{equation*}
r \comp v = q \comp u
\end{equation*}
since the following diagram commutes.
\begin{equation*}
\begin{tikzcd}
	D & {\dom f} \\
	C & B \\
	& A
	\arrow["g", from=2-1, to=2-2]
	\arrow["q"', from=2-1, to=3-2]
	\arrow["p"', from=2-2, to=3-2]
	\arrow["f"', from=1-2, to=2-2]
	\arrow["r", bend left=35, from=1-2, to=3-2]
	\arrow["v", from=1-1, to=1-2]
	\arrow["u"', from=1-1, to=2-1]
\end{tikzcd} 
\end{equation*}
Therefore, the following diagram is well-typed and commutes in $\slice{\CC}{A}$.
\begin{equation*}
\begin{tikzcd}
qu = rv \rar{u} \dar[swap]{v}
      & q \dar{g}
      \\
      r \rar{f}
      & p
\end{tikzcd}
\end{equation*}
Since this holds for all $u$, we have $\sliceDom[A](\pbCover{g}F) = \pbCover{(\sliceDom[A]g)}F'$ and thus $\pbCover{g}F \in \SliceCoverage{A}(q)$.

The maximal sieve $\Yo_{p}$ on $p \from B \to A$ is given by $\pbCover{p}\Yo_{A}$, which is in $\SliceCoverage{A}(p)$ since $\Yo_{A} \in J(A)$.

Next, we prove transitivity.
Let $F \in \SliceCoverage{A}(p)$ and $G \subseteq \Yo_{p}$, such that $\pbCover{h}G \in \SliceCoverage{A}(q)$ for all $h \from q \to p$.
We want to show that $G \in \SliceCoverage{A}(p)$,~i.e.~$\sliceDom[A]G \in J(A)$.
For all $h \from C \to B$, we have $h \from ph \to p$ in $\slice{\CC}{A}$.
Thus, by assumption, $\pbCover{h}G \in \SliceCoverage{A}(ph)$ and by definition $\sliceDom{A}(\pbCover{h}G) \in J(C)$.
It follows from transitivity of $J$ that~$\sliceDom[A]G \in J(A)$ and thus $G \in \SliceCoverage{A}(p)$ by definition.

  Finally, if $F(q) \subseteq \slice{\CC}{A}(q, p)$ is merely a subset of morphism with $\sliceDom[A]F \in J(A)$, then $F$ is actually a sieve because $\sliceDom[A]F$ is and because $\sliceDom[A]$ is a functor:
  Let $f \in F(q)$ and $g \from r \to q$.
  Then $\sliceDom[A](f \comp g) = \sliceDom[A]f \comp \sliceDom[A]g \in \sliceDom[A]F$ and thus $F \in \SliceCoverage{A}(p)$.
\end{proofE}

Let us briefly establish some structure of slice categories since those will feature in our construction of our universe sheaf. 
The \emph{codomain functor}~$\cod \from \ArrC{\CC} \to \CC$ on the arrow category~$\ArrC{\CC}$ is a strict opfibration with fibres $\slice{\CC}{A}$ presented by the slice functor $\slice{\CC}{-} \from \CC \to \CatC_{0}$~\cite{Jacobs99}.
The codomain functor has a right adjoint~$T \from \CC \to \ArrC{\CC}$ defined on objects by $TA = \id_{A}$ which fulfils $\cod \comp T = \id_{\CC}$.
Consequently,~$TA$ is terminal in $\slice{\CC}{A}$ so we obtain a unique morphism $!_{p} \from p \to TA$.

\begin{lemmaE}[][category=internal-categories-sheaves]
\label{lem:geometric-slice-functor}
Let $\Site$ be a site and $f \from A \to B$ a morphism in $\CC$.
The functor $\slice{\CC}{f} \from \slice{\CC}{A} \to \slice{\CC}{B}$ induces a geometric morphism $\overline{\slice{\CC}{f}} \from \Sh[k]{\SliceSite*{A}} \to \Sh[k]{\SliceSite*{B}}$, where the left adjoint $\bparens{\overline{\slice{\CC}{f}}}^{\ast}$ is given by $\bparens{\overline{\slice{\CC}{f}}}^{\ast}(S) = S \comp \op{(\slice{\CC}{f})}$.
\end{lemmaE}
\begin{proofE}
Let $\Site$ be a site and $f \from A \to B$ a morphism in $\CC$.
Given a sheaf $S \in \Sh[k]{\SliceSite*{B}}$, we first show that $S \comp \op{(\slice{\CC}{f})}$ is a sheaf.
Suppose that we are given $p \in \slice{\CC}{A}$, $G \in \SliceCoverage{A}(p)$ and a compatible family $\set{x_{g} \in S((\slice{\CC}{f})(\dom g))}_{g \in G}$.
By definition, we have $\slice{\CC}{f}(p) = f \comp p$, thus also $G \in \SliceCoverage{B}(f \comp p)$ and $x_{g} \in S(f \comp (\dom g))$.
Compatibility then ensures that there is a unique amalgamation $x \in S(f \comp p) = S((\slice{\CC}{f})p)$ with $S(g)(x) = x_{g}$.
Therefore, $S \comp \op{(\slice{\CC}{f})}$ is a sheaf.

Next, we show that $\slice{\CC}{f}$ has the covering lifting property.
This means that we need to turn a cover $G \in \SliceCoverage{B}((\slice{\CC}{f})p)$ for $p \from C \to A$ into a cover $G' \in \SliceCoverage{A}(p)$ with $\img{(\slice{\CC}{f})}G' = G$.
Every $g \in G$ fits into the following diagram, which commutes since $g \from {\dom g} \to (\slice{\CC}{f})p$ and by definition of $\slice{\CC}{f}$.
\begin{equation*}
\begin{tikzcd}
      D \ar[rr, "g"] \ar[ddr, bend right, "\dom g"{swap}] \drar[swap]{pg}
      & & C \ar[dl, "p"] \ar[ddl, bend left, "(\slice{\CC}{f})p"] \\
      & A \dar{f} & \\
      & B &
\end{tikzcd}
\end{equation*}
Since $pg$ is in $\slice{\CC}{A}$, we may define
\begin{equation*}
    G' = \setDef{g \from pg \to p}{g \from \dom g \to (\slice{\CC}{f})p \in G} \, ,
\end{equation*}
which fulfils $\img{(\slice{\CC}{f})}G' = G$.
By definition of slice coverages, we get $G' \in \SliceCoverage{A}(p)$ and thus $\slice{\CC}{f}$ has the covering lifting property. Thus, we may apply \cite[Thm. VII.10.5]{MM94:SheavesGeometryLogic} to obtain that $\bparens{\overline{\slice{\CC}{f}}}^{\ast}$ is part of a geometric morphism, as desired.
\end{proofE}
With these preliminary results set up, we can prove that $\Univ$ is a sheaf.
\begin{theoremE}[][category=internal-categories-sheaves]
\label{thm:univ-sheaf}
The mapping $\Univ$, as defined in \cref{eq:univ-object-def}, extends to a sheaf $\Univ \in \Sh[1]{\CC, J}$.
\end{theoremE}
\begin{proofE}
To make $\Univ$ a presheaf $\op{\CC} \to \Set_{1}$ we define  
\begin{equation*}
\Univ(f)(S) = S \comp \op{(\slice{\CC}{f})} 
\end{equation*}
for all~$f \from A \to B$ in $\CC$.
By \cref{lem:geometric-slice-functor}, we have that $\Univ(f) \from \Univ(B) \to \Univ(A)$.
Moreover, $\Univ$ is a functor because $\slice{\CC}{-}$ is.
It remains to be seen that it is also a sheaf.

To this end, let $F \in J(A)$ be a cover of $A$ and $\set{S_{f} \in \Univ(\coverDom[A]{f})}_{f \in F}$ be a compatible family of sheaves in $\Sh[0]{\SliceSite*{\coverDom[A]{f}}}$.
Compatibility amounts to saying that for all $g \from B \to \coverDom[A]{f_{1}}$ and $h \from B \to \coverDom[A]{f_{2}}$ with $f_{1}g = f_{2}h$ it holds that 
\begin{equation*}	
S_{f_{1}} \comp \op{\slice{\CC}{g}} = S_{f_{2}} \comp \op{\slice{\CC}{h}}.
\end{equation*}
In what follows, we show that there is a sheaf $S \in \Univ(A) = \Sh[0]{\SliceSite*{A}}$ with $S \comp \op{\slice{\CC}{f}} = S_{f}$.

We begin by defining the category $\DC$ of factorisations of the cover as full subcategory of the arrow category $\ArrC{\CC}$, where the set of objects of $\CC$ is just $F$ and the morphisms are given as follows.
\begin{equation*}
\DC(f_{1}, f_{2}) = \setDef{g \in F}{(g, \id_{A}) \in \ArrC{\CC}(f_{1}, f_{2})}
\end{equation*}
Since $\CC$ is small, we can form the diagram $I \from \DC \to \CatC_{0}$ by taking $I(f) = \slice{\CC}{\coverDom{f}}$ and $Ig = \slice{\CC}{g}$, that is, $I$ is the restriction of the slice functor $\slice{\CC}{-} \from \CC \to \CatC_{0}$.
Forming the colimit of $I$ in $\CatC_{0}$, we obtain a small category $\KC = \colim I$.
  As the opposite category functor $\opF \from \CatC_{0} \to \CatC_{0}$ is an involution and thus its strictly left-adjoint to itself, we obtain that $\op{\KC} = \colim (\opF \comp I)$.

For $g \in \DC(f_{1}, f_{2})$, we have
\begin{equation*}
    \slice{\CC}{f_{2}} \comp \slice{\CC}{g} = \slice{\CC}{f_{2}g} = \slice{\CC}{f_{1}}
\end{equation*}
and thus there exists a unique functor $\tilde{F}$ that renders the following diagram commutative for all $f$.
\begin{equation*}
    \begin{tikzcd}
      \KC \rar[dashed]{\tilde{F}}
      & \slice{\CC}{A} \\
      \slice{\CC}{\coverDom[A]{f}} \uar{\kappa_{f}} \urar[swap]{\slice{\CC}{f}}
    \end{tikzcd}
\end{equation*}

By compatibility, we have for all $g \in \DC(f_{1}, f_{2})$ that $S_{f_{2}} \comp \op{(\slice{\CC}{g})} = S_{f_{1}} \comp \op{(\slice{\CC}{\id})} = S_{f_{1}}$.
Hence there is a unique functor $\tilde{S} \from \op{\KC} \to \Set[0]$, such that the following diagram commutes for all $f$.
\begin{equation*}
    \begin{tikzcd}
      \colim(\opF \comp D) = \op{\KC} \rar[dashed]{\tilde{S}}
      & \Set[0] \\
      \op{(\slice{\CC}{\coverDom[A]{f}})} \uar{\op{\kappa_{f}}} \urar[swap]{S_{f}}
    \end{tikzcd}
\end{equation*}

Since $\Set[0]$ is complete and $\slice{\CC}{A}$ is small, there is a right Kan extension of $\tilde{S}$ along $\op{\tilde{F}}$ as in the following diagram.
\begin{equation*}
    \begin{tikzcd}
      \op{\KC}
      \rar{\op{\tilde{F}}}
      \ar[d, "\tilde{S}"{swap}, ""{name=S, pos=0.4, anchor=center, inner sep=0}]
      & \slice{\CC}{A}
      \ar[dl, "R", ""{name=R, pos=0.35, anchor=center, inner sep=0}]
      \\
      \Set[0]
      \arrow["\rho"{swap}, shorten <=5pt, shorten >=3pt, Rightarrow, from=R, to=S]
    \end{tikzcd}
\end{equation*}
Whiskering with~$\kappa_{f} \from \slice{\CC}{\coverDom[A]{f}} \to \KC$ yields a natural transformation $\rho^{f}$ as follows.
\begin{equation*}
\rho^{f} = \rho \op{\kappa_{f}} \from R \comp \slice{\CC}{f} 
	    = R \comp \op{\tilde{F}} \comp \op{\kappa_{f}} \longrightarrow \tilde{S} \comp \op{\kappa_{f}} 
	    = S_{f}
\end{equation*}

The functor $R$ is a good candidate for the amalgamation of the sheaves $S_{f}$.
Unfortunately we only have the projections $\rho^{f}$, but $S$ does not fulfill the strict equality 
$R \comp \Univ{f} = R \comp \slice{\CC}{f} = S_{f}$.
However, this can be rectified.
We define a functor $S \from \op{\slice{\CC}{A}} \to \Set[0]$ as follows.
On objects we set
\begin{equation*}
S(p) =
    \begin{cases}
      S_{f}(p'), & p = f p' \\
      R(p), & \text{otherwise}
    \end{cases}
    \, ,
\end{equation*}
which is well-defined because if $p = f_{2}p'' = f_{1}p'$, then $S_{f_{1}}(p') = S_{f_{2}}(p'')$ by compatibility.
We define $S$ on a morphism $g \from q \to p$ by
\begin{equation*}
S(g) =
    \begin{cases}
      S_{f}(g), & p = fp' \\
      \rho^{f}_{q'} \comp R(g), & q = fq' \text{ and } p \text{ cannot be factorised} \\
      R(g), & \text{otherwise}
    \end{cases}
\end{equation*}
Note that the case in which both $p$ and $q$ factorise is subsumed by the first clause:
$p = fp'$ implies that $q = pg = fp'g$.
In the second case we use that
\begin{equation*}
    S(p) = R(p) \xrightarrow{R(g)} R(q) = R(fq') = (R \comp \slice{\CC}{f})(q') \xrightarrow{\rho^{f}_{q'}} S_{f}(q') = S(q)
\end{equation*}
Functoriality of $S$ is clear, except in the second case, where one uses naturality of $\rho$.\todo{Give details}
We also obtain a natural transformation $\pi \from S \comp \op{F} \to \tilde{S}$, which makes $S$ a right Kan extension.\todo{Give details}
By definition, we have $\Univ(f)(S) = S \comp \op{\slice{\CC}{f}} = S_{f}$ and, by the universal property of Kan extension, $S$ is unique with this property.

It remains to prove that $S$ is a sheaf in $\Sh[0]{\SliceSite*{A}}$.
Given a cover $G = \set{g \from p_{g} \to p} \in \SliceCoverage{A}(p)$ of $p \in \slice{\CC}{A}$ and a compatible family $x_{g} \in S(p_{g})$, we need to construct $x \in S(p)$ with $S(g)(x) = x_{g}$.
By the Yoneda lemma, $S(p) \cong \PSh[0]{\CC}(\YoP{p}, S)$ and we can instead construct a natural transformation $\alpha \from \YoP{p} \to S$.
Since $S$ is a right Kan extension, such a natural transformation is uniquely given by a natural transformation $\beta \from \YoP{p} \comp \op{\tilde{F}} \to \tilde{S}$.
Since $KC$ is a colimit, such a $\beta$ then arises uniquely from a family of natural transformations $\beta^{f} \from \YoP{p} \comp \op{\slice{\CC}{f}} \to S_{f}$ with $\beta^{f} \slice{\CC}{g} = \beta^{fg}$ for all $g$ in $\DC$.
Finally, for $q \in \slice{\CC}{A_{f}}$ and $h \from f \comp q \to p$ in $\slice{\CC}{A}$, we construct $\beta^{f}_{q}(h) \in S_{f}(q)$ using that $S_{f}$ is a sheaf.
To this end, we define for $u \in \pbCover{h}G$ in the pullback cover with $hu = gv$ or some $g \in G$ and $v \from \dom u \to p_{g}$ elements $y^{q,h}_{u} \in S_{f}(qu)$ by $y^{q,h}_{u} = (\pi \kappa_{f})_{qu}(S(v)(x_{g})$.
One can show that $y^{q,h}_{u}$ is independent of the choice of $v$ by using compatibility of $x_{f}$ and that these are compatible elements for $\pbCover{h}G$.\todo{Give details}
This yields uniquely $\beta^{f}_{q}(h)$ as their amalgamation and thus makes $\beta^{f}_{q}$ immediately a map.
Naturality of $\beta^{f}$ follows from uniqueness of amalgamations and thus we obtain $\alpha \from \YoP{p} \to S$ as discussed above.\todo{Give details}
By the Yoneda lemma, our candidate for the amalgamation of $x_{f}$ is given by $x = \alpha_{p}(\id_{p})$.
That this yields indeed a unique amalgamation follows by applying the Yoneda lemma to each $x_{f}$ and the universal property $S$ as Kan extension.\todo{Give details}

With this, $S$ is a sheaf and we have shown that $\Univ$ is a well-defined sheaf.
\end{proofE}

We can then internalise small sheaves into this universe.
\begin{theoremE}[][category=internal-categories-sheaves]
\label{thm:sheaf-internalisation}
There is an isomorphism $i \from \Sh[0]{\CC, J} \to \Sh[1]{\CC, J}(\TSheaf, \Univ)$, given as follows.
\begin{equation*}
  i(S)_{A}(\ast) = S \comp \op{\bparens{\sliceDom[A]}}
\end{equation*}
\end{theoremE}
\begin{proofE}
We define maps $i \from \Sh[0]{\CC, J} \to \Sh[1]{\CC, J}(\TSheaf, \Univ)$ and its inverse $j \from \Sh[1]{\CC, J}(\TSheaf, \Univ) \to \Sh[0]{\CC, J}$ by
\begin{equation*}
  i(S)_{A}(\ast) = S \comp \op{(\sliceDom[A])}
    \quad \text{and} \quad
    \begin{aligned}
      j(\gamma) A & = \gamma_{A}(\ast)(\id_{A}) \\
      j(\gamma)(f \from A \to B) & = \gamma_{B}(\ast)(!_{f})
    \end{aligned} \, ,
\end{equation*}
where $!_{f} \from f \to \id_{B}$ is $f$ regarded as the unique morphism $f \to \id_{B}$ in $\slice{\CC}{B}$ with $\id_{B} \comp f = f$.
That $j(\gamma)$ is well-defined on morphisms follows because, when viewing $f$ as object in $\slice{\CC}{B}$, we have
  \begin{align*}
    \gamma_{B}(\ast)(f)
    & = (\gamma_{B}(\ast) \comp \op{(\slice{\CC}{f})})(\id_{A})
      \tag*{by def. $\slice{\CC}{f}$} \\
    & = (\Univ{f} (\gamma_{B}(\ast))(\id_{A})
      \tag*{by def. $\Univ$} \\
    & = \gamma_{A}(\ast)(\id_{A})
      \tag*{by naturality of $\gamma$} \\
    & = j(\gamma) A
      \tag*{by def.}
  \end{align*}
and thus $j(\gamma)(f) \from j(\gamma)(B) \to j(\gamma)(A)$.
It remains to show that $j(\gamma)$ is a sheaf.

  Let $\set{f \from A_{f} \to A} \in J(A)$ be a cover and $x_{f} \in j(\gamma)(A_{f})$ a compatible family.
  Note that the cover can be seen as a cover $!_{f} \from f \to \id_{A}$ of $\id_{A} \in \slice{\CC}{A}$.
  We will use that $\gamma_{A}(\ast) \from \slice{\CC}{A} \to \Set[0]$ is a sheaf as follows.
  First, we note that $f \in \slice{\CC}{A}$ and thus $x_{f} \in \gamma_{A}(\ast)(f)$.
  In order to check compatibility, suppose that $h \from p \to f_{1}$ and $g \from p \to f_{2}$ are morphisms in $\slice{\CC}{A}$ with $f_{1}h = f_{2}g$.
  This means in that for $!_{g} \from g \to \id_{A}$, we have $\slice{\CC}{f_{2}}(!_{g}) = g \from p \to f_{2}$.
  Similarly, $\slice{\CC}{f_{1}}(!_{h}) = h \from p \to f_{1}$.
  \begin{align*}
    \gamma_{A}(\ast)(h)(x_{f_{1}})
    & = \gamma_{A}(\ast)(\slice{\CC}{f_{1}}(!_{h}))(x_{f_{1}}) \tag*{see above} \\
    & = (\Univ(f_{1}) \comp \gamma_{A})(\ast)(!_{h})(x_{f_{1}}) \tag*{def. $\Univ$} \\
    & = \gamma_{A_{f_{1}}}(\ast)(!_{h})(x_{f_{1}}) \tag*{by naturality} \\
    & = j(\gamma)(h)(x_{f_{1}}) \tag*{def. j} \\
    & = j(\gamma)(g)(x_{f_{2}}) \tag*{compatibility wrt. $j(\gamma)$} \\
    & = \gamma_{A}(\ast)(g)(x_{f_{2}}) \tag*{symmetric steps for $g$}
  \end{align*}
  Hence, the family $x_{f}$ is compatible for $\gamma_{A}(\ast)$ and we get a unique $x \in \gamma_{A}(ast)$ with $\gamma_{A}(\ast)(!_{f})(x) = x_{f}$.
  Since $j(\gamma)(f) = \gamma_{A}(\ast)(!_{f})$, this is equivalent to $x$ being unique with $j(\gamma)(f)(x) = x_{f}$ and $x$ is the unique amalgamation for $j(\gamma)(A)$.
  Thus, $j(\gamma)$ is a sheaf.

We next check that also $i$ is well-defined.
Note that $i(S)_{A}(\ast) = \op{(\slice{\CC}{A})} \xrightarrow{\op{(\sliceDom[A])}} \op{\CC} \xrightarrow{S} \Set[0]$.
Given a cover $G \in \SliceCoverage{A}(p)$, we get a cover $\sliceDom[A]G \in J(a)$ and thus $i(S)_{A}(\ast)$ is a sheaf because $S$ is.
Hence, $i(S)_{A}$ is well-defined.

Moreover, $i(S)$ is natural, since we have for all $f \from A \to B$ and by unfolding the definitions that
  \begin{equation*}
    (\Univ(f) \comp i(S)_{B})(\ast)
    = S \comp \op{(\sliceDom[B])} \comp \op{(\slice{\CC}{f})} = S \comp \op{(\sliceDom[A])} = i(S)_{A}(\ast) \, .
  \end{equation*}

  It remains to prove that $i$ and $j$ are inverses.
  In one direction, we have for all $A \in \CC$ and $f \from A \to B$ that
  \begin{equation*}
    j(i(S)) A
    = i(S)_{A}(\ast)(\id_{A})
    = (S \comp \op{(\sliceDom[A])})(\id_{A})
    = S(A)
  \end{equation*}
  and
  \begin{equation*}
    j(i(S)) f
    = i(S)_{B}(\ast)(!_{f})
    = (S \comp \op{(\sliceDom[B])})(!_{f})
    = S(f) \, ,
  \end{equation*}
  and thus $j \comp i = \id$.

  Finally, we show $i \comp j = \id$.
  On objects $p \from B \to A$ in $\slice{\CC}{A}$, we have that
  \begin{align*}
    i(j(\gamma))_{A}(\ast)(p)
    & = j(\gamma)(\op{(\sliceDom[A])}(p)) \tag*{def. $i$} \\
    & = j(\gamma)(B) \tag*{def. $\sliceDom[A]$} \\
    & = \gamma_{B}(\ast)(\id_{B}) \tag*{def $j$} \\
    & = (\Univ(p) \comp \gamma_{A})(\ast)(\id_{B}) \tag*{naturality $\gamma$} \\
    & = (\gamma_{A}(\ast) \comp \op{\slice{\CC}{p}})(\id_{B}) \tag*{def. $\Univ$} \\
    & = \gamma_{A}(\ast)(p) \tag*{def. slice functor}
  \end{align*}
  and on morphisms $f \from p \to q$ we have
  \begin{align*}
    i(j(\gamma))_{A}(\ast)(f)
    & = j(\gamma)(\op{(\sliceDom[A])}(f)) \tag*{def. $i$} \\
    & = j(\gamma)(f) \tag*{def. $\sliceDom[A]$} \\
    & = \gamma_{\dom q}(\ast)(!_{f}) \tag*{def. $j$} \\
    & = (\Univ(q) \comp \gamma_{A})(\ast)(!_{f}) \tag*{naturality $\gamma$} \\
    & = \gamma_{A}(\ast)(\slice{\CC}{q}(!_{f})) \tag*{def. $\Univ$} \\
    & = \gamma_{A}(\ast)(f) \tag*{def. slice functor}
  \end{align*}
Thus, $i \comp j = \id$ and we obtain the desired isomorphism.
\end{proofE}

Next, we complete $\Univ_0$ to an internal category.
For this, we first construct a sheaf $\Univ_1$, so that morphisms of small sheaves can be presented as maps ~$\TSheaf\to \Univ_1$.
The object part of ~$\Univ_1$ is given by
\begin{equation}
  \label{eq:morphism-universe}
  \Univ_1(A) = \coprod_{F, G \in \Univ_0} G^F
\end{equation}


\begin{theoremE}[][category=internal-categories-sheaves]
\label{thm:morphism-universe}
The mapping $\Univ_1$, as defined in~\cref{eq:morphism-universe}, extends to a sheaf ~$\Univ_1\in \Int{\Sh{\CC, J}}$.
\end{theoremE}
\begin{proofE}
Given a morphism ~$g\colon B\to A$ in ~$\CC$, let $\bparens{\overline{\slice{\CC}{g}}}^{\ast}$ denote the left-adjoint of the geometric morphism $\overline{\slice{\CC}{g}} \from \Sh[0]{\SliceSite*{B}} \to \Sh[0]{\SliceSite*{A}}$.

Now, for all ~$(F \in \Univ_0(A), G\in \Univ_0(A), \alpha \in \mor{\Sh[0]{\SliceSite*{A}}}(F, G))$, we define
\begin{equation*}
  \Univ_1(g)(F, G, \alpha) := (\Univ_0(g)(F), \Univ_0(g)(G), \bparens{\overline{\slice{\CC}{g}}}^{\ast}(\alpha))
\end{equation*}

Functoriality holds by functoriality of ~$\bparens{\overline{\slice{\CC}{g}}}^{\ast}$. 
Similarly, ~$\Univ_1(g)$ is well defined, as, again by functoriality of $\bparens{\overline{\slice{\CC}{g}}}^{\ast}$, $\bparens{\overline{\slice{\CC}{g}}}^{\ast}\alpha \in\mor{\Sh[0]{\SliceSite*{B}}}(F, G))$.

It remains to show that $\Univ_1$ satisfies the sheaf conditions for all ~$A\in \CC$ and all covers of~$A$.
Let~$S = \{f\colon \sliceDom{f}\to A\} \in J(A)$ be a cover of ~$A$ and ~$M = \{(F_{\sliceDom{f}}, G_{\sliceDom{f}}, \alpha_{\sliceDom{f}}) \in \Univ_1(\sliceDom{f})\}$ be a matching family for~$S$.

From this, we extract matching families and corresponding unique amalgamations (by virtue of ~$\Univ_0$ being a sheaf)
\begin{itemize}
  \item $\{ F_{\sliceDom{f}} \in \Univ_0(\sliceDom{f})\}$, $F \in \Univ_0(A)$
  \item $\{ G_{\sliceDom{f}} \in \Univ_0(\sliceDom{f})\}$, $G \in \Univ_0(A)$
\end{itemize}
and a collection of morphisms
\begin{itemize}
  \item $\{\alpha_{\sliceDom{f}}\colon F_{\sliceDom{f}}\to G_{\sliceDom{f}}\}$
\end{itemize}

Note that for any ~$F, G\in \Univ_0(A)$, the internal hom sheaf ~$[F, G] \in \Sh[0]{\SliceSite*{A}}$ is defined as
\begin{equation*}
  [F, G](g\colon B\to A) := \mor{\Sh[0]{\SliceSite*{A}}}(F\circ \op{\slice{\CC}{g}}, G\circ \op{\slice{\CC}{g}})
\end{equation*}

Since, for amalgamations~$F, G$, we hav that
\begin{align*}
  F = \Univ_0(f)(F_{\sliceDom{f}}) = F\circ \op{\slice{\CC}{f}} = F_{\sliceDom{f}}\\
  G = \Univ_0(f)(G_{\sliceDom{f}}) = G\circ \op{\slice{\CC}{f}} = G_{\sliceDom{f}}
\end{align*}
it follows that
\begin{equation*}
  \alpha_{\sliceDom{f}}\colon F_{\sliceDom{f}}\to G_{\sliceDom{f}} \in \mor{\Sh[0]{\SliceSite*{\sliceDom{f}}}}(F\circ \op{\slice{\CC}{f}}, G\circ \op{\slice{\CC}{f}}).
\end{equation*}
That is,~$ \alpha_{\sliceDom{f}} \in [F, G](f)$.

Moreover, the cover ~$S \in J(A)$ is also a cover for $\id_A$, and the compatibility condition of $M$ ensuring that
\begin{equation*}
  \bparens{\overline{\slice{\CC}{f_i}}}^{\ast}(\alpha_{\sliceDom{f_i}}) = \bparens{\overline{\slice{\CC}{f_j}}}^{\ast}(\alpha_{\sliceDom{f_j}})
\end{equation*}
for ~$f_i, f_j$ in $S$, shows that ~$\{\alpha_{\sliceDom{f}}\}$ is a matching family for $S$.

Therefore, because~$[F, G]$ is a sheaf, there exists a unique amalgamation
\begin{equation*}
  \alpha\colon F\to G \in [F, G](\id_A)
\end{equation*}
restricting to all~$\alpha_{\sliceDom{f}}$ in $\{\alpha_{\sliceDom{f}}\}$.

We let~$(F, G, \alpha)$ be the unique amalgamation for $M$, showing that~$\Univ_1$ satisfies the sheaf condition for all $A$ in~$\CC$ and all covers of~$A$, and thereby concluding the proof.
\end{proofE}

Having defined the sheaves $\Univ_0$ and $\Univ_1$, representing small sheaves and their morphisms, we organise this data into an internal category in ~$\Sh{\CC, J}$.

\begin{theoremE}[][category=internal-categories-sheaves]
  Given the sheaves ~$\Univ_0$ and ~$\Univ_1$ as defined in~\cref{thm:univ-sheaf,thm:morphism-universe}, together with the maps ~$s^U, t^U, e^U, c^U$ defined for all $A$ in $\CC$ by lifting the structure maps of ~$\Sh[0]{\SliceSite*{A}}$ as follows:
\begin{itemize}
\item 
\textbf{source and target:} For each ~$(F, G, \alpha) \in \Univ_1$,
 \begin{equation*}
 s^U_A(F, G, \alpha) = F \quad\text{and}\quad t^U_A(F, G, \alpha) = G.
 \end{equation*}
 \item 
 \textbf{identity assignment: } For each~$F \in \Univ_0(A)$,
$e^U_A(F) = (F, F, \id_F).$
 \item 
 \textbf{composition:} For each~$P = \left( (F, G, \alpha), (G, H, \beta)\right) \in \pb{\Univ_1}{\Univ_1}{\Univ_0}$,
 \begin{equation*}
      e^U_A(P) = (F, H, \beta\circ \alpha).
\end{equation*}
\end{itemize}
The tuple ~$ \Univ := (\Univ_0, \Univ_1, s^U, t^U, e^U, c^U)$ forms an internal category in~$\Sh{\CC, J}$.
\end{theoremE}
\begin{proofE}
Because we use the structure of the ambient category, all maps are well-defined. 
It remains to check that all maps are natural and that the axioms of an internal category are satisfied.
\begin{enumerate}
 \item 
 \textbf{Naturality:} Let~$f\colon B\to A$ be a morphism.
 
     \textbf{Source and target:} 
     We first show that~$s^U_B\circ \Univ_1(f) = \Univ_0(f)\circ s^U_A$.
      Given~$(F, G, \alpha)\in \Univ_1(A)$, we compute as follows:
      \begin{align*}
       s^U_B(\Univ_1(f)(F, G, \alpha)) = S^U_B(\Univ_0(f)(F), \Univ_0(f)(G), \bparens{\overline{\slice{\CC}{f}}}^{\ast}(\alpha)) 						       = \Univ_0(f)(F) = s^U_A(\Univ_0(f)(F, G, \alpha)).
      \end{align*}
      The corresponding identity~$t^U_B\circ \Univ_1(f) = \Univ_0(f)\circ t^U_A$ is obtained similarly:
      \begin{align*}
        t^U_B(\Univ_1(f)(F, G, \alpha)) = t^U_B(\Univ_0(f)(F), \Univ_0(f)(G), \bparens{\overline{\slice{\CC}{f}}}^{\ast}(\alpha)) 							       = \Univ_0(f)(G) = t^U_A(\Univ_0(f)(F, G, \alpha)).
      \end{align*}
      
      \textbf{identity assignment:} We next verify that~$e^U_B\circ \Univ_0(f) = \Univ_1(f)\circ e^U_A$.
      Given~$F \in \Univ_0(A)$, we have
      \begin{align*}
       e^U_B(\Univ_0(f)(F)) = (\Univ_0(f)(F), \Univ_0(f)(F), \bparens{\overline{\slice{\CC}{f}}}^{\ast}(\id_F)) 
       					= \Univ_1(f)(F, F, \id_F) 
					= \Univ_1(f)(e^U_A(F))
      \end{align*}
      
      \textbf{composition:} 
      Finally, we show that~$c^U_B\circ \pb{\Univ_1(f)}{\Univ_1(f)}{\Univ_0(f)} = \Univ_1(f)\circ c^U_A$.
      We compute for~$ P := \left( (F, G, \alpha), (G, H, \beta)\right) \in \pb{\Univ_1{A}}{\Univ_1(A)}{\Univ_0(A)}$
      as follows:
      \begin{align*}
        c^U_B(\pb{\Univ_1(f)}{\Univ_1(f)}{\Univ_0(f)}(P))\\
        = c^U_B(\left( (\Univ_0(f)(F), \Univ_0(f)(G), \bparens{\overline{\slice{\CC}{f}}}^{\ast}(\alpha)), (\Univ_0(f)(G), 			   \Univ_0(f)(H), \bparens{\overline{\slice{\CC}{f}}}^{\ast}(\beta))\right)) \\
        = (\Univ_0(f)(F), \Univ_0(f)(H), \bparens{\overline{\slice{\CC}{f}}}^{\ast}(\beta\comp\alpha))\\
        = \Univ_1(f)(F, H, \beta\comp\alpha) = \Univ_1(f)(c^U_A(P)).
    \end{align*}
\item 
\textbf{Internal category axioms:}
All internal category axioms are satisfied point-wise because
 \begin{itemize}
  \item 
  composition and identity in ~$\Sh[0]{\SliceSite*{A}}$ are associative and unital;
  \item 
  $\bparens{\overline{\slice{\CC}{f}}}^{\ast}$ is functorial and therefore preserves identities and composition.
  \end{itemize}
Thus, the relevant diagrams commute for all~$A\in\CC$ hence also globally.
\end{enumerate}
\end{proofE}

We show that the internal category $\Univ$ indeed captures small sheaves and their morphisms.



\begin{theoremE}[][category=internal-categories-sheaves]
\label{thm:internalising-small-sheaves}
A small sheaf~$F\colon \op{\CC}\to \Set_0$ can be internalised as global section $\sheafInt{F}\colon \TSheaf\to \Univ_0$ in $\Sh{\CC, J}$ defined by $\sheafInt{F} = i(F)$ using the isomorphism from \cref{thm:sheaf-internalisation}.
A natural transformations~$\alpha\colon F\to G$ of small sheaves can be internalised in the universe by defining the internal natural transformation~$\sheafInt{\alpha}\colon \TSheaf\to \Univ_1$ in ~$\Int{\Sh{\CC, J}}$
defined for all~$A\in \CC$ as
\begin{equation*}
\sheafInt{\alpha}_A(\ast) \coloneqq (\sheafInt{F}_{A}(\ast), \sheafInt{G}_{A}(\ast), \alpha \sliceDom[A] \from F \comp \op{\bparens{\sliceDom[A]}} \to G\circ \op{\bparens{\sliceDom[A]}})
\end{equation*}
\end{theoremE}
\begin{proofE}
The first part holds by~\cref{thm:sheaf-internalisation}. 
The map ~$\sheafInt{\alpha}$ is well defined: for all ~$A\in \CC$, elements in ~$\Univ_1(A)$ are tuples ~$(F^A\in \Univ_0(A), G^A\in \Univ_0(A), \alpha^A\in [F^A, G^A])$.
We transport the tuple ~$(F, G, \alpha)$, where $\alpha$ is the morphism of small sheaves to be transported, along the isomorphism
\begin{equation*}
i\colon\Sh[0]{\CC, J}\xra{\cong}\mor{\Sh[1]{\CC, J}}(\TSheaf, \Univ_0)
\end{equation*}
to show that ~$(F\circ \op{(\sliceDom[A])}), (G\circ \op{(\sliceDom[A])}) \in \Univ_0(A)$ and ~$\alpha^A \in [(F\circ \op{(\sliceDom[A])}), (G\circ \op{(\sliceDom[A])})]$, and therefore ~$\sheafInt{\alpha}_A(*)\in \Univ_1(A)$.
  
It remains to show that~$\sheafInt{\alpha}$ is a valid internal natural transformation. 
This follows from naturality of~$\alpha^A$ and the fact that the structure maps of~$\Univ_0$ and~$\Univ_1$, viewed as an internal category, are defined using the structure of the ambient category ~$\Sh[0]{\SliceSite*{A}}$.
\end{proofE}

Having shown how small sets arise as global sections of the universe, we now extend this perspective to recover small sheaves as internally indexed families varying over objects in ~$\CC$.

\subsection{Predicates over the Universe}
\label{sec:predicates}
Our next goal is to describe a concept of \emph{predicate} internal to~$\Sh{\CC, J}$ which forms the basis of what we have called \emph{sheafeology}.
Externally, predicates over a set are modeled typically captured by subsets.
We abstract from this view by exploiting the topos structure of sheaf categories: we construe predicates of sheaves as \emph{subsheaves}.
This will afford the construction of a pair of sheaves~($\Pred_0, \Pred_1)$ which carry the structure of a category~$\Pred$ of \emph{predicates} in~$\Int{\Sh{\CC, J}}$.

We first describe a sheaf~$\Pred_0$ which collects all internal predicates over small sheaves construed as \emph{subsheaves}:~natural transformations between sheaves with each components a monomorphism.
Importantly, as is the case for any topos~\cite{Johnstone02}, 
$\Univ_0(A) = \Sh[0]{\SliceSite*{A}}$
has a \emph{subobject classifier} for every~$A\in\CC$.
That is, there is a sheaf~$\Omega$ with the property that for any sheaf~$F$, the set of (equivalence classes of) presheaves of $F$ correspond bijectively with natural transformations from~$F$ to~$\Omega$.
That is, any subsheaf~$G\hra F$ can be identified with its \emph{characteristic morphism}~$\chi_G\colon F\to\Omega$. 
We employ this in defining~$\Pred_0(A)$ to range over all characteristic morphisms:
\begin{equation}
\label{eq:predicate-sheaf}
\Pred_0(A) = \coprod_{F \in \Univ_0(A)}\Sh[0]{\SliceSite*{A}}(F, \Omega)
\end{equation}

\begin{theoremE}[][category=internal-categories-sheaves]
\label{thm:predicate-sheaf}
The assignment~$A\mapsto\Pred_0(A)$ of~\cref{eq:predicate-sheaf} extends to a sheaf ~$\Pred_0\in\Sh{\CC, J}$.
\end{theoremE}
\begin{proofE}
	Given a morphism ~$g\colon B\to A$ in ~$\CC$, let $\bparens{\overline{\slice{\CC}{g}}}^{\ast}$ denote the left-adjoint of the geometric morphism $\overline{\slice{\CC}{g}} \from \Sh[0]{\SliceSite*{B}} \to \Sh[0]{\SliceSite*{A}}$. Now, given the subobject classifier ~$\Omega$ in ~$\Sh[0]{\SliceSite*{A}}$ and the pair ~$(F, \alpha\colon F\to\Omega) \in \Pred_0(A)$, ~$\alpha$ characterises a subobject ~$P\subseteq F$ through the isomorphism ~$\mathrm{Sub}_A(F)\cong [F, \Omega]$~\cite[Prop. I.3.1]{MM94:SheavesGeometryLogic}, where ~$\mathrm{Sub}_A(F)$ is the set of subobjects of ~$F$ in ~$\slice{\CC}{A}$. 
By left exactness of ~$\bparens{\overline{\slice{\CC}{g}}}^{\ast}$, we have that ~$\bparens{\overline{\slice{\CC}{g}}}^{\ast}P$ is a subobject of ~$\bparens{\overline{\slice{\CC}{g}}}^{\ast}F$ in ~$\slice{\CC}{B}$. 
Therefore, we can define~$\Pred_0$ on morphisms ~$g\colon B\to A$ in ~$\CC$ for a pair ~$(F, \alpha)\in\Pred_0(A)$ as
\begin{equation*}
\Pred_0(f)(F, \alpha) = 
(\bparens{\overline{\slice{\CC}{g}}}^{\ast}(F), \bparens{\overline{\slice{\CC}{g}}}^{\ast}(\alpha)) \cong P\op{\slice{\CC}{g}} \subseteq F\op{\slice{\CC}{g}}
\end{equation*}
	
Functoriality of ~$\Pred_0$ holds by functoriality of ~$\bparens{\overline{\slice{\CC}{g}}}^{\ast}$. 
It remains to show that ~$\Pred_0$ satisfies the sheaf condition for all ~$A\in \CC$ and all covers of ~$A$. 
Let ~$\{f\colon \dom f\to A\}$ be a cover of~$A$ and ~$ M_f\in \Pred_0(A)$ be a compatible family represented by tuples ~$(F_f\in \Univ_0(\dom f), P_f \subseteq F_f \in \Univ_0(\dom f))$. 
Again, compatible means that for all ~$g\colon B\to \dom f_1$ and~$h\colon B\to \dom f_2$ we have that
\begin{equation*}
F_{f_1}\circ \op{\slice{\CC}{g}} = 
F_{f_2}\circ \op{\slice{\CC}{h}} \quad P_{f_1}\circ \op{\slice{\CC}{g}} = P_{f_2}\circ \op{\slice{\CC}{h}}
\end{equation*}
	
We show that there is a sheaf ~$F\in \Univ_0(A)$ with a subobject ~$P\subseteq F$ such that ~$F \circ \op{\slice{\CC}{f}} = F_f$ and ~$P\circ \op{\slice{\CC}{f}}$ for all ~$f\in S$.
	
From the matching family ~$M_S$ we obtain
\begin{itemize}
\item 
a compatible family ~$P_f \in \Univ_0(\dom f)$
\item 
a compatible family ~$F_f \in \Univ_0(\dom f)$
\item 
a family of monomorphisms ~$P_f\subseteq F_f)$
\end{itemize}
	
Because ~$\Univ_0$ is a sheaf, we obtain unique amalgamations ~$P, F\in \Univ_0(A)$ restricting to the pairs in ~$M_S$.
It remains to show that there exists a unique ~$P\subseteq F$. Given that ~$\mathrm{Sub}_A(F) \cong [F, \Omega]$ is a sheaf, we can use the same argument as in the proof of\cref{thm:morphism-universe} to show that the family ~$P_f\subseteq F_F$ induces this monomorphism.
We let ~$(F, P\subseteq F) \in \Pred_0(A)$ be the unique amalgamation for ~$M_S$, showing that ~$\Pred_0$ is a sheaf, concluding the proof.
\end{proofE}

We next construct a sheaf~$\Pred_1$ which will serve as the sheaf of morphisms between internal predicates.
In the following, we use~$[F, G]$ to denote the set of natural transformations~$F\to G$ for~$F,G\in\Univ_0(A)$.
The object part of~$\Pred_1$ is then given by the following equalizer diagram:
\begin{equation*}
\label{eq:morphism-predicate-sheaf}
\begin{tikzcd}
\Pred_1(A) \arrow[r, "\mathrm{eq}"] & {\displaystyle\coprod[F, G]} \arrow[r, yshift=0.7ex, "\pi_1"{name=p_1, above}] \arrow[r, yshift=-0.7ex, "\pi_2"{name=p_2, below}] & \Pred_0(A),
\end{tikzcd}
\end{equation*}
where the coproduct ranges over all pairs of predicates $(F, \alpha),(G, \beta)\in \Pred_0(A)$ and~$\pi_1,\pi_2$ are defined by the assignments
\begin{equation*}
\pi_1\colon((F, \alpha), (G, \beta), \gamma) \mapsto \alpha 
\quad\text{and}\quad
\pi_2\colon ((F, \alpha), (G, \beta), \gamma) \mapsto \beta\circ \gamma.
\end{equation*}
Thus,~$\Pred_1(A)$ consists of triples~$((F, \alpha), (G, \beta), \gamma)$ such that~$\alpha = \beta\comp\gamma$. 
Intuitively, we understand this as expressing that~\emph{$\alpha$ entails~$\beta$ along~$\gamma$}.

\begin{theoremE}[][category=internal-categories-sheaves]
\label{thm:morphism-predicate-sheaf}
The mapping ~$\Pred_1$ as defined in~\cref{eq:morphism-universe} extends to a sheaf ~$\Pred_1 \in \Sh{\CC, J}$.
\end{theoremE}
\begin{proofE}
Let $f\colon B \to A$ in $\CC$. 
Recall that $\Pred_1(A)$ and~$\Pred_1(B)$ are given as equalizers. 
Since $\Univ_0$ and $\Univ_1$ are also defined point-wise as coproducts over sheaves 
and morphisms on $\slice{\CC}{A}$, the maps induced by precomposition with $f$ commute with $\pi_1$ and $\pi_2$. Thus, there are induced morphisms
\begin{equation*}
\coprod [F, G](A) \cong \coprod [F, G](B)
\quad\text{and}\quad
\Pred_0(A) \cong \Pred_0(B)
\end{equation*}
that commute with $\pi_1$ and $\pi_2$. 
By the universal property of equalizers, there is a unique morphism
\begin{equation*}
\Pred_1(f)\colon \Pred_1(A) \to \Pred_1(B)
\end{equation*}
such that $\mathrm{eq}_B \circ \Pred_1(f) = (\coprod [F, G](f)) \circ \mathrm{eq}_A$.
	
Functoriality of~$\Pred_1$ follows by uniqueness of this map. 
It remains to show that for all ~$A\in\CC$ and all covers of ~$A$ the sheaf condition holds.
	
Let ~$\{f\colon \dom f\to A\}$ be a cover of ~$A$ and 
\begin{equation*}
\left((F_f, \alpha_f\colon F_f\to\Omega_f), (G_f, \beta_f\colon G_f\to \Omega_f), \gamma_f\colon F_f\to G_f\right) \in \Pred_1(\dom f)
\end{equation*}
be a compatible family. 
As ~$(F_f, \alpha_f), (G_f, \beta_f)\in \Pred_0(\dom f)$ and ~$\gamma_f\in \Univ_1(\dom f)$, by virtue of ~$\Pred_0$ and ~$\Univ_0$ being sheaves, the compatible family glues to a unique amalgamation
\begin{equation*}
\left( (F,\alpha)\in\Pred_0(A), (G, \beta)\in \Pred_0(A), \gamma\in \Univ_1(A) \right)
\end{equation*}
	
To see that~$\left((F, \alpha), (G, \alpha), \gamma\right)$ in ~$\Pred_1(A)$ we want to show that ~$\alpha = \beta\circ \gamma$. Note that ~$\Omega^G\times G^F$ and ~$\Omega^F$ are sheaves in ~$\Sh{\SliceSite*{A}}$, and that ~$(\beta, \gamma)\in (\Omega^G\times G^F)(\id_A)$ and ~$\alpha\in \Omega^F(\id_A)$ are amalgamations for the cover $S$ of $\id_A$. Composition in ~$\Sh{\SliceSite*{A}}$ is given by the map of sheaves 
\begin{equation*}
\circ^A\colon \Omega^G\times G^F\to \Omega^F
\end{equation*}
	
Through the matching condition, for all ~$f\colon \dom f\to A$ we have that ~$\circ^A_f\colon (\beta_f, \gamma_f)\mapsto \beta_f\circ \gamma_f$ with ~$\beta_f\circ\gamma_f = \alpha_f$.
	
For ~$\circ^A$ to be a valid presheaf morphism we therefore also need to have ~$\circ_{\id_A}^A\colon (\beta, \gamma)\mapsto \gamma\circ \beta$ with ~$\gamma\circ\beta = \alpha$.
Putting everything together, we have that ~$\left((F, \alpha), (B, \beta), \gamma\right) \in \Pred_1(A)$, proving that ~$\Pred_1$ is a sheaf.
\end{proofE}

As promised, the pair~$(\Pred_0, \Pred_1)$ provide the underlying sheaves of a category internal to $\Sh{\CC, J}$.
The remaining structure is obtained from the corresponding structure of~$\Sh[0]{\SliceSite*{A}}$.
In detail, we define the \emph{source} and \emph{target morphisms} for each
$\left((F, \alpha), (G, \beta), \gamma\right) \in \Pred_1(A)$ by
\begin{equation*}
s^P_A\left((F, \alpha), (G, \beta), \gamma\right) = (F, \alpha) 
\quad\text{and}\quad
t^P_A\left((F, \alpha), (G, \beta), \gamma\right) = (G, \beta);
\end{equation*}
the \emph{identity morphism} is defined for each~$(F, \alpha) \in \Pred_0(A)$ by
\begin{equation*}
e^P_A(F, \alpha) = \left((F, \alpha), (F, \alpha), id_F\right);
\end{equation*}
\emph{composition} is defined on~$\left(((F, \alpha), (G, \beta), \gamma), ((G, \beta), (H, \delta), \epsilon)\right)$ in~$\pb{\Pred_1(A)}{\Pred_1(A)}{\Pred_0(A)}$ by
\begin{equation*}
c^P_A(P) = \left((F, \alpha), (H, \delta), \epsilon\circ \gamma\right).
\end{equation*}

\begin{theoremE}[][category=internal-categories-sheaves]
The tuple~$\Pred = (\Pred_0, \Pred_1, s^P, t^P, e^P, c^P)$ is an internal category in~$\Sh{\CC, J}$.
\end{theoremE}
\begin{proofE}
	Again, since we use the structure maps of the ambient category, the internal category axioms are satisfied. We check the structure maps are well-defined presheaf morphisms in ~$\Sh{\CC, J}$. For all ~$A\in\CC$, ~$s^P_A$ and ~$t^P_A$ are well typed by construction.
	Given ~$(F, \alpha) \in \Pred_0(A)$ we have that ~$e^P_A(F, \alpha) \in Pred_1(A)$, as ~$\alpha\circ\id_F = \alpha$.
	For
	\begin{equation*}
		P \coloneqq \left(((F, \alpha), (G, \beta), \gamma), ((G, \beta), (H, \delta), \epsilon)\right) \in \pb{\Pred_1(A)}{\Pred_1(A)}{\Pred_0(A)}
	\end{equation*}
	
	\noindent we have that ~$c_A^P(P) \in \Pred_1(A)$ as ~$\alpha = \beta\circ\gamma = \delta\circ\epsilon\circ\gamma$.
	
	Because ~$\Pred_0$ is defined as a coproduct ranging over sheaves in ~$\Sh[0]{\SliceSite*{A}}$ and ~$\Pred_1$ is defined as an equaliser over maps in ~$\Sh[0]{\SliceSite*{A}}$, both varying functorially in ~$A$, and reindexing along ~$f$ (through $\bparens{\overline{\slice{\CC}{g}}}^{\ast}$) respects these (co)limits, the structure maps are valid presheaf morphisms, natural in ~$\CC$.
\end{proofE}

As ~$\Pred_0$ and ~$\Pred_1$ are defined using (co)limits indexed by ~$\Univ_0$ and ~$\Univ_1$, there is a straightforward way to project the predicate sheaves to the universe sheaves through the mappings 
\begin{equation*}
  (p_0, p_1)\colon (\Pred_0, \Pred_1)\to (\Univ_0, \Univ_1)
\end{equation*} 

\noindent defined pointwise for ~$A\in \CC$ as
\begin{align}
	\label{eq:internal-predicate-functor}
	p_{0,A}\colon (F, \alpha) \in \Pred_0(A) \mapsto F \in \Univ_0(A)\\
	p_{1,A}\colon \left( (F, \alpha), (G, \beta), \gamma \right) \in \Pred_1(A) \mapsto \gamma \in \Univ_1(A)
\end{align}

We now verify that these projections assemble into an internal functor.

\begin{theoremE}[][category=internal-categories-sheaves]
	The mappings ~$p_0$ and ~$p_1$ defined in~\cref{eq:internal-predicate-functor} form the internal functor ~$p\colon \Pred\to \Univ$.
\end{theoremE}
\begin{proofE}
	We show ~$p_0, p_1$ are well-defined and satisfy the axioms of an internal functor:
	\begin{itemize}
		\item $p_0$ and $p_1$ are valid presheaf morphisms: Let ~$A, B,f\colon B\to A$ in ~$\CC$
		\begin{itemize}
			\item \textbf{object component:} For any ~$(F, \alpha) \in \Pred_0(A)$:
			\begin{align*}
				p_0(\Pred_0(f)(F,\alpha)) = p_0(\bparens{\overline{\slice{\CC}{g}}}^{\ast}(F), \bparens{\overline{\slice{\CC}{g}}}^{\ast}(\alpha))\\
				= \bparens{\overline{\slice{\CC}{g}}}^{\ast}(F) = \Univ_0(f)(\bparens{\overline{\slice{\CC}{g}}}^{\ast}F) = \Univ_1(f)(p_1(F,\alpha))
			\end{align*}
			\item \textbf{morphism component:} For any ~$\left( (F, \alpha), (G, \beta), \gamma \right) \in \Pred_1(A)$
			\begin{align*}
				p_1(\Pred_1(f)(\left( (F, \alpha), (G, \beta), \gamma \right) \in \Pred_1(A)))\\
				= p_1(\left( (\bparens{\overline{\slice{\CC}{g}}}^{\ast}F, \bparens{\overline{\slice{\CC}{g}}}^{\ast}\alpha), (\bparens{\overline{\slice{\CC}{g}}}^{\ast}G, \bparens{\overline{\slice{\CC}{g}}}^{\ast}\beta), \bparens{\overline{\slice{\CC}{g}}}^{\ast}\gamma \right) \in \Pred_1(A)) \\
				= \bparens{\overline{\slice{\CC}{g}}}^{\ast}(\gamma) = \Univ_1(\gamma) = \Univ_1(f)(p_1\left( (F, \alpha), (G, \beta), \gamma \right))
			\end{align*}
		\end{itemize}
		\item ~$p$ satisfies the axioms of an internal functor:
		\begin{itemize}
			\item \textbf{respect for source and target:} Let ~$\left( (F, \alpha), (G, \beta), \gamma \right) \in \Pred_1(A)$
			\begin{align*}
				p_0(s^P_A(\left( (F, \alpha), (G, \beta), \gamma \right))) = p_0(F, \alpha) = F = s^U_A(\gamma)\\
				= s^U_A(p_1(\left( (F, \alpha), (G, \beta), \gamma \right)))
			\end{align*}
			similarily for target
			\begin{align*}
				p_0(t^P_A(\left( (F, \alpha), (G, \beta), \gamma \right))) = p_0(G, \beta) = G = s^U_A(\gamma)\\
				= t^U_A(p_1(\left( (F, \alpha), (G, \beta), \gamma \right)))
			\end{align*}
			\item \textbf{respect for identities:} Let ~$(F, \alpha)\in \Pred_0(A)$
			\begin{align*}
				p_1(e^P_A(F, \alpha)) = p_1\left((F, \alpha), (F, \alpha) \id_F\right) = \id_F = e^U_F(F) = e^U_A(p_0(F, \alpha))
			\end{align*}
			\item \textbf{respect for composition:} for any ~
      \begin{equation*}
        P \coloneqq \left(((F, \alpha), (G, \beta), \gamma), ((G, \beta), (H, \delta), \epsilon)\right) \in \pb{\Pred_1(A)}{\Pred_1(A)}{\Pred_0(A)}
      \end{equation*}
			\begin{align*}
				p_1(c^P_A(P)) = p_1\left((F, \alpha), (H, \delta), \epsilon\circ \gamma\right) = \epsilon\circ \gamma = c^U_A(\gamma,\epsilon)\\
				= c^U_A(\pb{p_1}{p_1}{p_0}(P))
			\end{align*}
		\end{itemize}
	\end{itemize}
\end{proofE}

This completes the internal construction of predicates over small sheaves and ties it structurally to the underlying universe.
To bring the construction back to the ground, we briefly unpack the internal construction in the degenerate case where the sheaf topos reduces to the category of sets.

\bsnote{cite?}
	For the case where ~$\Sh{\CC, J} \cong \Set_1$, the internal functor ~$p$ corresponds with the external predicate fibration 
	
	\noindent $p_{\mathrm{ext}}\colon \Pred_{\mathrm{ext}}\to \Set_0$, where ~$\Pred_{\mathrm{ext}}$ is the category with
	\begin{itemize}
		\item \textbf{objects:} pairs ~$(X \in \Set_0, P\subseteq X)$
		\item \textbf{morphisms:} given ~$f\colon X\to Y$ in ~$\Set_0$, there is a morphism from ~$(X, P\subseteq X)$ to ~$(Y, Q\subseteq Y)$, if ~$\img{f}(P)\subseteq Q$.
	\end{itemize}
	
	Then ~$p_{\mathrm{ext}}$ is the forgetful functor projecting ~$\Pred_{\mathrm{ext}}$ to ~$\Set_0$.

\subsection{Internal Predicate Fibration}
\label{sec:internal-fibration}

We now show that the internal functor~$p \colon \Pred \to \Univ$ can play a role analogous to the usual predicate fibration of sets in categorical logic~\cite{Jacobs99}.
In particular, we prove that $p$ is an internal fibration~\cite{Street74} in the~2-category~$\Int{\Sh{\CC, J}}$.
Without going into the details, this 2-categorical concept of fibration carries the same intuition as the one from ordinary category theory: a fibration provides Cartesian liftings which formalise reindexing predicates along maps in the base category.
This reindexing is analogous to taking pre-images, as discussed in \cref{sec:introduction}.

\begin{theoremE}[][category=internal-categories-sheaves]
\label{thm:internal-fibration}
The internal functor~$p\colon \Pred\to \Univ$ is an internal fibration in~$\Int{\Sh{\CC, J}}$.
\end{theoremE}
\begin{proofE}
	For every 2-cell $\beta$
	\begin{equation*}
		\begin{tikzcd}
			\IntCat{X} \arrow[r, "e"] \arrow[dr, "b"{name=U, below}, pos=0.2] & \Pred \arrow[d, "p"{name=D}]\\
			& \Univ
			\arrow[Rightarrow, from=U, to=D, shorten <=5pt, shorten >=3pt, "\beta"]
		\end{tikzcd}
	\end{equation*}

	\noindent we need to show there exists a $p$-cartesian 2-cell
	\begin{equation*}
		\begin{tikzcd}
			\IntCat{X} 
			\arrow[r, bend left=50, "e'"{name=U}]
			\arrow[r, bend right=50, "e"{name=D, below}]
			& \Pred
			\arrow[Rightarrow, from=U, to=D, shorten <=7pt, shorten >=7pt, "\alpha"]
		\end{tikzcd}
	\end{equation*}

	\noindent such that ~ $p\alpha = \beta$.
	
	We start by analysing what exactly a generalised object ~$ e \in [\IntCat{X}, \Pred]$ is.
	It is an internal functor with components
	\begin{equation*}
		e_0\colon \IntCat{X}_0\to \Pred_0 \quad e_1\colon \IntCat{X}_1\to \Pred_1
	\end{equation*}
	
	We can see these components for any ~$A\in \CC$ as families indexed by $x\in \IntCat{X}_0(A)$:
	\begin{align*}
		e_{0, A} = \{(x \in \IntCat{X}_0(A), (F_x, \phi_x) \in \Pred_0(A))\}
		e_{1, A} = \{(f\colon x\to x^\prime \in \IntCat{X}_1(A), (\phi_x, \psi_x, \gamma_x) \in \Pred_1(A))\}
	\end{align*}

	\noindent and similarily
	\begin{align*}
		b_{0, A} = \{(x \in \IntCat{X}_0(A), F_x \in \Univ_0(A))\}
		b_{1, A} = \{(f\colon x\to x^\prime \in \IntCat{X}_1(A), \gamma_x \in \Univ_1(A))\}
	\end{align*}

	Then ~$\beta\colon \IntCat{X}_0\to \Univ_1$ can also be seen as a family
	\begin{equation*}
		\beta_A = \{(x\in \IntCat{X}_0(A), f_x\colon F_x \to p_0(G_x, \beta_x))\}
	\end{equation*}

	\noindent with ~$f_x = \beta_A(x), F_x=b_{0,A}(x)$, and $(G_x, \beta_x) = e_a(x)$.
	
	We now construct the $p$-cartesian lift 
	\begin{equation*}
		\alpha\colon e^\prime \Rightarrow e \quad \alpha\colon \IntCat{X}_0\to \Pred_1
	\end{equation*}

	\noindent as follows: ~$e_{0,A}(x)$ characterises a subobject ~$P_x\subseteq G_x$. We can form the pullback of this subobject along ~$\beta_A(x)$ to get the subobject
	\begin{equation*}
		Q_x \coloneqq \beta_a(x)^* P_x\subseteq F_x
	\end{equation*}

	\noindent characterised by ~$e^\prime_{0,A}(x)$.

	We then define $\alpha$ componentwise as
	\begin{align*}
		\alpha_A\colon \IntCat{X}_0(A)\to \Pred_1(A)\\
		\alpha_A \colon x\mapsto \left((b_{0, A}(x), e^\prime_{0, A}(x)), (p_0(E_{A, 0}(x)), e_{A, 0}(x)), \beta_A(x)\right)
	\end{align*}
	
	By construction we have for all $A$ in $\CC$ and all ~$x\in \IntCat{X}_0(A)$ that 
	\begin{equation}
		\label{eq:firstfib}
		p_{0, A}(\alpha_A(x)) = \beta_A(x)
	\end{equation}
	
	We also have that $\alpha_A(x)\in \Pred_1(A)$: we can see ~$\alpha_A(x)$ as a triple
	\begin{equation*}
		\alpha_A(x) = ((F_x, \phi_x), (G_x, \psi_x), \beta_A(x))
	\end{equation*}

	\noindent where $\phi_x$ characterises a subobject ~$P_x\subseteq G_x$, and $\phi_x$ characterises the subobject ~$Q_X\subseteq F_x$ obtained by pulling $\psi_X$ back along $\gamma_A(x)$
	\begin{equation*}
		Q_x \coloneqq \beta_A(x)^* P_x
	\end{equation*}
	
  \noindent so by definition ~$\phi_x = \psi_x\circ \beta_A(x)$.
	
	Now it remains to check the lifting conditions hold: we start with the following data:
	\begin{itemize}
		\item A 1-cell ~$h\colon \IntCat{Y}\to \IntCat{X}$
		\item A 1-cell ~$e^{\prime\prime}\colon \IntCat{Y}\to\Pred$
		\item A 2-cell ~$\xi\colon \IntCat{Y}_0\to\Pred_1$ between ~$e^{\prime\prime}\Rightarrow \alpha\circ h$
		\item A 2-cell ~$\theta\colon \IntCat{Y}_0\to \Pred_1$ between ~$p\circ e^{\prime\prime}\Rightarrow p\circ e^\prime \circ h$
	\end{itemize}

	\noindent satisfying
	\begin{equation*}
		p\xi = p\alpha h\circ \theta
	\end{equation*}

	We will show there exists a unique 2-cell ~$\zeta\colon \IntCat{Y}_0\to \Pred_1$ between ~$e^{\prime\prime}\Rightarrow e^\prime\circ h$ satisfying 
	\begin{equation*}
		p\zeta = \theta \quad \xi = \alpha h\circ \zeta
	\end{equation*}
	
	The idea again is to work pointwise over objects $A\in \CC$ and elements $y\in \IntCat{Y}_0(A)$.
	Then we can explicitly state
	\begin{itemize}
		\item \begin{equation}
			\label{eq:secondfib}
			\xi_A(y) \coloneqq \left((H_y, \epsilon_y), (G_{h(y)}, \psi_y), \delta_y\right)
		\end{equation}

		\noindent where $\psi_{h(y)}$ characterises a subobject ~$ P_{h(y)}\subseteq G_{h(y)}$, and $\epsilon_y$ a subobject ~$R_y\subseteq H_y$ and
		\begin{equation}
			\label{eq:thirdfib}
			\epsilon_y = \psi_{h(y)}\circ \delta_y \Rightarrow R_y\subseteq P_{h(y)}
		\end{equation}
		\item
		\begin{equation}
			\label{eq:fourthfib}
			\theta_A(y) \coloneqq \kappa_y\colon H_y\to F_{h(y)}
		\end{equation}

		\noindent with ~$F_{h(y)} = p_{0,A}(e^\prime_A(h(y)))$ 
	\end{itemize}
	such that 
	\begin{equation}
		\label{eq:fifthfib}
		p_A\xi_A = p_a\alpha_A h_A\circ \theta_A
	\end{equation}
	
	We now construct the 2-cell	!$\zeta\colon e^{\prime\prime}\Rightarrow e^\prime \circ h$ with
	\begin{equation*}
		\zeta_A(y) \coloneqq \left((H_y, \epsilon_y), (F_{h(y)}, \phi_{h(y)}), \kappa_y\right)
	\end{equation*}

	\noindent where 
	\begin{equation*}
		e^{\prime\prime}_{0, A}(y) = (H_y, \epsilon_y) \quad e^\prime_{0,A}(h_{0,A}(y)) = (F_{h(y)},  \phi_{h(y)}),
	\end{equation*}
	
	To check that $\zeta_A(y)$ defines a valid internal natural transformation we need to show that $\zeta_A(y)\in \Pred_1(A)$.
	For this we show ~$\epsilon_y = \phi_{h(y)} \circ \kappa_y$.
	\begin{align}
		\epsilon_y = \psi_{h(y)}\circ \delta_y \tag{By~\cref{eq:secondfib}}\\
		= \psi_{h(y)}\circ p\xi_A(y) \tag{By definition of $p$}\\
		= \psi_{h(y)}\circ (p_A\alpha_A h_A)(y)\circ \theta_A(y) \tag{By assumption of~\cref{eq:fifthfib}} \\
		= \psi_{h(y)}\circ \beta_A(h(y))\circ \theta_A(y) \tag{By~\cref{eq:firstfib}}\\
		= \phi_{h(y)}\circ \beta_A(h(y))\circ \theta_A(y) \nonumber\\
		= \phi_{h(y)}\circ \kappa_y \tag{By definition~\cref{eq:fourthfib}}
	\end{align}
	
	That ~$\phi_{h(y)} = \psi_{h(y)}\circ\beta_A(h_{0,A}(y))$ holds is because ~$e^\prime_{0, A}(h_{0, A}(y))$ characterises a subobject ~$Q_{h(y)}\subseteq F_{h(y)}$ defined by pulling back $G_{h(y)}$ along ~$\beta_A(h_{0, A}(y))$.
	
	We verify the triangle identities:
	\begin{itemize}
		\item We have that ~$\alpha h\circ \zeta = \xi$:
		\begin{align*}
			(\alpha h_A\circ \zeta_A)(y) = \left( (H_y, \epsilon_y), (G_{h(y)}, \psi_{h(y)}), \beta_A(h(y))\circ \kappa_y\right)\\
			\xi_A(y) = \left( (H_y, \epsilon_y), (G_{h(y)}, \psi_{h(y)}), \delta_y\right)
		\end{align*}

		\noindent and that ~$\beta_A(h(y))\circ \kappa_y = \delta_y$ follows from the assumption ~$p\alpha h\circ \theta = p\xi$.
		\item We have that ~$p\zeta = \theta$ by definition of $\zeta$, as it's third component is ~$\kappa_y$ and we define ~$\theta_A(y) = \kappa_y$.
	\end{itemize}
	
	Uniqueness holds: assume there exists ~$\zeta^\prime\colon \IntCat{Y}_0\to \Pred_1$ between ~$ e^{\prime\prime}\Rightarrow e^\prime\circ h$ such that ~$p\zeta^\prime = \theta$ and ~$ \alpha h\circ \zeta^\prime = \xi$. In order for these equalities to hold and as the second component in ~$\zeta_A(y)$ is given by the pullback property, only the first component in ~$\zeta_A^\prime(y)$ can vary from ~$\zeta_A(y)$. Therefore we define
	\begin{equation*}
		\zeta_A^\prime(y) = \left((H_y, \epsilon^\prime_y), (F_{h(y)}, \phi_{h(y)}), \kappa_y\right)
	\end{equation*}
	
	But now as ~$\zeta^\prime_A, \zeta_A\in \Pred_1(A)$, we must have that
	\begin{equation*}
		\epsilon^\prime = \phi_{h(y)}\circ \kappa_y = \epsilon_y
	\end{equation*}
	
  \noindent showing the first component also must be the same, and therefore ~$\zeta^\prime = \zeta$ for all ~$A\in \CC$ and ~$y\in \IntCat{Y}_0$.
\end{proofE}

We will not go into the details of internal fibrations and instead make use of the consequences of in order to develop resource-aware logic.
First of all, we can obtain fibre categories with  internal reindexing functors that arise from the fibration structure.

\begin{definition}
  \label{def:internal-fibre}
  Given a global section $X \from \IntCat{1}\rightarrow \Univ$, the \emph{fibre} of the fibration $p \from \Pred \to \Univ$ is defined by the following pullback in the category of internal categories $\Int{\Sh{\CC, J}}$.
    \begin{equation*}
        \begin{tikzcd}
            \IntCat{\Pred_X} \arrow[r, "\iota_X"] \arrow[d, "!"{swap}] & \Pred \arrow[d, "p"] \\
            \IntCat{1} \arrow[r, "X", below] & \Univ
        \end{tikzcd}
    \end{equation*}

    This defines an internal category ~$\Pred_{X}$ in ~$\Sh{\CC, J}$ with object and morphism components given by the pullbacks in ~$\Sh{\CC, J}$:
    \begin{equation*}
      \begin{tikzcd}
        \Pred_{X, 0} \arrow[r, "\iota_{X, 0}"] \arrow[d, "!"] & \Pred_0 \arrow[d, "p_0"] & \Pred_{X, 1} \arrow[rr, "\iota_{X, 1}"] \arrow[d, "!"] & & \Pred_1 \arrow[d, "p_1"] \\
        \TSheaf \arrow[r, "X"] & \Univ_0 & \TSheaf \arrow[r, "X"] & \Univ_0 \arrow[r, "e^U"] & \Univ_1
      \end{tikzcd}
    \end{equation*}
    Source, target, identity, and composition morphisms come from the universal property of pullbacks.
  \end{definition}

\begin{example}[Predicates over Memory]
	Given the poset category of memory locations, $\caL$, and the memory sheaf $M$ as defined in~\cref{expl:powerset}, the internal predicate fibration ~$p\colon \Pred\to \Univ$ in $\Cat{\Sh{\caL}}$ and the global section $~\bar{M}$ in ~$\Sh{\caL}$ internalising the small sheaf $M$ as in~\cref{thm:internalising-small-sheaves}, induces the internal fibre ~$\Pred_M$.
  Recall that ~$\overline{M}$ is defined for $U\in \caL$ as ~$\overline{M}_U(*) = M\circ \op{(\dom^U)}\colon \op{(\caL/U)}\to \Set$.
  
  The components ~$\Pred_{M, 0}$ and ~$\Pred_{M, 1}$ are explicitly for $U \in \caL$ as
  \begin{align*}
		\Pred_{M, 0}(U) &= \Omega_U^{\overline{M}_U(*)}=  \{\alpha_U\colon F_U\to \Omega_U \in \Pred_0{U} \mid \overline{M}_U(*) = F_U\}  \\
		\Pred_{M, 1}(U) &= \{\left(\alpha_U\colon F_U\to \Omega_U, \beta_U\colon G_U\to \Omega_U, \gamma_U\colon F_U\to G_U\right) \mid \alpha_U = \beta_U\circ \gamma_U \wedge \id_{\gamma_U} = \id_{\overline{M}_U(*)} \}
	\end{align*}
 
	For all ~$U^\prime\subseteq U$ we have that $\overline{M}_U(*)(U^\prime\subseteq U) = [U^\prime, \mathbf{Val}]$, showing that $\overline{M}_U(*)$ is equal to $M$ restricted to $U$. From this we obtain
	\begin{itemize}
		\item $\Pred_{M, 0}(U)$ contains all subobjects $P_U\subseteq M_U$, with for all $U^\prime\subseteq U$, $P_U(U^\prime\subseteq U) \subseteq [U, \mathbf{Val}]$.
		\item $\Pred_{M, 1}(U)$ contains all subobjects $P_U, Q_U \subseteq M_U$, such that $P_U \subseteq Q_U$. For all $U^\prime\subseteq U$, this is an inclusion $P_U(U^\prime\subseteq U) \subseteq Q_U(U^\prime\subseteq U) \subseteq [U, \mathbf{Val}]$.
	\end{itemize}
We have that $P_U(\id_U) \subseteq [U, \mathbf{Val}]$, so we write $P_U = P_U(\id_U)$ for predicates over $[U, \mathbf{Val}]$.

The sheaf condition allows for gluing consistent predicates: 
for predicates ~$P_{U_1}\subseteq [U_1, \mathbf{Val}]$ and~$P_{U_2}\subseteq [U_2, \mathbf{Val}] $
such that for all $\sigma_{U_1} \in P_{U_1}$ and $\sigma_{U_2}\colon U_2\to\mathbf{Val} \in P_{U_2}$ we have that
	\begin{equation*}
		\sigma_{U_1|U_1\cap U_2} = \sigma_{U_2|U_1\cap U_2},
	\end{equation*}
then there exists a unique predicate ~$P\subseteq [U_1\cup U_2, \mathbf{Val}]$ which restricts to $P^{U_1}$ and $P^{U_2}$.
\end{example}

In \cref{thm:internal-fibration} we established that $\Pred$ is an internal fibration.
This enables, in principle, a form of substitution or pre-image but that is not immediately visible from the definition of 2-fibrations.
The following theorem uses the 2-fibration structure to provide internal reindexing functors between fibres that expose the substitution operation.
Moreover, following standard categorical logic~\cite{Jacobs99}, we show that there are internal functors that implement existential quantification.
We will use these functors in \cref{sec:intepreting-separating-conjunction} to provide internal semantics for separating conjunction.
A full interpretation of substitution or first-order quantification would require a significantly more elaborate setup, including 2-dimensional Day convolution and fibrations, and is left for future work.

\begin{theoremE}[][category=internal-categories-sheaves]
  \label{thm:internal-reindexing}
  For all $X, Y\colon \TSheaf\to \Univ_0$ and morphisms $f \colon \TSheaf\to \Univ_1$ between $X$ and $Y$, there are internal functors
  \begin{equation*}
    \reidx{f} \colon \IntCat{E}_Y\to \IntCat{E}_X \quad \text{and} \quad \exists^f\colon \IntCat{E}_X\to \IntCat{E}_Y \, .
  \end{equation*}
\end{theoremE}
\begin{proofE}
  Sketch of Construction:
	For the proof and an explicit definition of the construction, 
  
  \noindent see~\cref{prop:internal-reindexing}.

    \begin{itemize}
        \item The pullback defining ~$\IntCat{E}_Y$ provides a $p$-cartesian 2-cell ~$\bar{f}_{\mathrm{cart}}$ above ~$\bar{f}$ with domain ~$\IntCat{E}_{Y, 0}$, mapping to ~$\IntCat{E}_1$.
        \item The pullback defining ~$\IntCat{E}_{X, 0}$ together with the cartesian 2-cell ~$\bar{f}_{\mathrm{cart}}$ enables the definition of the reindexing map 
        \begin{equation*}
            f^*_0\colon \IntCat{E}_{Y, 0}\to \IntCat{E}_{X, 0}
        \end{equation*}
        on objects.
        \item Construct 2-cells ~$\xi$ and ~$\gamma$ defining coherence between morphism spaces and internal reindexing.
        \item These induce a unique 2-cell ~$\bar{f}_{\mathrm{lift}}$, enabling the definition of the reindexing map 
        \begin{equation*}
            f^*_1\colon \IntCat{E}_{Y, 1}\to \IntCat{E}_{X, 1}
        \end{equation*}
        on morphisms.
        \item Dually, in ~$\co{\Int{\Sh{\CC, J}}}$, the 2-category obtained by reversing the 2-cells in ~$\Int{\Sh{\CC, J}}$, construct the opcartesian lift ~$\op{\bar{f}}_{\mathrm{cart}}$ and lift ~$\op{\bar{f}}_{\mathrm{lift}}$ to obtain the maps
        \begin{equation*}
            \exists^f_0\colon \IntCat{E}_{X, 0}\to \IntCat{E}_{Y, 0} \quad \text{and} \quad \exists^f_1\colon \IntCat{E}_{X,1}\to\IntCat{E}_{X, 1}
        \end{equation*}
        \item The maps ~$f^*_0$ and ~$f^*_1$ together form the internal functor
        \begin{equation*}
            f^*\colon \IntCat{E}_Y\to \IntCat{E}_X
        \end{equation*}
        The maps ~$\exists^f_0$ and ~$\exists^f_1$ together form the internal functor 
        \begin{equation*}
            \exists^f\colon \IntCat{E}_X\to \IntCat{E}_Y
        \end{equation*}
    \end{itemize}
\end{proofE}

    
We now illustrate the internal existential quantification by applying it to the Day convolution monoid structure on the memory sheaf.
This illustrates a step towards internalising separating conjunction as a logical connective in our framework.

\begin{example}
  \label{ex:existential-separating}
	Let ~$\caL$ be the poset of memory locations and ~$M$ the memory sheaf from~\cref{ex:partial-finitary-memory}, with monoid product 
	\begin{equation*}
		\mu\colon M\DaySH M \to M
	\end{equation*}
	from~\cref{ex:partial-memory-monoid}.
	
	This internalises to global sections ~$\overline{M\DaySH M}, \overline{M}\colon \TSheaf \to \Univ_0$ and a 2-cell 
  
  \noindent ~$\overline{\mu}\colon \TSheaf \to \Univ_1$ in ~$\Int{\Sh{\caL}}$ by~\cref{thm:internalising-small-sheaves}.
	
	Given the internal predicate fibration ~$p\colon \Pred \to \Univ$, the internal fibres ~$\Pred_{M\DaySH M}, \Pred_M$ and transformation ~$\overline{\odot}$ induce
	\begin{equation*}
		\exists^\mu\colon \Pred_{M\DaySH M} \to \Pred_M.
	\end{equation*}
	
	At stage ~$U \in \caL$, this maps
	\begin{equation*}
		P^U \subseteq \coprod_{U = U_1 \cup U_2} M(U_1) \times M(U_2)
	\end{equation*}
	to the amalgamated predicate
	\begin{equation*}
		\exists^\odot_U(P^U) = \{ \sigma \in M(U) \mid \exists (\sigma_1, \sigma_2) \in P_U .\, \sigma = m_{U_1, U_2}(\sigma_1, \sigma_2) \}
	\end{equation*}
	with ~$(\sigma_1 \cup \sigma_2)(x) = \sigma_1(x)$ if ~$x \in U_1$, and ~$\sigma_2(x)$ if ~$x \in U_2$.
  
	Intuitively, a memory $\sigma \in M(U)$ satisfies $P * Q$ if it can be built by gluing together two compatible submemories $\sigma_1$ and $\sigma_2$ that satisfy $P$ and $Q$ respectively.
	Though not defining separating conjunction as a connective 
	
	\noindent $\Pred_M \otimes \Pred_M \to \Pred_M$, this quantification captures its semantics via predicate amalgamation.
\end{example}

\subsection{Internal Logic of Resource-Aware Predicates}
\label{sec:internal-propositional-logic}

Using that the subobject classifier admits an internal Heyting algebra structure in the topos $\Sh{\CC, J}$, we can equip the fibres of the internal fibration $p \colon \Pred \to \Univ$ also with the connectives of intuitionistic propositional logic.
The twist is of course that this logic becomes resource-aware.
We demonstrate this on the following standard syntax for propositional logic.
\begin{equation*}
	\phi, \psi \Coloneqq \top \mid \bot \mid \phi \wedge \psi \mid \phi \vee \psi \mid \phi \to \psi
\end{equation*}
This logic and its interpretation can be extended with quantifier, but we leave these out to avoid difficulties with names.
We can then interpret formulas $\phi$ as predicates on a given resource sheaf $F$, which are global sections of the object part of the fibre $\Pred_{F}$ of the predicate fibration over $F$.
That is, the semantics of $\phi$ is natural transformation
\begin{equation*}
  \sem{\phi} \colon \TSheaf \to \Pred_{F,0} \, .
\end{equation*}
Using the Heyting algebra structure of subobjects
\begin{equation*}
  \intTop, \intBot \from \TSheaf \to \Pred_{F,0}
  \quad \text{and} \quad
	\intAnd, \intOr, \intTo \from \Pred_{F,0} \times \Pred_{F,0} \to \Pred_{F,0} \, ,
\end{equation*}
the semantics of formulas is defined iteratively and for $A \in \CC$ as follows.
Here, we write the binary operators in infix notation as abbreviation for their composition with the pairing, that is, $\sem{\phi} \intAnd \sem{\psi}$ stands for $\intAnd \comp \pair{\sem{\phi}, \sem{\psi}}$.
\begin{align*}
	\sem{\top} &= \intTop \\
	\sem{\bot} &= \intBot \\
	\sem{\phi \land \psi} &= \sem{\phi} \intAnd \sem{\psi} \\
	\sem{\phi \lor \psi} &= \sem{\phi} \intOr \sem{\psi} \\
	\sem{\phi \to \psi} &= \sem{\phi}\intTo \sem{\psi}
\end{align*}

This interpretation of logical connectives is resource-aware by or construction of the predicate fibration.
For example, the semantics of implication can be unfolded to
\begin{equation*}
  \sem{\phi \to \psi}_{A}(\ast)
  = \setDef{s \in F(A)}{\all{f\colon B \to A} F(f)(s) \in \sem{\phi}_{B}(\ast) \Rightarrow F(f)(s) \in \sem{\psi}_{B}(\ast)} \, ,
\end{equation*}
which combines implication with the possible restrictions on views and is akin to Kripke semantics.
Thus, we obtain the negation-free fragment of intutitionistic propositional logic, cf.~\cite[Sec.~4.1]{Pym02:SemanticsProofTheory}.

\section{Sheaves for Separation Logic}
\label{sec:internallogic}
In this section, we assume that the~$(\CC, J)$ has the structure of a (symmetric) monoidal category with tensor~$\otimes$ and unit object~$I$. 
We identify conditions on this structure which enable us to extend of the internal logic of~$\Sh[1]{\CC, J}$ at a resource sheaf~$F$ by separating connectives.
Informally, we require the following:
\begin{itemize}
\item 
the monoidal structure of~$(\CC, J)$ \emph{lifts} to Day convolution on sheaves, and
\item 
$F$ carries the structure of a monoid object for Day convolution in~$\Sh[1]{\CC, J}$.
\end{itemize}
Under these conditions, separating connectives arise a uniform manner independent of the resource sheaf under consideration (i.e.~the model structure of the ensuing separation logic). 
We begin by making the necessary conditions on~$(\CC, J)$ precise in the following.

\begin{definition}
\label{def:day-stable}
A monoidal site~$(\CC, J)$ is \emph{Day-stable} if the following conditions are obtained.
\begin{enumerate}
\item\label{it:day} 
Day convolution is closed under sheaves, i.e.~$F\Day G\in\Sh[1]{\CC, J}$ for any~$F,G\in\Sh[1]{\CC, J}$.
That is,~$\Day$ restricts to tensor~$\DaySH$ on~$\Sh[1]{\CC, J}$ along~$\Sh[1]{\CC, J}\hra\PSh{\CC}$.
\item 
$\DaySH$ preserves regular monomorphisms (i.e. subobjects) in both arguments.
\item 
For all $A \in \CC$, the domain functor $\sliceDom[A] \colon \slice{\CC}{A} \to \CC$ is lax monoidal.
\end{enumerate}
\end{definition}

To the best of our knowledge, it is an open problem to determine exact conditions on~$\tens$ which ensure that condition~\cref{it:day} is attained.
We only note that~$\Day$ is closed under sheaves if the sheafification functor~$a$ is \emph{strong} monoidal,
i.e.~$a(F\Day G)\cong a(F)\DaySH a(G)$.
In this case, Day's Reflection Theorem~\cite{Day72} ensures that~$\Sh[1]{\CC, J}$ is moreover closed.
Preservation of monomorphisms (i.e. subsheaves) is not automatic, but simplifies in case sheafification is strong monoidal:
it is then enough to check that~$\Day$ is preserves regular monos since sheafification is left exact.
The final condition is a technical condition that ensures we can `lift' convolution to predicates over a resource sheaf.
We leave it to future work to identify relaxations on these conditions. 

Hereafter, we assume that~$(\CC, J)$ is Day-stable and~$F\in\Sh[1]{\CC, J}$ carries the structure of a monoid object for~$\DaySH$ with multiplication~$\mu$ and unit~$\eta$.
This assumption enables us to apply convolution to predicates on a resource sheaf~$F$ and, in particular,  
we will construct a morphism
\begin{equation*}
\alpha\colon \Pred_{F, 0}\DaySH \Pred_{F, 0}\to \Pred_{F\DaySH F, 0}
\end{equation*}
expressing how predicates over individual resources can be combined into a predicate over their joint composition.
To this end, we note that our universe hierarchy is closed under Day convolution:

\begin{lemmaE}[][category=internal-logic]
For any sheaf~$F\in\Sh[1]{\CC, J}$, we have ~$F\DaySH F\in\Sh[1]{\CC, J}$.
\end{lemmaE}
\begin{proofE}
Let ~$\tens$ be the monoidal product on ~$\CC$.
We show that ~$(F\DaySH F)(A) \in \Set_1$ and ~$(f, s, t)\in \Set_0$ for all ~$A\in\CC$ and ~$(f, s, t) \in (F\DaySH F)(A)$.
	
We have that ~$(F \DaySH F)(A)$ is defined as the colimit
\begin{equation*}
		(F\DaySH F)(A) := \int^{B, C \in \CC} \mor{\CC}(A, B\cdot C) \times F(B)\times F(C)
\end{equation*}
As ~$\CC$ is small, we have that ~$\mor{\CC}(A, B\cdot C) \in \Set_0$. 
	
By assumption ~$F(B), F(C) \in \Set_1$, so ~$\mor{\CC}(A, B\cdot C)\times F(B)\times F(C)\in \Set_1$.
As the index of the colimit is in ~$\Set_1$ and universes are closed under (co)limits indexed by objects in the same universe, we have that ~$(F \DaySH F)(A) \in \Set_1$.
	
Now, elements in ~$(F\DaySH F)(A)$ are equivalence classes represented as tuples 
\begin{equation*}
	(f\in \CC(A, B\cdot C), s\in F(B), t\in F(C))
\end{equation*}
All indices are in ~$\Set_0$ so the tuple is in ~$\Set_0$.
\end{proofE}

We now turn to the categorical constructions.
The (external) monoid structure on~$F$ is not sufficient to interpret separating conjunction internally. 
In particular, certain resource sheaves (including the sheaf of strict memory) do not carry a monoid product for Day convolution.
This observation motivates the development of a monoidal structure on predicates which is independent from the monoidal structure on~$(\CC, J)$. 
To construct this, we must first formalise the notion of gluing predicate data across covers; a role played by matching objects and their amalgamation.
\begin{definition}
    Given a site ~$(\CC, J)$, let $F$ be a sheaf on $\Cat{C}$.
    For all ~$A \in \Cat{C}$, all covers ~$ S = \{f\colon \dom f\to A\} \in J(A)$, and all pullbacks
    \begin{equation*}
        \begin{tikzcd}
            \dom f \times_A \dom g \arrow[r, "p_f"] \arrow[d, "p_g"] & \dom f \arrow[d, "f"]\\
            \dom g \arrow[r, "g"] & c
        \end{tikzcd}
    \end{equation*}
    with ~$f, f \in S$, we define the \emph{matching object} ~$\MatchObject{F}{A}{\dom f}{\dom g}$ as the pullback object

    \begin{equation*}
        \begin{tikzcd}
            \MatchObject{F}{A}{\dom f}{\dom g} \arrow[r, "\pi_f"] \arrow[d, "\pi_g"] & F(\dom f) \arrow[d, "F(p_f)"]\\
            F(\dom g) \arrow[r, "F(p_g)"] & F(\dom f\times_A \dom g)
        \end{tikzcd}
    \end{equation*}
    explicitly defined as
    \begin{equation*}
        \MatchObject{F}{A}{\dom f}{\dom g} = \{(s_f, s_g) \in F(\dom f)\times F(\dom g) \ | \ F(p_f)(s_f) = F(p_g)(s_g)\}
    \end{equation*}
\end{definition}
The above definition extracts the subset of section pairs that agree on their overlap.
This forms the building block for assembling matching families across the entire cover.

\begin{example}
    Let ~$M\colon \op{\caL}\to \Set$ be the memory sheaf. 
    
    \noindent For the memory region ~$U := \{x_1, x_2, x_3\}$ and the cover ~$\{U_1 := \{x_1, x_2\}, U_2 := \{x_2, x_3\}\}$, given the memory states 
    \begin{equation*}
    	\sigma_1 := \{ x_1\mapsto 7, x_2\mapsto 3\} \in M(U_1) \quad \text{and} \quad \sigma_2 := \{ x_2\mapsto 3, x_3 \mapsto 9\}
    \end{equation*}
    
    \noindent the pair ~$(\sigma_1, \sigma_2)$ is an element in the matching object ~$\MatchObject{M}{U}{U_1}{U_2}$.

    Given the memory state ~$\sigma_2^\prime := \{x_2\mapsto -1, x_3\mapsto 9\}$, the pair ~$(\sigma_1, \sigma_2^\prime)$ is not an element in the matching object.
\end{example}
To glue together all compatible section pairs across a cover, we must consider matching families for all finite covers and identify those that agree under refinement. 
This leads us to define the presheaf of matching objects.

\begin{definition}[Matching Object Presheaf]
\label{def:MOpsh}
    Given a site ~$(\CC, J)$, let ~$F$ be a sheaf on ~$\CC$.
    The \emph{matching object presheaf} is the functor
    \begin{equation*}
        \MatchSh{F}\colon \op{\CC}\to\Set
    \end{equation*}
    defined for ~$A\in \Cat{C}$ as the filtered colimit
    \begin{equation*}
        \MatchSh{F} := \lim_{\substack{\rightarrow\\ S\in J(A)}} \coprod_{\dom f, \dom g\in S} \MatchObject{F}{A}{\dom f}{\dom g}
    \end{equation*}
This colimit is taken over all covering sieves ~$S \in J(A)$, with representatives identified up to refinement, ensuring that matching families over different covers represent the same element whenever they agree on a common refinement.

    An element of ~$\MatchSh{F}(A)$ is then an equivalence class with representatives

    \begin{equation*}
        [S, \{(s_f, s_g)\in \MatchObject{F}{A}{\dom f}{\dom g}\}_{f, g\in S} ]
    \end{equation*}
    where ~$ S = \{f\colon \dom f\to A\}$ is a covering sieve in ~$J(A)$. 
    
    Two representatives ~$[S, \{(s_f, s_g)\}_{f, g\in S}], [S^\prime, \{(s_f^\prime, s_g^\prime)\}_{f, g\in S^\prime}]$ are equivalent as 
    \begin{equation*}
    	[S, \{(s_f, s_g)\}] \sim [S^\prime, \{s_f^\prime, s_g^\prime\}] 
    \end{equation*}
    if-and-only-if there exists a covering sieve ~$T\subseteq S\cap S^\prime$, such that the matching families agree on $T$ (the common refinement) after pulling back.

    Then, for any ~$h\colon B\to A$ in ~$\Cat{C}$
    \begin{equation*}
        \MatchSh{F}(h)\colon \MatchSh{F}(A)\to \MatchSh{F}(B)
    \end{equation*}
    is defined for a representative ~$[S, \{(s_f, s_g)\}_{f, g\in S}]$ with ~$S = \{f\colon \dom f\to A\}$ as follows:
    \begin{enumerate}
        \item pull back ~$S$ along ~$h$:
        \begin{equation*}
            h^*S = \{\pi_f\colon \dom f\times_A B \to \dom f\} \in J(B)
        \end{equation*}
        \item for each ~$(f, g)\in S$ we can pull back each pair ~$(s_f, s_g)\in \MatchObject{F}{A}{\dom f}{\dom g}$ to get a pair
        \begin{equation*}
            (s_{f|B}, s_{g|B}) \in \MatchObject{F}{B}{\dom f\times_A B}{\dom g\times_A B}
        \end{equation*}
        with ~$s_{k|B} = F(\pi_{\dom k}\colon \dom k\times_A B\to \dom k)(s_k)$ with ~$k=\{f, g\}$.
        \item Define ~$\MatchSh{F}(h)([S, \{(s_f, s_g)\}_{f, g\in S}]) := [h^*S, \{(s_{f|B}, s_{g|B})\}_{f, g\in S}] \in \MatchSh{F}(B)$
    \end{enumerate}
\end{definition}

Note that we quotient over common refinements to ensure that matching families representing the same global object are identified, regardless of the cover they are described on. 
In poset sites, this means that the colimit selects the unique amalgamating section defined on the least upper bound ~$U_1\cup U_2$, rather than an arbitrary larger domain. 
Without this quotienting, the uniqueness of amalgamation would be violated.

\begin{propositionE}[][category=internal-logic]
    The matching object presheaf as defined in Definition~\ref{def:MOpsh} is well-defined.
\end{propositionE}
\begin{proofE}
    Given a site ~$(\CC, J)$, let ~$F$ be a sheaf on ~$\Cat{C}$. We will show that ~$\MatchSh{F}$
    \begin{enumerate}
        \item \textbf{is well-defined:} for all morphisms ~$h\colon B\to A$ in ~$\Cat{C}$ and all representatives 
        \begin{equation*}
            t := [S, \{(s_f, s_g \in \MatchObject{F}{A}{\dom f}{\dom g})\}_{f,g\in S}] \in \MatchSh{F}(A)
        \end{equation*}
        for the covering sieve ~$ S = \{f\colon \dom f\to A\}\in J(A)$ we have to show that 
        \begin{equation*}
            \MatchSh{F}(h)([S, \{(s_f, s_g)\}_{f, g\in S}]) \in \MatchSh{F}(B)
        \end{equation*}
        By definition of ~$J$ being a cartesian Grothendieck topology, we have that ~$f^*S \in J(d)$. 
        Now, denote 
        \begin{equation*}
            P := (\dom f\times_A B)\times_B (\dom g\times_A B) \cong (\dom f\times_A \dom g)\times_B B \quad \text{for any} \ f, g \in S
        \end{equation*}
        We have to show that for ~$(s_f, s_g) \in \MatchObject{F}{A}{\dom f}{\dom g}$ that 
        \begin{align*}
        	&\big( 
        	s_{f|B} = F(\pi_{\dom f} \colon \dom f \times_A B \to \dom f)(s_f), \\
        	&\quad s_{g|B} = F(\pi_{\dom g} \colon \dom g \times_A B \to \dom g)(s_g) 
        	\big) 
        	\in \MatchObject{F}{B}{\dom f \times_A B}{\dom g \times_A B}
        \end{align*}
        which amounts to showing that
        \begin{equation*}
            F(\pi_{\dom f}\circ \pi_1^\prime\colon P\to \dom f\times_A B)(s_f) = F(\pi_{\dom g}\circ \pi_2^\prime\colon P\to \dom g\times_A B)(s_g)
        \end{equation*}
        Because of the isomorphism in ~$P$, we have that
        \begin{align*}
            \pi_{\dom f}\circ \pi_1^\prime = p_{\dom f}\colon \dom f\times_A \dom g\to \dom f\circ \pi_1\colon P\to \dom f\times_A \dom g \\
            \pi_{\dom g}\circ \pi_2^\prime = p_{\dom g}\colon \dom f\times_A \dom g\to \dom g\circ \pi_1
        \end{align*}
        and as we assume that 
        \begin{equation*}
            F(p_{\dom f})(s_f) = F(p_{\dom g})(\dom g)
        \end{equation*}
        we have that 
        \begin{equation*}
            F(p_{\dom f}\circ \pi_1)(s_f) = F(p_{\dom g}\circ \pi_1)(s_g)
        \end{equation*}
        and therefore
        \begin{equation*}
            F(\pi_{\dom f}\circ \pi_1^\prime)(s_f) = F(\pi_{\dom g}\circ \pi_2^\prime)(s_g)
        \end{equation*}
        showing that for any ~$(s_f, s_g)\in \MatchObject{F}{A}{\dom f}{\dom g}$ that ~$(s_{f|B}, s_{g|B}) \in \MatchObject{F}{B}{\dom f\times_A B}{\dom g\times_A B}$, and therefore ~$\MatchSh{F}(h)(t) \in \MatchSh{F}(B)$,
        \item \textbf{is functorial:}
        \begin{itemize}
            \item For all ~$A \in \Cat{C}$, ~$\MatchSh{F}(\id_A) = \id_{\MatchSh{F}(A)}$ as pulling back along the identity does nothing.
            \item For all ~$h\colon B\to A, j\colon C\to B$ in ~$\Cat{C}$ we have ~$\MatchSh{F}(j\circ h) = \MatchSh{F}(j)\circ \MatchSh{F}(h)$ because pulling back covers is functorial: for any ~$S\in J(A)$ we have that
            \begin{equation*}
                (h\circ j)^*S = j^*(h^*S)
            \end{equation*}
            and pulling back sections is functorial: by functoriality of ~$F$, we have that
            \begin{equation*}
                F(\pi_j\colon (\dom f\times_A B)\times_B C\to \dom f\times_A B)\circ F(\pi_h\colon \dom f\times_A B\to B) = F(\pi_h\circ \pi_j)
            \end{equation*}
        \end{itemize},
        \item \textbf{respects equivalence classes:} for all ~$A\in\Cat{C}$ and all 
        \begin{equation*}
            x_S = [S, \{(s_f, s_g)\}_{f,g\in S}], x_R = [R, \{(r_f, r_g)\}_{f, g\in R}]\in \MatchSh{F}(A)
        \end{equation*} 
        such that ~$x_S\sim x_R$ we need to show that for all ~$k\colon B\to A$ in $\Cat{C}$ we have that
        \begin{equation*}
            \MatchSh{F}(k)(x_S) \sim \MatchSh{F}(k)(x_R)
        \end{equation*}
        Let ~$T$ be the common refinement of ~$R$ and ~$S$:
        \begin{equation*}
            T = \{h\colon \dom h\to A\} \subseteq R\cap S \ \text{such that} \ \forall h,j\in T \ . \ (s_h, s_j) = (r_h, r_j)
        \end{equation*}
        Pulling back along ~$k$, for each ~$h\in T$ we get
        \begin{itemize}
            \item a section ~$s_h^\prime \in F(\dom h\times_A B\to \dom h)(s_h)$,
            \item a section ~$r_h^\prime \in F(\dom h\times_A B\to \dom h)(r_h)$.
        \end{itemize}
        By functoriality and ~$s_h = r_h$ it holds that 
        \begin{equation*}
            F(\dom h\times_A B\to \dom h)(s_h) = F(\dom h\times_A B\to \dom h)(r_h) \Rightarrow s^\prime_h = r^\prime_h
        \end{equation*}
        so for
        \begin{equation*}
        k^*T = \{h^\prime\colon \dom_h\times_A B\to \dom h\}_{h\in k^*T}
        \end{equation*}
        we get
        \begin{equation*}
            \forall h, j \in k^*T \ . \ (s_h^\prime, s_j^\prime) = (r_h^\prime, r_j^\prime)
        \end{equation*}
        showing that ~$\MatchSh{F}(k)(x_S) = \MatchSh{F}(k)(x_R)$.
    \end{enumerate}
\end{proofE}
Having shown that the matching object construction yields a well-defined presheaf, we now define the map from matching families to global elements. 
This operator formalises how a compatible family is glued into a single section.

\begin{definition}[Amalgamation Operator]
\label{def:amalgamation}
    Given a site ~$(\CC, J)$, for any sheaf ~$F$ on ~$\Cat{C}$ and the presheaf ~$\MatchSh{F}$ as defined in Definition~\ref{def:MOpsh}, there exists a morphism of presheaves
    \begin{equation*}
        \amalgmap{F}\colon \MatchSh{F}\Rightarrow F
    \end{equation*}
    with for all ~$A\in\Cat{C}$ components
    \begin{equation*}
        \amalgmap{F}_A\colon \MatchSh{F}(A)\to F(A)
    \end{equation*}
    sending a representative
    \begin{equation*}
        [S = \{f\colon \dom f\to A\}\in J(A), \{(s_f, s_g)\}_{f, g\in S}]\in \MatchSh{F}(A)
    \end{equation*}
    to the unique amalgamation
    \begin{equation*}
        \amalgmap{F}_A([S, \{(s_f, s_g)\}_{f, g\in S}]) := s\in F(A)
    \end{equation*}
    such that 
    \begin{equation*}
        F(f)(A) = s_f \quad \text{for all} \ f\in S
    \end{equation*}
\end{definition}

\begin{propositionE}[][category=internal-logic]
For a site ~$(\Cat{C}, J)$ and a sheaf ~$F$ on $\Cat{C}$ the map ~$\amalgmap{F}$ as defined in Definition~\ref{def:amalgamation} is well-defined.
\end{propositionE}
\begin{proofE}
    For ~$A\in \Cat{C}$, Let ~$t\in \MatchSh{F}(A)$ be an equivalence class represented as
    \begin{equation*}
        t := [S = \{f\colon \dom f\to A\}\in J(A), \{(s_f, s_g) \in \MatchObject{F}{A}{\dom f}{\dom g}\}_{f, g\in S}]
    \end{equation*}
    The pairwise compatibility of all ~$(s_f, s_g)$ implies the collection ~$\{s_f\}_{f\in S}$ is a compatible family for the cover ~$S$.
    Because ~$F$ is a sheaf, there exists a unique amalgamation ~$s\in F(A)$ of ~$\{s_f\}_{f\in S}$.
    Suppose ~$t$ can also be represented as

    \begin{equation*}
        t := [S^\prime = \{h\colon \dom h\to A\}_{h\in S^\prime}\in J(A), \{(s_h^\prime, s_j^\prime) \in \MatchObject{F}{A}{\dom h}{\dom j}\}_{h,j\in S^\prime}]
    \end{equation*}
    yielding a matching family ~$\{s_h^\prime\}_{h\in S^\prime}$ with amalgamation ~$s^\prime \in F(A)$.
    If 
    \begin{equation*}
        [S, \{(s_f, s_g)\}_{f, g\in S}] \sim [S^\prime, \{(s_h^\prime, s_j^\prime)\}_{h, j\in S^\prime}]
    \end{equation*}
    there exists a common refinement ~$T \subseteq S^\prime\cap S$ such that the matching family ~$\{s_h^\prime\}_{h\in S^\prime}$ and ~$\{s_f\}_{f\in S}$ restrict to the same family over ~$T$.
    Then, by uniqueness of the amalgamation, we have that
    \begin{equation*}
        s = s^\prime \in F(A)
    \end{equation*}
    showing that ~$\amalgmap{F}_A$ is well defined for all $t\in\MatchSh{F}(A)$.
\end{proofE}
The amalgamation operator respects equivalence classes of matching families and yields a unique global element due to the sheaf condition. 
In fact, this map realises an isomorphism between the matching object presheaf and the sheaf itself.

\begin{theoremE}[][category=internal-logic]
\label{thm:amalg-iso}
    Given a site ~$(\CC, J)$, let ~$F$ be a sheaf on ~$\Cat{C}$. There is a natural isomorphism
    \begin{equation*}
        \amalgmap{F}\colon \MatchSh{F}\xrightarrow{\sim} F
    \end{equation*}
    where ~$\amalgmap{F}$ is defined in Definition~\ref{def:amalgamation}.
\end{theoremE}
\begin{proofE}
    We define an inverse
    \begin{equation*}
        \theta^F: F\Rightarrow \MatchSh{F}
    \end{equation*}
    with components
    \begin{equation*}
        \theta_A^F\colon F(A)\to \MatchSh{F}(A)
    \end{equation*}
    for every ~$A\in \Cat{C}$.
    Given ~$s\in F(A)$, we can choose any covering sieve ~$S = \{f\colon \dom f\to A\}\in J(A)$ and restrict ~$s$ along each ~$f$ to get
    \begin{equation*}
        \{s_f\in F(f)(s)\}_{f\in S}
    \end{equation*}

    Now, for any ~$f, g\in I$, we consider the pullback ~$\dom f\times_A \dom g$ with projections
    \begin{equation*}
        p_f\colon \dom f\times_A \dom g\to \dom f \quad p_g\colon \dom f\times_A \dom g\to \dom g
    \end{equation*}

    Because we have that
    \begin{equation*}
        F(p_f)(s_f) = F(p_f\circ \dom f)(s) = F(p_g\circ \dom g)(s) = F(p_g)(s_g)
    \end{equation*}
    we get
    \begin{equation*}
        \{(s_f, s_g) \in \MatchObject{F}{A}{\dom f}{\dom g}\}_{f, g\in S}
    \end{equation*}
    so we define
    \begin{equation*}
        \theta^F_A := [S, \{(s_f, s_g)\}_{f, g\in S}] \in \MatchSh{F}(A)
    \end{equation*}

    By construction ~$\theta^F_A$ maps ~$s$ to a matching family and ~$\amalgmap{F}$ maps this to the unique amalgamation, which must be ~$s$:
    \begin{equation*}
        \amalgmap{F}_A(\theta_A^F(s)) = s
    \end{equation*}

    Now, suppose ~$t = [S, \{(s_f, s_g)\}_{f, g\in S}]$ such that
    \begin{equation*}
        \theta_A^F(\amalgmap{F}_A(t)) = [S^\prime, \{(s_h^\prime, s_j^\prime)\}_{h, j\in S^\prime}]
    \end{equation*}

    Because both matching families ~$\{(s_f, s_g)\}_{f, g\in S}, \{(s_h^\prime, s_j^\prime)\}_{h, j\in S^\prime}$ have the same amalgamation, there must be a common refinement ~$T\subseteq S^\prime\cap S$ over which both families agree, meaning that
    \begin{equation*}
        [S, \{(s_f, s_g)\}_{f, g\in S}]\sim [S^\prime, \{(s_h^\prime, s_j^\prime)\}_{h, j\in S^\prime}]
    \end{equation*}
    and therefore that
    \begin{equation*}
        \theta^F_A(\amalgmap{F}_A(t)) = t
    \end{equation*}
    proving that ~$\MatchSh{F}\cong F$.

    The isomorphism is natural: Let ~$k\colon B\to A$ in $\Cat{C}$. Given ~$[S, \{(s_f, s_g)\}_{f, g\in S}] \in \MatchSh{F}(A)$ the unique amalgamation of the family ~$k^*S$ with sections ~$F(\dom f\times_A B\to \dom f)(s_f)$ is the same as restricting the amalgamation of ~$[S, \{(s_f, s_g)\}]$ along ~$f$:
    \begin{equation*}
        F(k)(\amalgmap{F}_A([S, \{(s_f, s_g)\}_{f, g\in S}])) = \amalgmap{F}_B(\MatchSh{F}(k)([S, \{(s_f, s_g)\}_{f, g}]))
    \end{equation*}
\end{proofE}

\begin{corollary}
    Let ~$F\colon \op{\Cat{C}}\to \Set$ be a presheaf on a site ~$(\Cat{C}, J)$. 
    
    \noindent The matching object presheaf ~$\MatchSh{F}$ as defined in Definition~\ref{def:MOpsh} is a sheaf if-and-only-if ~$F$ is a sheaf.
\end{corollary}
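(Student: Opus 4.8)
The plan is to prove the two implications separately, leaning on the natural isomorphism of \cref{thm:amalg-iso} for the backward direction and on the unit map $\theta^{F}\colon F\to\MatchSh{F}$ for the forward direction. For the implication ``$F$ a sheaf $\Rightarrow \MatchSh{F}$ a sheaf'' there is essentially nothing to do: \cref{thm:amalg-iso} already supplies a natural isomorphism $\amalgmap{F}\colon\MatchSh{F}\to F$ of presheaves whenever $F$ is a sheaf, and being a sheaf is a property stable under isomorphism in $\PSh{\CC}$ (any presheaf isomorphism preserves and reflects the amalgamation condition). Hence $\MatchSh{F}$ inherits the sheaf condition from $F$, and this direction is purely formal.

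For the converse, ``$\MatchSh{F}$ a sheaf $\Rightarrow F$ a sheaf'', I would first observe that the assignment sending a section $s\in F(A)$ to the class $[S,\{(s_{f},s_{g})\}]$ of its restrictions along a cover — which is exactly the map $\theta^{F}$ constructed in the proof of \cref{thm:amalg-iso} — is available for an \emph{arbitrary} presheaf $F$, since it uses only functoriality of $F$ and the fact that restrictions automatically land in the matching objects $\MatchObject{F}{A}{\dom f}{\dom g}$. The strategy is then to show that, under the hypothesis that $\MatchSh{F}$ is a sheaf, this unit is a natural isomorphism, from which $F\cong\MatchSh{F}$ is a sheaf. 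Concretely, I would unpack $\theta^{F}_{A}$ at each $A\in\CC$: injectivity encodes the \emph{uniqueness} half of the sheaf condition for $F$, while surjectivity onto the filtered colimit $\MatchSh{F}(A)$ encodes the \emph{existence} of amalgamations. To extract both from the sheaf condition on $\MatchSh{F}$, I would exploit that a matching family for $F$ over a cover $S$ is literally an element of $\MatchSh{F}(A)$, so the amalgamation condition of $\MatchSh{F}$ produces a canonical witness that I then aim to transport back through $\theta^{F}$ to a genuine section of $F$.

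The hard part will be exactly this converse, and it is genuinely delicate rather than formal. The construction $\MatchSh{(-)}$ of \cref{def:MOpsh} is a refinement colimit of matching families, and the quotient by common refinements is precisely the mechanism that can ``create'' gluings not present in $F$; a priori one worries that $\MatchSh{F}$ could satisfy the sheaf condition while $F$ only admits restrictions and uniqueness but not all amalgamations. The crux is therefore to argue that when $\MatchSh{F}$ glues, the resulting global element is already represented by a section of $F$, so that surjectivity of $\theta^{F}$ holds. I expect this to require working directly with the equivalence relation defining $\MatchSh{F}(A)$ and with how pullback along a cover commutes with the colimit, rather than with any soft isomorphism-invariance argument: it is in this existence step, and not in the backward implication, that the real content of the statement resides.
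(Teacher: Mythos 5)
Your forward direction is correct and is exactly the paper's (implicit) route: \cref{thm:amalg-iso} gives a natural isomorphism $\amalgmap{F}\colon\MatchSh{F}\to F$ once $F$ is a sheaf, and the sheaf condition is invariant under isomorphism of presheaves, so there is nothing more to do there.

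The converse, however, contains a genuine gap, and your own ``hard part'' paragraph pinpoints it: the surjectivity of $\theta^{F}$ that your plan requires is simply false in general, because $\MatchSh{F}$ of \cref{def:MOpsh} is precisely the Grothendieck plus construction $F^{+}$ (a matching family over a sieve, modulo agreement on a common refinement), and $F^{+}$ is a sheaf whenever $F$ is merely \emph{separated}. Concretely, take the site $\pow\{a,b\}$ with the topological coverage and define $F(\emptyset)=\{\ast\}$, $F(\{a\})=F(\{b\})=\{0,1\}$, and $F(\{a,b\})=\{(0,0),(1,1)\}$ with the projections as restrictions. This $F$ is separated but not a sheaf: the compatible pair $(0\text{ on }\{a\},\,1\text{ on }\{b\})$ has no amalgamation. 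Yet $\MatchSh{F}(\{a,b\})\cong F(\{a\})\times F(\{b\})$ has four elements and $\MatchSh{F}$ is (isomorphic to) the product sheaf, hence a sheaf. So under your hypothesis the unit $\theta^{F}$ need not be surjective --- the class of that matching family is not in its image --- and your proposed ``crux'' (that every glued element of $\MatchSh{F}$ is already represented by a section of $F$) cannot be argued: elements of $\MatchSh{F}$ are by construction represented by matching families on covers, and the sheaf condition for $\MatchSh{F}$ never forces them to descend to $F$. Injectivity of $\theta^{F}$ can fail too (adjoin to a sheaf a duplicate global section with the same restrictions; the duplicates are identified in the refinement colimit, and $\MatchSh{F}$ remains a sheaf). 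The quotient by common refinements ``creating gluings'' is thus not a worry to be dispelled but the actual mechanism by which the backward implication fails; note also that the paper itself offers no proof of this direction, and cannot, since $\amalgmap{F}$ of \cref{def:amalgamation} is only defined when $F$ is a sheaf. What is true, and what your isomorphism strategy actually proves, is the statement with the backward hypothesis strengthened: $F$ is a sheaf if and only if $\theta^{F}$ is an isomorphism, which is strictly stronger than $\MatchSh{F}$ being a sheaf.
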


\begin{corollary}
    For a sheaf ~$F$ on a site ~$(\Cat{C}, J)$ and the matching object sheaf ~$\MatchSh{F}$ as defined in Definition~\ref{def:MOpsh}, we have that for all ~$A\in \Cat{C}$ we have ~$F(A)\in \Set_i$ if-and-only-if ~$\MatchSh{F}(A)\in \Set_i$.
\end{corollary}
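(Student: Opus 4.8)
The plan is to deduce the statement directly from \cref{thm:amalg-iso}. That theorem supplies a natural isomorphism $\amalgmap{F}\colon \MatchSh{F}\xrightarrow{\sim} F$, and by definition of an isomorphism of presheaves its component at every object $A\in\CC$ is a bijection of sets $\amalgmap{F}_A\colon \MatchSh{F}(A)\to F(A)$. Fixing $A$, I would simply transport the level information across this bijection in both directions; note that naturality plays no role here, since only the pointwise bijections are needed.

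The one foundational fact I would invoke is that each universe $\Set_i$ is closed under isomorphism: whenever two sets are in bijection, one lies in $\Set_i$ exactly when the other does. This is an immediate consequence of the ETCS-style cumulative universe assumptions fixed in \cref{sec:prelim}, under which each $\Set_i$ is closed under the usual set-theoretic constructions and membership at a given level is an isomorphism-invariant property. Concretely, given a bijection $\MatchSh{F}(A)\cong F(A)$ together with either set at level $i$, relabelling elements along the bijection exhibits the other set at the same level.

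Putting these together, for each $A\in\CC$ the bijection $\amalgmap{F}_A$ yields $F(A)\in\Set_i$ if and only if $\MatchSh{F}(A)\in\Set_i$, which is precisely the assertion. I expect no genuine obstacle here, since all the substantive work lies in establishing the isomorphism of \cref{thm:amalg-iso}; the only point worth stating explicitly is that isomorphism-invariance of the universe levels is part of the foundational setup rather than something to be proved, so the argument rests on the universe axioms of \cref{sec:prelim} rather than on any further construction.
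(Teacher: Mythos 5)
Your proposal is correct and coincides with the paper's (implicit) argument: the corollary is stated without separate proof precisely because it follows pointwise from the natural isomorphism $\amalgmap{F}\colon \MatchSh{F}\xrightarrow{\sim} F$ of \cref{thm:amalg-iso}, whose component at each $A\in\CC$ is a bijection $\MatchSh{F}(A)\cong F(A)$. Your one explicit addition — that membership in a universe level $\Set_i$ is isomorphism-invariant — is indeed the right foundational fact to cite, and it is the natural reading of the ETCS-style universe assumptions of \cref{sec:prelim}, where sets are only ever considered up to isomorphism.
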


Recall~\cref{ex:existential-separating} that for now we have only been able to interpret the separating conjunction over 

\noindent ~$\Pred_{F\DaySH F, 0}$, for a suitable site and a resource sheaf $F$.
We would like to obtain a connective that extends our internal logic~\cref{sec:internal-propositional-logic}, which amounts to interpretting the separating conjunction over ~$\Pred_{F, 0}\DaySH \Pred_{F, 0}$.
We use the left kan extension representation to obtain the map
\begin{equation}
    \label{eq:monoidal-fibre}
    \alpha\colon \Pred_{F, 0}\DaySH \Pred_{F,0} \to \Pred_{F\DaySH F, 0}
\end{equation}

\noindent as summarised in the diagram below with $P := \Pred_{F, 0}$ and ~$P_{\DaySH} := \Pred_{F\DaySH F, 0}$.
\begin{equation*}
    \begin{tikzcd}
        & \op{\mathcal{C}} 
          \arrow[dr, bend left=30, "P_{\DaySH}"{name=D, above}] 
          \arrow[dr, bend right=30, "P \DaySH P"{name=U, description}] \\
        \op{(\mathcal{C} \times \mathcal{C})} 
          \arrow[ur, "\op{(\DaySH)}"{name=L, above}] 
          \arrow[rr, swap, "P \boxtimes P"{name=R, below}] 
        & & \Set
        \arrow[Rightarrow, from=U, to=D, "\alpha", shorten <=1pt, shorten >=1pt] 
  \arrow[Rightarrow, from=R, to=L, "\lambda", shorten <=6pt, shorten >=6pt]
      \end{tikzcd}
\end{equation*}

The details are explained in~\cref{apdx:map}.

Now, we can put everything together.

\begin{theorem}
	\label{thm:separation-monoid}
	Let ~$(\CC, J)$ be a site, and let ~$p\colon \Pred \to \Univ$ be the internal predicate fibration in ~$\Int{\Sh{\Cat{C}}}$. Given a monoid ~$(F, \odot, \eta)$ for Day convolution on ~$\Sh{\Cat{C}}$ with ~$F$ a small sheaf, there is an induced map
	\begin{equation*}
		\exists^\odot_0\colon \Pred_{F, 0} \DaySH \Pred_{F, 0} \to \Pred_{\MatchSh{F}, 0}.
	\end{equation*}
	It is constructed as follows:
	\begin{enumerate}
		\item By Theorem~\ref{thm:amalg-iso}, the canonical map ~$F \to \MatchSh{F}$ is an isomorphism, yielding
		\begin{equation*}
			\odot' \colon F \DaySH F \to \MatchSh{F}.
		\end{equation*}
		\item Internalising the sheaves and the map gives
		\begin{equation*}
			\overline{F \DaySH F},\, \overline{\MatchSh{F}}\colon \fOb \to U_0, \quad \overline{\odot'}\colon \fOb \to U_1,
		\end{equation*}
		as in Theorem~\ref{thm:internalising-small-sheaves}.
		\item Then by Theorem~\ref{thm:internal-reindexing}, this induces
		\begin{equation*}
			\exists^{\odot'} \colon \Pred_{F \DaySH F} \to \Pred_{\MatchSh{F}},
		\end{equation*}
		and composing with the map from~\cref{eq:monoidal-fibre} yields ~$\exists^\odot_0$.
	\end{enumerate}
\end{theorem}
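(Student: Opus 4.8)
The plan is to carry out, in order, the three-step assembly already sketched in the statement, verifying at each stage that the hypotheses of the cited results are met and that the types line up. First I would produce the sheaf morphism $\odot'$. Since $F$ is a small sheaf and $(\CC, J)$ is Day-stable (\cref{def:day-stable}), $F \DaySH F$ is again a sheaf, and the element count used in the lemma above (where the same argument is run one universe lower) shows $(F \DaySH F)(A) \in \Set[0]$ for all $A$, so $F \DaySH F$ is a \emph{small} sheaf. By the corollaries to \cref{thm:amalg-iso}, $\MatchSh{F}$ is a sheaf precisely because $F$ is, and $\MatchSh{F}(A) \in \Set[0]$ precisely because $F(A) \in \Set[0]$; hence $\MatchSh{F}$ is small as well. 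Then \cref{thm:amalg-iso} supplies the natural isomorphism $\amalgmap{F} \colon \MatchSh{F} \xrightarrow{\sim} F$, and I set
\begin{equation*}
  \odot' \coloneqq (\amalgmap{F})^{-1} \comp \odot \colon F \DaySH F \to \MatchSh{F},
\end{equation*}
a morphism of small sheaves, being the composite of the monoid multiplication with an isomorphism.

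Next I would internalise. Applying \cref{thm:internalising-small-sheaves} to the small sheaves $F \DaySH F$ and $\MatchSh{F}$ yields global sections $\overline{F \DaySH F}, \overline{\MatchSh{F}} \colon \TSheaf \to \Univ_0$, and applying it to the morphism $\odot'$ yields an internal natural transformation $\overline{\odot'} \colon \TSheaf \to \Univ_1$ whose source and target under $s^U, t^U$ are exactly $\overline{F \DaySH F}$ and $\overline{\MatchSh{F}}$. This is precisely the data $X = \overline{F \DaySH F}$, $Y = \overline{\MatchSh{F}}$, $f = \overline{\odot'}$ required by \cref{thm:internal-reindexing}, which therefore produces the internal functor $\exists^{\odot'} \colon \Pred_{F \DaySH F} \to \Pred_{\MatchSh{F}}$ between the fibres of \cref{def:internal-fibre}. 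Taking its object component gives $(\exists^{\odot'})_0 \colon \Pred_{F \DaySH F, 0} \to \Pred_{\MatchSh{F}, 0}$.

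Finally I would precompose with the comparison map $\alpha \colon \Pred_{F, 0} \DaySH \Pred_{F, 0} \to \Pred_{F \DaySH F, 0}$ of \cref{eq:monoidal-fibre}, whose construction via the left Kan extension presentation of $\DaySH$ is recorded in \cref{apdx:map}. The desired morphism is then the composite
\begin{equation*}
  \exists^\odot_0 \coloneqq (\exists^{\odot'})_0 \comp \alpha \colon \Pred_{F, 0} \DaySH \Pred_{F, 0} \to \Pred_{\MatchSh{F}, 0},
\end{equation*}
which has exactly the claimed type, completing the construction.

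The routine parts here are the smallness bookkeeping and the type checks; the genuine content sits in the two black boxes I am drawing on. The step I expect to be the crux is the map $\alpha$ of \cref{eq:monoidal-fibre}: by the universal property of $\DaySH$ as a left Kan extension, a natural transformation out of $\Pred_{F, 0} \DaySH \Pred_{F, 0}$ corresponds to one out of the external tensor $\Pred_{F, 0} \boxtimes \Pred_{F, 0}$ that is coherent with the convolution structure, and one must verify that the resulting map factors through $\Pred_{F \DaySH F, 0}$ rather than merely landing in predicates over $F \DaySH F$ regarded as a presheaf. A secondary subtlety is confirming that internalisation is strictly compatible with the endpoints of $\odot'$, namely that $\overline{\odot'}$ really has $\overline{F \DaySH F}$ and $\overline{\MatchSh{F}}$ as its source and target, so that $\exists^{\odot'}$ is defined between the intended fibres; this is exactly where the naturality clause of \cref{thm:internalising-small-sheaves} is needed.
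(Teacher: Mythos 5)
Your proposal coincides with the paper's own argument: the proof there is precisely the three-step assembly you carry out, namely $\odot' = (\amalgmap{F})^{-1} \comp \odot$ obtained from \cref{thm:amalg-iso}, internalisation of the two sheaves and of $\odot'$ via \cref{thm:internalising-small-sheaves}, and the internal functor $\exists^{\odot'}$ from \cref{thm:internal-reindexing} precomposed with the Kan-extension comparison map $\alpha$ of \cref{eq:monoidal-fibre}, in the same composition order. Your additional bookkeeping — smallness of $F \DaySH F$ and $\MatchSh{F}$ via the size lemma and the corollaries to \cref{thm:amalg-iso}, and the appeal to Day-stability so that $F \DaySH F$ is a sheaf and $\alpha$ exists — only makes explicit the standing assumptions of the section, so nothing is missing.
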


\subsection{Interpreting the Separating Conjunction}
\label{sec:intepreting-separating-conjunction}

Let ~$(\CC, J)$ be a Day-stable site. Given the small sheaf ~$F$ on ~$\CC$, let ~$(F, \odot, \mathrm{emp})$ be a resource monoid for Day convolution on ~$\Sh{\CC, J}$. Given the internal bifibration ~$p\colon \Pred\to\Univ$, the internal logic as given in~\cref{sec:internal-propositional-logic} can be extended with a separating conjunction connective as follows:

\begin{equation*}
	P, Q ::= \dots \mid P * Q,
\end{equation*}
with semantics defined pointwise: for ~$A \in C$,
\begin{align*}
	\llbracket P * Q \rrbracket_0(A) &= \llbracket P \rrbracket_0(A) \star \llbracket Q \rrbracket_0(A)
\end{align*}

\noindent where 
\begin{equation*}
	\star\colon \Pred_{F, 0}\DaySH \Pred_{F, 0}\to \Pred_{F, 0}
\end{equation*}

\noindent is defined as the composition ~$\star := \exists^{\amalgmap{M}}_0 \circ \exists^\odot_0$, given the map ~$\exists^\odot_0$ as defined in~\cref{thm:separation-monoid} and the map ~$\exists^{\amalgmap{F}}_0$ induces by internal existential quantification along the isomorphism ~$\amalgmap{F}\colon \MatchSh{F}\cong F$.

Now, we will give an example how two models of the separating conjunction can be intepreted in the same framework.

\begin{example}
    Let ~$M$ be the partial memory sheaf from~\cref{ex:partial-finitary-memory}, and suppose ~$M$ carries a resource monoid structure ~$(M, \odot, \mathrm{emp})$ for Day convolution, as in~\cref{ex:partial-memory-monoid}. 
    Given predicates ~$P_1, P_2 \in \Pred_{M, 0}(U)$ over a region ~$U \in \caL$, the separating conjunction ~$P_1 * P_2$ is defined abstractly as the composite
    \begin{equation*}
    P_1 * P_2 := \exists_\odot\big(\alpha(P_1 \DaySH P_2)\big),
    \end{equation*}
    where:
    \begin{itemize}
        \item $P_1 \DaySH P_2 \in (\Pred_{M,0} \DaySH \Pred_{M,0})(U)$ describes pairs of local predicates $(m_1, m_2)$ over some cover $U_1, U_2$ of $U$;
        \item $\alpha$ transports such pairs to a predicate over $M \DaySH M$, evaluated at the composed resource $m = \odot(m_1, m_2)$;
        \item $\exists_\odot$ reindexes this predicate along the monoid multiplication $\odot \colon M \DaySH M \to M$, yielding a predicate over $M(U)$.
    \end{itemize}
    
    Unfolding this definition, we obtain:
    \begin{equation*}
    \llbracket P_1 * P_2 \rrbracket_0(U) = \left\{ m \in M(U) \,\middle|\,
    \begin{aligned}
        &\exists\, U_1, U_2 \text{ with } U = U_1 \cup U_2,\ \exists\, m_1 \in P_1(U_1),\ m_2 \in P_2(U_2), \\
        &\text{such that } \odot(m_1, m_2) = m
    \end{aligned}
    \right\}.
    \end{equation*}
    
    If we take $\odot$ to be defined only when $m_1$ and $m_2$ agree on overlap (weak conjunction), this yields:
    \begin{equation*}
    \llbracket P_1 * P_2 \rrbracket_0(U) = \left\{ m \in M(U) \,\middle|\,
    \begin{aligned}
        &\exists\, U_1, U_2 \text{ with } U = U_1 \cup U_2,\ \exists\, m_1 \in P_1(U_1),\ m_2 \in P_2(U_2), \\
        & m_1|_{U_1 \cap U_2} = m_2|_{U_1 \cap U_2},\ \text{and } m = m_1 \cup m_2
    \end{aligned}
    \right\},
    \end{equation*}
    corresponding to partial map semantics.
    
    If instead we take $\odot$ to be defined only when $U_1 \cap U_2 = \emptyset$ (strong conjunction), we obtain:
    \begin{equation*}
    \llbracket P_1 * P_2 \rrbracket_0(U) = \left\{ m \in M(U) \,\middle|\,
    \begin{aligned}
        &\exists\, U_1, U_2 \text{ with } U = U_1 \uplus U_2,\ \exists\, m_1 \in P_1(U_1),\ m_2 \in P_2(U_2), \\
        &\text{and } m = m_1 \cup m_2
    \end{aligned}
    \right\},
    \end{equation*}
    which captures the semantics of disjoint heap separation.
    \end{example}
    
Both forms of separating conjunction—weak and strong—arise from the same resource sheaf $M$ over the same base ~$\caL$. 
The difference lies in the choice of monoid structure $\odot$ on $M$, showing that the connective's semantics is parametrised by how resources are allowed to compose, not by changing the underlying logical universe.

\subsection{Model Specific Connectives}
We conclude by illustrating how various model specific atom are handled in our framework. 

\begin{example}
\label{ex:strict-pointsto}
In the internal logic of~$\Sh{\caL}$, the \emph{strict} points-to formula ~$t_1 \mapsto t_2$ is interpreted as a predicate over the memory sheaf ~$M$, i.e., as a global section
\begin{equation*}
\llbracket t_1 \mapsto t_2 \rrbracket \colon \TSheaf \to \Pred_{M, 0}
\end{equation*}
with components at stage ~$U \in \caL$ defined by
\begin{equation*}
\llbracket t_1 \mapsto t_2 \rrbracket_U(*) := 
\left\{ \sigma \in M(U) \ \middle|\ t_1 \in U \Rightarrow \sigma(t_1) = t_2 \right\}.
\end{equation*}
This encodes the standard semantics of the \emph{strict} points-to connectives: 
if the address~\mbox{$t_1$} is in scope, its value must match ~\mbox{$t_2$} and, otherwise, nothing is required.
\end{example}

\begin{example}
\label{ex:nonstrict-pointsto}
Let ~$x \in \Loc$ and ~$v \in \Val$. 
In the internal logic of~$\Sh{\caL}$, the \emph{non-strict} points-to predicate is interpreted as a morphism
\begin{equation*}
\llbracket x \hookrightarrow v \rrbracket\colon \TSheaf \to \Pred_{M', 0}
\end{equation*}
for the sheaf~$M'(U) := [U, \Val + \bot]$. Its component at stage~$U$ is defined by
	\begin{equation*}
	\llbracket x \hookrightarrow v \rrbracket_U(*) := 
	\left\{ \sigma \in M'(U) \ \middle| \ 
	x \in U \Rightarrow \left( x \in \mathrm{dom}(\sigma) \wedge \sigma(x) = v \right)
	\right\}.
	\end{equation*}
	This captures the \emph{non-strict} semantics: outside the domain, no constraint is imposed.
\end{example}

\noindent
\section{A probabilistic separation logic}
\label{sec:probability}
We are going to demonstrate how a variant of probabilistic separation logic can be obtained as an instance of our framework.
In particular, we recover the model of separation described by Li~et~al.~\cite{LiEA24} in which separation is construed as products of sample spaces and resource combination is governed by probabilistic independence between random variables. 
We review the necessary details in the following.

\todo[inline]{Make the semantics precise if time permits.}
\begin{example}
\label{ex:PSL-model1}
The \emph{formulas} of probabilistic separation logic (PSL) are inductively generated by the following grammar:
\begin{equation*}
 P, Q ::= \top \mid 
 	     X \sim \mu \mid 
	      P * Q,
 \end{equation*}
Formulas of PSL are interpreted over probability spaces on a non-empty countable sample space $\bbS$.
That is,~$\bbS$ is equipped with a pair~$(\F, \mu)$ consisting of a $\sigma$-algebra~$\F$ 
and a distribution~$\mu\colon\F\to[0,1]$.
A \emph{random variable} on~$\bbS$ is a map~$\bbS\to\Z$.
The formula $X \sim \mu$ holds if $X$ is $\F$-measurable and its law agrees with~$\mu$.
The formula $P_1 * P_2$ holds on~$(\bbS, \mu)$ 
if there exist spaces $(\bbS_1, \mu_1)$ and $(\bbS_2, \mu_2)$ and a surjection $p\colon \bbS \to \bbS_1 \times \bbS_2$ such that~$(\bbS_i, \mu_1)$ satisfies~$P_i$ and~$\mu$ is the pullback of $\mu_1 \otimes \mu_2$ along~$p$.
\end{example}

We proceed to illustrate how PSL, 
slightly generalised to additionally include the connectives of~\cref{sec:internal-propositional-logic},
can be recovered as an instance of our framework.
In this direction, we first give a description of a suitable site.

\paragraph*{Ambient setting.}
We write~$\Surj$ for the small category of countable sets and surjections, and equip it with the atomic coverage~$J^{\at}$.
That is, a sieve~$(f\colon A_F\to A)$ is considered covering if and only if it is non-empty.
The resulting site $(\Surj, J^{\mathrm{at}})$ carries the structure of a symmetric monoidal category with the tensor being given by the cartesian product.  
Moreover, the corresponding sheaf category $\Sh{\Surj, J^{\at}}$ is Day-stable~\cite{LiEA24}.

\paragraph*{Resources and predicates}
Next, we define a resource sheaf that enables indexing of predicates over probability space.
Let~$\bbP(\bbS)$ denote the set of all probability spaces on sample space~$\bbS$.
That is,~$\bbP(\bbS)$ consists of pairs~$(\F, \mu)$ of a $\sigma$-algebra~$\F$ on~$\bbS$ and a probability distribution~$\mu$ for~$\F$.
The assignment~$\bbS\mapsto\bbP(\bbS)$ is the object-part of a sheaf
$\mathbb{P}\colon \op{\Surj} \to \Set$
with the action on a surjective map~$f\colon\bbS\to\bbT$ defined by the map
$\bbP(f)\colon\bbP(\bbT)\to\bbP(\bbS)$	
which sends~$\mathcal{P}\in\bbP(\bbT)$ to its \emph{pullback probability space}~$\bbP(f)(\mathcal{P})$ on~$\bbS$.

A pullback probability space along a surjection ~$f\colon \bbS^\prime\to\bbS$ maps a probability space ~$(\mathcal{G}, \nu)$ on $\bbS$ to the space
\begin{equation*}
    (\{f^{-1}(G) \mid G\in \mathcal{G}\}, \mu) \quad \text{on} \ \bbS^\prime
\end{equation*}

\noindent where $\mu(f^{-1}(G)) := \nu(G)$ for all $G\in \mathcal{G}$.

Recall that~$\Pred_{\mathbb{P}, 0}$ is defined for~$\bbS\in\Surj$ via the pullback in~\cref{def:internal-fibre}. 
This means that  
\begin{equation*}
 \Pred_{\bbP, 0}(\bbS) := \left\{ \alpha \colon \mathbb{P}_{\bbS} := \mathbb{P} \circ \op{\dom_{\bbS}} \to \Omega \right\}.
\end{equation*}
Since maps into the subobject classifier correspond to a unique subobject, this defines the collection of subobjects of the presheaf~$\bbP_{\bbS} \colon (\Surj/\bbS)^{op} \to \Set$.
In detail, the functor~$\bbP_{\bbS}$ assigns the set~$\bbP(\bbT)$ of probability spaces on~$\bbT$ to each surjection~$p\colon \bbT \to \bbS$ 
A predicate over~$\bbP$ is thus a family of subsets of probability spaces on all such refinements~$\bbT \to \bbS$, subject to naturality under pullback.
This enforces the extension invariant: propositions that are true in one sample space should remain true when pulled back along a surjective refinement.

Intuitively, suppose that we assert a property such as~$X \sim \mu$ over a probability space on~$\bbS$. 
Then we must be able to pull back that space along any surjection~$\bbT \to \bbS$ and still make sense of 
whether~$X \sim \mu$ holds: \emph{predicates must be stable under surjective refinement of the sample space}.

\paragraph*{Resources and predicates}
Next, we define the resource sheaf that will index probabilistic predicates and describe what predicates over this sheaf represent.

\textit{Probability Sheaf~$\mathbb{P}$} Define a presheaf (in fact, a sheaf)
\begin{equation*}
    \mathbb{P}\colon \op{\Surj} \to \Set
\end{equation*}
by setting, for each sample space~$\bbS$,
\begin{equation*}
    \mathbb{P}(\bbS) := \left\{ \text{probability spaces on the underlying set } \bbS \right\}.
\end{equation*}
In other words, an element of~$\mathbb{P}(\bbS)$ is a probability measure~$\mathcal{P}$ on~$\bbS$, equipped with an appropriate~$\sigma$-algebra. We use the symbol for probability measure for probability space interchangably.
For a morphism~$f\colon \bbS' \twoheadrightarrow \bbS$, the map~$\mathbb{P}(f)$ sends a probability space~$\mathcal{P} \in \mathbb{P}(\bbS)$ to the pullback probability space~$f^{-1}\mathcal{P}$ on~$\bbS'$.

We now recall that~$\Pred_{\mathbb{P}, 0}$ is explicitly defined, for~$\bbS \in \Surj$, via the pullback in~\cref{def:internal-fibre} as
\begin{equation*}
    \Pred_{\mathbb{P}, 0}(\bbS) := \left\{ \alpha \colon \mathbb{P}_{\bbS} := \mathbb{P} \circ \op{\dom_\bbS} \to \Omega \right\}.
\end{equation*}
Since each such map into the subobject classifier corresponds uniquely to a subobject, this defines the collection of subobjects of the presheaf~$\mathbb{P}_{\bbS} \colon (\Surj/\bbS)^{op} \to \Set$.

Concretely, the functor~$\mathbb{P}_{\bbS}$ assigns to each surjection~$p\colon \bbS' \to \bbS$ the set~$\mathbb{P}(\bbS')$ of probability spaces on~$\bbS'$. 
A predicate over~$\mathbb{P}$ is thus a family of subsets of probability spaces on all such refinements~$\bbS' \to \bbS$, subject to naturality under pullback.

More intuitively, suppose we assert a property such as~$X \sim \mu$ over a probability space on~$\bbS$. Then we must be able to pull back that space along any surjection~$\bbS' \to \bbS$ and still make sense of whether~$X \sim \mu$ holds. That is, predicates must be stable under surjective refinement of the sample space.

\paragraph*{Abstract semantic operation.}
The sheaf~$\bbP$ carries the structure of a monoid for Day convolution on the site~$(\Surj, J^{\at})$. 
Concretely, the convolution~$\mathbb{P} \DaySH \mathbb{P}$ is isomorphic to a sheaf~$\mathbb{P}^2_\perp$, where
\begin{equation*}
\big((\bbS_1, \mathcal{P}_1), (\bbS_2, \mathcal{P}_2)\big) \in \bbP^2_\perp(\bbS)
\end{equation*}
consists of spaces~$\mathcal{P}_1 \in \mathbb{P}(\bbS_1)$ and~$\mathcal{P}_2 \in \mathbb{P}(\bbS_2)$, pulled back along a surjection~$p\colon \bbS \twoheadrightarrow \bbS_1 \times \bbS_2$.

The monoid multiplication~$\odot \colon \mathbb{P}^2_\perp \to \mathbb{P}$ maps such a pair to the pullback of the product measure~$\mathcal{P}_1 \otimes \mathcal{P}_2$ along~$p$, combining distributions through a shared refinement~$\bbS$.
We internalise~$\mathbb{P} \DaySH \mathbb{P}$ and~$\mathbb{P}$ as global sections~$\overline{\mathbb{P} \DaySH \mathbb{P}}, \overline{\mathbb{P}} \in \Univ_0$ of the internal universe, and the external multiplication~$\odot$ becomes an internal morphism~$\overline{\odot} \colon \TSheaf \to \Univ_1$ by~\cref{thm:internalising-small-sheaves}.

This induces an internal existential quantification~$\exists^\odot$ along~$\overline{\odot}$ via opcartesian lifting in the internal bifibration

\noindent $p\colon \Pred \to U$, as guaranteed by~\cref{thm:internal-reindexing}.
For~$P \in \Pred_{\mathbb{P} \DaySH \mathbb{P}, 0}(\bbS)$ and~$p\colon \bbS' \twoheadrightarrow \bbS$,
\begin{equation*}
\exists^\odot(P)(p) := \left\{ \mathcal{P} \in \mathbb{P}(\bbS') \;\middle|\; 
  \exists (\mathcal{P}_1, \mathcal{P}_2) \in P(p)\ \text{with}\ \mathcal{P} = \odot(\mathcal{P}_1, \mathcal{P}_2)
\right\}.
\end{equation*}
This expresses the universal property of~$\exists^\odot$: a probability space~$\mathcal{P}$ lies in~$\exists^\odot(P)$ if it arises as the image of a pair~$(\mathcal{P}_1, \mathcal{P}_2)$ under convolution, where the pair satisfies~$P$.
That is,~$\exists^\odot(P)$ holds on a distribution if it can be decomposed into independent components that satisfy~$P$ and then recombined, yielding an internal semantics of probabilistic separating conjunction.
\paragraph*{Logical connective.}
To interpret the separating conjunction within the internal logic, we compose:
\begin{equation*}
* := \Pred_{\mathbb{P}, 0} \DaySH \Pred_{\mathbb{P}, 0} \to \Pred_{\mathbb{P} \DaySH \mathbb{P}, 0} \xrightarrow{\exists^\odot} \Pred_{\mathbb{P}, 0},
\end{equation*}
where the first map is defined via Day convolution on predicates, as in Section~\ref{sec:internallogic}. This defines the separating conjunction as a derived logical connective in the internal logic.


\section{Concluding and Future Work}
\label{sec:conclusion}
We have developed elements of sheafeology, that is, categorical logic internal in sheaf categories.
The locality and compatibility axioms of sheaves enable reasoning about the combination and decomposition of resources in a modular way.
Different flavours of resources, and different treatments thereof, can be expressed within the same unifying framework.

In~\cref{sec:internalcategoriessheaves}, we generalised fibrational logic to a 2-categorical setting, recovering the predicate fibration of~\cite{Jacobs99} internally in a sheaf topos.
In~\cref{sec:internallogic}, we combined this internal logic with a generalised separating conjunction, allowing us to interpret various models of separation logic within the same categorical infrastructure.
In~\cref{sec:probability}, we recovered a known instance of probabilistic separation logic and extended it to accommodate intuitionistic connectives.
Together, these results define an expressive, compositional, and internally-defined framework for resource-aware logics.

\paragraph*{Future Work}
The internal universe is expressive: we have seen how ordinary sets can be made resource-aware via sheaf semantics.
The unified view on logics sets the stage for any potential research directions.

A promising direction is to build internal models of computational effects, following classical categorical structures such as monads and Kleisli categories, but interpreted internally in the sheaf topos.
For instance, the power object provides a natural setting for internalising state-indexed monads, as in the work of \textcite{MM15}.
Internally-defined Kleisli morphisms may then serve as a basis for reasoning about concurrent or effectful processes, via sheaf structures over state spaces.

A particularly interesting line of investigation is to study how such internal Kleisli structures interact with the internal predicate fibration.
This could lead to an extension of our framework with resource-aware modalities, enabling a comparison with probabilistic modalities as proposed by Li et al.~\cite{LiEA23}, or ownership modalities as presented by Jung et al.~\cite{JungEA18}.
We would like to explore this direction to test the expressive boundaries of our approach.

At present, the frame rule has not been treated. However, inspired by the work of Aguirre and Katsumata~\cite{AK20}, we anticipate that internal sheaves of predicate transformers provide semantics for the frame rule across different flavours of separation logic.
In a similar line, categorical logic leads fairly directly to proof systems, something that we wish to explore in the future to possibly recover existing proof systems~\cite{WickersonEA13}

Finally, the current framework models locality or spatiality through a single covering structure at a time. In future work, we aim to explore whether 2-sheaf categories could allow for modelling orthogonal resources simultaneously.
This can enable treating concurrent processes via sheaf structure on states~\cite{Goguen92:SheafSemanticsConcurrent, Lilius93:SheafSemanticsPetri, Malcolm09:SheavesObjectsDistributed, MP86:SheafTheoreticModelConcurrency, Pratt86:ModelingConcurrencyPartial, SS09, TO15, WG92:SheafSemanticsFOOPS} and combine this with resources shared by processes.


\printbibliography{}
\clearpage
\appendix
\textbf{\huge Appendix}
\section{Notation}
\label{sec:notation}

\begin{tabular}{l|l}
  Notation & Meaning \\\hline
  $\img{f}$ & image of $f$ \\
  $\preimg{f}$ & preimage of $f$ \\
  $\Set$ & Category of small sets and maps (same as $\Set[0]$) \\
  $\Set[k]$ & kth category of set and maps \\
  $\CC, \DC, \EC, \dots$ & General categories \\
  $\ArrC{\CC}$ & Arrow category on $\CC$ \\
  $\slice{\CC}{A}$ & Slice category of $\CC$ over $A$ \\
  $\Univ$ & Universe sheaf \\
  $\PSh{\CC}$ & Category of presheaves on a category  (same as $\PSh[1]{\CC}$) \\
  $\PSh[k]{\CC}$ & Category of presheaves with values in $\Set[k]$ \\
  $\Yo$ & Yoneda embedding $\CC \to \PSh{\CC}$ \\
  $\Sh{\CC, J}$ & Category of sheaves on a site (same as $\Sh[1]{\CC, J}$) \\
  $\Sh[k]{\CC, J}$ & Category of sheaves with values in $\Set[k]$ \\
  $\timesU$ & Unit for $\times$/terminal object in $\Set$ \\
  $\TSheaf$ & Terminal sheaf \\
  $\slice{\CC}{A}$ & Slice category over $A$ \\
  $\sliceDom[A]$ & domain functor $\slice{\CC}{A} \to \CC$ \\
  $\SliceCoverage[J]{A}$ & Induced coverage on a slice category \\
  $\SliceSite[\CC][J]{A}$ & Induced site on a slice category \\
  $\pbCover{h}G$ & pullback of a cover/sieve $G$ \\
  $\sheafInt{F}$ & internalisation of small sheaf $F$ as global section of universe \\
  $\sheafInt{\alpha}$ & internalisation of morphism of small sheaves
\end{tabular}


\section{Internal Fibrations}
Before we start the proof, we introduce known concepts that provide tools to make the proofs more manageable.
\begin{definition}[Discrete Internal Category]
	For a base category ~$\EC$, every object $X \in \EC$ determines a discrete internal category in $\EC$, denoted $\DiscCat{X}$, whose object of objects is $X$, and all structure maps are identities on $X$.
	
	This construction extends to a functor
	\begin{equation*}
	\DiscCat{-} \colon \EC \to \Int{\EC}
	\end{equation*}
	from $\EC$ to the category of internal categories in $\EC$.
\end{definition}

\begin{definition}[Externalisation]
	\label{def:ext}
	For a base category ~$\EC$, given $C$ an internal category $\mathcal{C}$ in $\mathrm{Cat}(\mathcal{E})$, the externalisation \[P_C: \underline{C}\rightarrow \mathcal{E}\] is a split fibration over $\mathcal{E}$, where $\underline{C}$ is the category with
	\begin{itemize}
		\item objects: pairs $(I, X: I\rightarrow C_0)$ with $I$ in $\mathcal{E}$
		\item morphisms: pairs $(u: I\rightarrow J, X: I\rightarrow C_0)$, with $u, f$ morphisms in $\mathcal{E}$, such that the diagram below commutes
		\[\begin{tikzcd}
			& I & J \\
			{C_0} & {C_1} & {C_0}
			\arrow["u", from=1-2, to=1-3]
			\arrow["X"', from=1-2, to=2-1]
			\arrow["f", from=1-2, to=2-2]
			\arrow["Y", from=1-3, to=2-3]
			\arrow["s", from=2-2, to=2-1]
			\arrow["t"', from=2-2, to=2-3]
		\end{tikzcd}\]
		given source and target maps $s, t: C_1\rightarrow C_0 $ respectively.
	\end{itemize}
	Identity morphisms $\mathrm{id}_{I, X}$ are defined as $\mathrm{id}_{I, X} := (\mathrm{id}_I, i_C\circ X)$, where $i_C$ is the identity assigning morphism of the internal category $C$. \newline
	Given the morphisms $(u:I\rightarrow J, f: I\rightarrow C_1)$ and $(v: J\rightarrow K, g: I\rightarrow C_1)$ in $\underline{C}$, the composition $(v,g)\circ (u,f)$ is defined as \[(v\circ u, c\circ \langle f, g\circ u\rangle)\] given the composition morphism $c: C_1\times_{C_0} C_1\rightarrow C_1$.\newline
	Together this gives us the following unit laws and associative law.
	\begin{itemize}
		\item Given the morphism $(u: I\rightarrow J, f: I\rightarrow C_1)$ between $(I, A)$ and $(J, B)$ in $\underline{C}$ we have the following unit laws.
		\[ (u, c\circ  \langle f, i_C\circ B\circ u\rangle) = (u, f) = (u, c\circ \langle i_C\circ A, f\rangle)\]
		\item Given the morphisms $(u: I\rightarrow J, f: I\rightarrow C_1), (v: J\rightarrow K, g: J\rightarrow C_1)$, and\newline $(w: K\rightarrow M , h: K\rightarrow C_1)$ between $(I, A)$, $(J, B)$, $(K, C)$, $(M, D)$ in $\underline{C}$ we have the following associative law.
		\[((w\circ v)\circ u, c\circ \left\langle f, c\circ \langle g, h\circ v\rangle\circ u\right\rangle) = (w\circ (v\circ u), c\circ \left\langle c\circ \langle f, g\circ u\rangle, h\circ v\right\rangle)\]
	\end{itemize}
	We write $f: X\circlearrow Y$ for the vertical morphism $(u: I\rightarrow J, f: I\rightarrow C_1)$.
\end{definition}

\begin{proposition}
	For a base category ~$\EC$, there is an equivalence of categories \[\mathrm{Cat}(\mathcal{E}) \simeq \mathrm{SpFib}(\mathcal{E})\] with $\underline{(-)}: \mathrm{SpFib(\mathcal{E})}\rightarrow \mathrm{Cat}(\mathcal{E})$ and $i: \mathrm{SpFib}(\mathcal{E})\rightarrow \mathrm{Cat}(\mathcal{E})$.
\end{proposition}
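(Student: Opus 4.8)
The plan is to exhibit the externalisation $\underline{(-)}\colon \mathrm{Cat}(\mathcal{E}) \to \mathrm{SpFib}(\mathcal{E})$ of \cref{def:ext} and the internalisation $i$ as mutually quasi-inverse (2-)functors. First I would confirm that externalisation is functorial: an internal functor $(f_0, f_1)$ induces a functor $\underline{C} \to \underline{D}$ over $\mathcal{E}$ by post-composing the classifying maps $X\colon I \to C_0$ and $f\colon I \to C_1$ with $f_0$ and $f_1$, and an internal natural transformation induces a fibred natural transformation. The cartesian lifts are given by reindexing $Y\colon J \to C_0$ along $u\colon I \to J$, with the \emph{split} cleavage supplied by strict composition in the internal category; that this genuinely yields a split fibration is exactly the content of the unit and associativity laws recorded in \cref{def:ext}. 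This makes $\underline{(-)}$ a 2-functor landing in split fibrations.

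For the converse direction I would construct $i$ by representability. A split fibration $P\colon \mathcal{F}\to\mathcal{E}$ presents a strict functor $\mathcal{E}^{op}\to\mathrm{Cat}$, $I \mapsto \mathcal{F}_I$; restricting to objects and to vertical morphisms yields two presheaves $\mathrm{ob}(\mathcal{F})$ and $\mathrm{mor}(\mathcal{F})$ on $\mathcal{E}$. The category $\mathrm{SpFib}(\mathcal{E})$ consists precisely of those (small) split fibrations for which both presheaves are representable, say by objects $C_0$ and $C_1$. The source, target, identity and composition morphisms of $i(P)$ are then extracted from the evaluation functors and the fibrewise categorical structure, transported across the Yoneda isomorphisms $\mathcal{E}(-, C_0)\cong \mathrm{ob}(\mathcal{F})$ and $\mathcal{E}(-, C_1)\cong\mathrm{mor}(\mathcal{F})$. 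Functoriality of $i$ follows because a morphism of split fibrations preserves cartesian lifts and fibrewise structure, hence induces natural transformations of the representing presheaves, i.e. internal functors by Yoneda.

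It remains to produce the two natural isomorphisms. For $i\circ\underline{(-)}\cong\mathrm{id}_{\mathrm{Cat}(\mathcal{E})}$ the argument is immediate from Yoneda: by \cref{def:ext} the presheaf of objects of $\underline{C}$ is $I\mapsto\mathcal{E}(I, C_0)$ and its presheaf of vertical morphisms is represented by $C_1$ through the maps $f\colon I \to C_1$, so $i(\underline{C})$ recovers $(C_0, C_1)$ together with its structure maps on the nose. For $\underline{(-)}\circ i \cong \mathrm{id}_{\mathrm{SpFib}(\mathcal{E})}$, given $P\in\mathrm{SpFib}(\mathcal{E})$ with $i(P)=(C_0,C_1)$ I would build the comparison functor $\underline{i(P)}\to\mathcal{F}$ over $\mathcal{E}$ sending $(I, X\colon I\to C_0)$ to the object of $\mathcal{F}_I$ named by $X$ under $\mathcal{E}(I, C_0)\cong\mathrm{ob}(\mathcal{F})(I)$, acting on morphisms via the representation of $\mathrm{mor}(\mathcal{F})$, and then check it is a fibred equivalence.

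The main obstacle is this converse direction, and specifically pinning down $\mathrm{SpFib}(\mathcal{E})$ as exactly the representable (small) split fibrations so that $i$ is defined at all, and then verifying that the comparison functor $\underline{i(P)}\to\mathcal{F}$ is an \emph{equivalence} of fibrations and not merely a fibred functor. This is where both the splitting (as opposed to a mere cleavage) and the representability of the object- \emph{and} morphism-presheaves are genuinely used. The remaining coherence checks — that the structure maps extracted via Yoneda satisfy the internal-category axioms, and that the comparison isomorphisms are 2-natural — are routine diagram chases against the axioms of an internal category and the unit and associativity laws of \cref{def:ext}, which I would relegate to verification. These equivalences are standard; see Jacobs~\cite{Jacobs99} and Borceux~\cite[Ch.~8]{Borceux94}.
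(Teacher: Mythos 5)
Your proposal is correct, but it is worth noting that the paper offers no proof of this proposition at all: it is stated as a known result, and moreover the paper's \cref{def:ext} and its definition of $\mathrm{SpFib}(\mathcal{E})$ take the 0-cells of $\mathrm{SpFib}(\mathcal{E})$ to be exactly the externalisations $P_C$ of internal categories (and its 1- and 2-cells to be exactly externalised internal functors and transformations), so that under the paper's reading the equivalence is essentially by construction. Your route proves the substantive, standard form of the theorem instead: you characterise the relevant split fibrations as those whose presheaves of objects and of vertical morphisms are representable, recover $i$ via Yoneda, and check the two comparison isomorphisms. This is the argument found in the standard references you cite, and it buys something the paper's by-definition treatment does not, namely an intrinsic criterion for when a given split fibration arises from an internal category; the paper's formulation buys brevity, since it only ever needs to pass back and forth between an internal category and its own externalisation. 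Two small points: the typing of the two functors in the statement is a typo (externalisation $\underline{(-)}$ goes $\mathrm{Cat}(\mathcal{E}) \to \mathrm{SpFib}(\mathcal{E})$, which you silently correct), and your comparison functor $\underline{i(P)} \to \mathcal{F}$ is in fact an isomorphism of fibrations, not merely an equivalence, since the splitting is strict and the representations identify objects and vertical morphisms on the nose; every morphism of $\mathcal{F}$ then factors uniquely as a cartesian lift from the splitting followed by a vertical morphism, so bijectivity on hom-sets is immediate. Your observation that strict functoriality of reindexing (the splitting, as opposed to a mere cleavage) is what makes the morphism presheaf well defined is exactly the right place to locate where the hypothesis is used.
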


\begin{definition}[The 2-Category $\mathrm{SpFib}(\mathcal{E})$]
	For a base category ~$\EC$, the category of split fibrations over $\mathcal{E}$, $\mathrm{SpFib}(\mathcal{E})$ has as
	\begin{itemize}
		\item $0$-cells: internalisations/ split fibrations $P_C: C\rightarrow \mathcal{E}$
		\item $1$-cells: A functor between split fibrations $P_C$ and $P_D$ is given by the split fibred functor $\underline{F}: \underline{C}\rightarrow \underline{D}$, defined on
		\begin{itemize}
			\item objects $(I, X:I\rightarrow C_0)$ in $\underline{C}$ as $\underline{F}(I, X) = (I, F_0X)$
			\item morphisms $(u: I\rightarrow J, f: I\rightarrow C_1)$ as $\underline{F}(u, f) = (u, F_1f)$
		\end{itemize}
		given the internal functor $F: C\rightarrow D $ in $\mathrm{Cat}(\mathcal{E})$.
		\item $2$-cells: a natural transformation $\underline{\alpha}: \underline{F}\Rightarrow \underline{G}$ between split fibred functors $\underline{F},\underline{G}: \underline{C}\rightarrow \underline{D}$, such that $P_D\underline{\alpha} = P_C$. \newline
		For all $(I, X:I\rightarrow C_0)$ in $\underline{C}$, the $2$-cell $\underline{\alpha}$ has components $\underline{\alpha}_X$, corresponding to morphisms 
		
		\noindent $(\mathrm{id}_I, \alpha\circ X): F_0X\circlearrow G_0X$ in $\underline{D}$, given the internal natural transformation $\alpha: C_0\rightarrow D_1 $.
		The naturality condition is expressed as the commuting diagram below for any\newline $(u:I\rightarrow J, f: I\rightarrow C_1)$ in $\underline{C}$.
		\[\begin{tikzcd}
			{\underline{F}(I,X)} & {\underline{G}(I, X)} \\
			{\underline{F}(J, Y)} & {\underline{G}(J, Y)}
			\arrow["{\underline{\alpha}_X}", from=1-1, to=1-2]
			\arrow["{\underline{F}(u, f)}"', from=1-1, to=2-1]
			\arrow["{\underline{G}(u, f)}", from=1-2, to=2-2]
			\arrow["{\underline{\alpha}_Y}"', from=2-1, to=2-2]
		\end{tikzcd}\]
		Unwrapped, the naturality condition expresses the equality
		\[\underline{G}f\circ \underline{\alpha}_X = (u, c\circ \langle \alpha\circ X, G_1f\rangle ) = (u, c\circ \langle F_1f, \alpha\circ Y\circ u\rangle) = \underline{\alpha}_Y\circ F_1f\] 
	\end{itemize}
\end{definition}

The equivalence of categories $\mathrm{Cat}(\mathcal{E})\simeq \mathrm{SpFib}(\mathcal{E})$ allows us to work with internal functors and internal natural transformations externally. For example, proving naturality for an externalised natural transformation $\underline{\alpha}$ proves $\alpha$ is a valid internal natural transformation. Proving functoriality for an externalised functor $\underline{F}$ proves that $F$ is a valid internal functor.

\begin{proposition}
	\label{prop:internal-reindexing}
	The construction in Definition~\ref{thm:internal-reindexing} defines an internal functor $u^*: E_X\rightarrow E_Y$, and this functor is uniquely determined by the cartesian property of the map $\bar{u}_0$.
\end{proposition}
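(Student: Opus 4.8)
The plan is to prove the statement by passing to externalisations, where the internal fibration $p\colon\Pred\to\Univ$ of \cref{thm:internal-fibration} is presented as an ordinary split fibration $\underline{\Pred}\to\Sh{\CC, J}$ fibred over $\underline{\Univ}$, and then to identify $u^*$ with reindexing along the cartesian lift whose object component is $\bar u_0$. By the equivalence $\Int{\Sh{\CC, J}}\simeq\mathrm{SpFib}(\Sh{\CC, J})$ recorded above, it is enough to produce a split fibred functor $\underline{u^*}\colon\underline{E_X}\to\underline{E_Y}$ and to check its functoriality and uniqueness externally; the equivalence transports these facts back to the internal functor $u^*$, its structure maps being forced since they are inherited from the ambient structure of each $\Sh[0]{\SliceSite*{A}}$ exactly as in the construction of $\Pred$.

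First I would fix the cartesian datum. The fibre pullbacks of \cref{def:internal-fibre} present $E_X$ and $E_Y$ as the internal subcategories of $\Pred$ lying over the global sections $X$ and $Y$, and the fibration property of $p$ yields a $p$-cartesian 2-cell $\bar u_{\mathrm{cart}}$ over $u$ whose object component is $\bar u_0$. Externally, $\bar u_0$ is precisely a choice, for each generalised object $e$ of the source fibre, of a cartesian morphism $\bar u_e$ over $u$; reading off its opposite endpoint defines $u^*$ on objects. This object component is nothing but the pullback-of-a-subobject operation appearing in the proof of \cref{thm:internal-fibration}, so it lands in the correct fibre by construction.

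The action on morphisms is then forced by the universal property. Given a vertical morphism $\varphi\colon e\to e'$ in the source fibre, the composite $\varphi\comp\bar u_e$ sits over $u$ and therefore factors uniquely through the cartesian morphism $\bar u_{e'}$ by a vertical morphism $u^*\!\varphi$; I take this factorisation as the value of $u^*$ on $\varphi$. Preservation of identities and composites is immediate from the uniqueness clause of the cartesian property, since in each case two candidate fillers solve the same factorisation problem and must coincide. Naturality of the whole assignment over $\Sh{\CC, J}$ follows because the chosen lifts are stable under reindexing along $\bparens{\overline{\slice{\CC}{g}}}^{\ast}$, which is exactly what it means for $\bar u_{\mathrm{cart}}$ to be a morphism of sheaves rather than a mere stagewise choice. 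Uniqueness of $u^*$ closes the argument: any internal functor compatible with $\bar u_0$ must agree with $u^*$ on objects by definition of $\bar u_0$, and its value on each morphism is then pinned down by the same unique cartesian factorisation, so it equals $u^*$.

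The step I expect to be the main obstacle is the faithful translation of the 2-categorical lifting data of \cref{thm:internal-fibration} --- the auxiliary 2-cells $\xi,\theta,\zeta$ and the equations $p\zeta=\theta$ and $\xi=\alpha h\comp\zeta$ --- into the externalised split-fibration language, so that the classical ``factor uniquely through the cartesian arrow'' argument is genuinely available. Concretely, one must verify that the externalisation of $p$ is split and that $\bar u_0$ assembles into a \emph{coherent} family of cartesian lifts; only then do functoriality and uniqueness fall out formally from the universal property rather than demanding a fresh pointwise verification at every stage $A\in\CC$.
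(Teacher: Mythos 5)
Your proposal is correct and takes essentially the same route as the paper: the appendix sets up precisely the equivalence $\mathrm{Cat}(\mathcal{E})\simeq\mathrm{SpFib}(\mathcal{E})$ so that functoriality and uniqueness may be verified on externalisations, and the proof of \cref{prop:internal-reindexing} is exactly the classical reindexing-by-cartesian-lift argument you sketch, carried out at generic elements via discrete internal categories (the 1-cells $x = I\circ\mathrm{disc}(t_X)$ and $\tilde u_m = \tilde u_o\circ I\circ\mathrm{disc}(s_X)$ with 2-cells $\xi,\gamma$ inducing $\bar u_m$, then $u^*_1$ by the pullback property, and composition handled by showing $c_E\circ\langle\bar u_m\circ\pi_1,\bar u_m\circ\pi_2\rangle$ and $\bar u_m\circ c_X$ solve the same unique lifting problem). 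The obstacle you flag at the end --- assembling $\bar u_0$ into a coherent split family so the factorisation argument applies generically rather than stagewise --- is exactly where the paper's proof spends its effort, so your plan is sound as stated.
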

\begin{proof}
	It remains to show that
	\begin{enumerate}
		\item the universal property of the pullback $E_{Y, 0}$ induces the unique map $u^*_0$. \newline
		We have to show that $Y\circ \pi_{X, 2} = p_0\circ s_E\circ \bar{u}_o$.
		\begin{align*}
			Y\circ \pi_{X, 2} = Y \tag{$\pi_{X, 2}$ is a terminal map}\\
			= s_B\circ \beta \tag{component of internal natural transformation}\\
			= s_B\circ p\bar{u}_o \tag{$\bar{u}_o$ is $p$-cartesian}\\
			= s_B\circ p_1\circ \bar{u}_o \tag{internal whiskering}\\
			= p_0\circ s_E\circ \bar{u}_o \tag{$p$ is an internal functor}
		\end{align*}
		
		\item the $p$-cartesian lift $\bar{u}_o$ and the 2-cells $\xi$ and $\gamma$ induce the unique 2-cell $\bar{u}_m$.
		
		We have to define the 2-cells $\gamma$ and $\xi$ and the 1-cells $x$ and $\tilde{u}_m$, such that they satisfy the cartesian lifting property. \newline
		\begin{itemize}
			\item $x := I\circ \mathrm{disc}(t_X): \mathrm{disc}(E_{X, 1})\rightarrow E_X$
			\item $\tilde{u}_m := \tilde{u}_o\circ I\circ \mathrm{disc}(s_X): \mathrm{disc}(E_{X, 1})\rightarrow E$
			\item $\xi := c_E\circ \langle \bar{u}_os_X, \iota_X\rangle: \mathrm{disc}(E_{X, 1})_0 = E_{X, 1}\rightarrow E_1$
			\item $\gamma := i_B\circ Y :\mathrm{disc}(E_{X, 1})_0 = E_{X, 1}\rightarrow B_1$
		\end{itemize}
		
		Given source and target morphisms $s_X: E_{X, 1}\rightarrow E_{X, 1}$ and $t_X: E_{X, 1}\rightarrow E_{X, 0}$ respectively.
		Given a discrete internal category $\mathrm{disc}(X_0)$, $I$ is the inclusion functor to the original internal category $X$, with $I_0:= \mathrm{id}$ and $I_1 := i_X$.\newline
		The map $\langle \bar{u}_os_X, \iota_X\rangle$ is uniquely defined in the pullback below.
		
		Both $x$ and $\tilde{u}_m$ are valid 1-cells in $\mathrm{Cat}(\mathcal{E})$ as they are composites of internal functors.
		
		To check that $\xi$ is a valid 2-cell, we only have to verify the source and target correspond with $\tilde{u}_m$ and $\iota_X\circ x$ respectively, as the domain of $\xi$ is a discrete internal category, so the internal naturality condition holds trivially. \newline
		
		The source of $\xi$ corresponds with $\tilde{u}_{m, 0}$:
		\begin{align*}
			s_E\circ \xi = s_E\circ c_E\langle \bar{u}_os_X,  \iota_X\rangle \tag{unwrapping $\xi$}\\
			= s_E\circ \pi_1\circ \langle\bar{u}_os_X, \iota_X\rangle \tag{source of composite}\\
			= s_E\circ \bar{u}_os_X \tag{pullback}\\
			= \tilde{u}_{o, 0} \circ s_X \tag{source of internal natural transformation}
		\end{align*}
		The other way:
		\begin{align*}
			\tilde{u}_{m, 0} = \tilde{u}_{o, 0}\circ I_0\circ \mathrm{disc}(s_X)_0 \tag{unwrapping $\tilde{u}_{m, 0}$}\\
			= \tilde{u}_{o, 0}\circ s_X \tag{$I_0 = \mathrm{id}$ and $\mathrm{disc}(s_X)_0 = s_X$}
		\end{align*}
		The target of $\xi$ corresponds with $\pi_{X, 1}\circ x_0$ (the object component of $\iota_X\circ x)$):
		\begin{align*}
			t_E \circ \xi = t_E\circ c_E \circ \langle \bar{u}_os_X, \iota_X\rangle \tag{unwrapping $\xi$}\\
			= t_E \circ \pi_2\circ \langle \bar{u}_os_X, \iota_X \rangle \tag{target of composite}\\
			= t_E\circ \iota_X \tag{pullback}\\
			= \pi_{X, 1}\circ t_X \tag{functoriality of $\iota_X$}
		\end{align*}
		The other way:
		\begin{align*}
			\pi_{X, 1}\circ x_0 = \pi_{X, 1}\circ I_0\circ \mathrm{disc}(t_X)_0 \tag{unwrapping $x_0$}\\
			= \pi_{X, 1}\circ t_X \tag{$I_0 = \mathrm{id}$ and $\mathrm{disc}(t_X)_0 = t_X$}
		\end{align*}
		
		The definition of $\gamma$ yields a valid 2-cell from $p\circ \tilde{u}_m$ to $p\circ \tilde{u}_o$ by construction.
		
		To give intuition why the maps are defined this way, we look at how reindexing functors are defined on morhpisms externally. Given a morphism $f: P\rightarrow Q$ above $X$ in $\underline{E_X}$, the morphism $u: Y\rightarrow X$ induces cartesian lifts $\bar{u}(P): u^*P\rightarrow P$ and $\bar{u}(Q): u^*Q\rightarrow Q$ with $u^*P$ and $u^*Q$ above $Y$. We can compose $f$ with $\bar{u}(P)$ to get a morphism above $u$. The lifting property of $\bar{u}(Q)$ induces the unique morphism $u^*f: u^*P\rightarrow u^*Q$ above $\mathrm{id}_Y$. This is summarised in the diagram below.
		\[\begin{tikzcd}
			{E:} & {u^*P} & P \\
			& {u^*Q} & Q \\
			{B:} & Y & X
			\arrow["p", from=1-1, to=3-1]
			\arrow["{\bar{u}(P)}", from=1-2, to=1-3]
			\arrow["{u^*f}"', dashed, from=1-2, to=2-2]
			\arrow["f", from=1-3, to=2-3]
			\arrow["{\bar{u}(Q)}"', from=2-2, to=2-3]
			\arrow["u"', from=3-2, to=3-3]
		\end{tikzcd}\]
		
		Internally, this construction is given as 1-cells that assign to every $f$ in $E_{X, 1}$ a morphism $\bar{u}(s_X(f))$ and a morphism $\bar{u}(t_X(f))$ in $E_1$ together with a 2-cell that assigns to every $f$ in $E_{X, 1}$ the composition $f\circ \bar{u}(s_X(f))$.
		
		Because 2-cells in $\mathrm{Cat}(\mathcal{E})$ go from the object of objects of an internal category to the object of morphisms of an internal category, we use the discrete category $\mathrm{Cat}(\mathcal{E})$ to obtain a 2-cell with the object of morphisms of the fibred category as its domain, as $\mathrm{disc}(E_{X,1})_0 = E_{X, 1}$. \newline \newline 
		Finally, to induce $\bar{u}_m$ we show the equality $p\xi = p\bar{u}_ox\circ \gamma$. We externalise and show the equality for all $ (I, A)$ in $\underline{\mathrm{disc}(E_{X, 1})}$.
		\begin{align*}
			\underline{p\bar{u}_ox\circ\gamma}_A = (\mathrm{id}_I, c_B\circ \langle\gamma\circ A, p\bar{u}_ox\circ A\rangle) \tag{definition vertical comp.}\\
			= (\mathrm{id}_I, c_B\circ \langle\gamma, p\bar{u}_ox A\rangle\circ A) \tag{pullback}\\
			= (\mathrm{id}_I, c_B\circ \langle i_B\circ Y, p\bar{u}_ox\rangle\circ A) \tag{unwrapping $\gamma$}\\
			= (\mathrm{id}_I, c_B\circ \langle i_B\circ Y, ux\rangle\circ A) \tag{cartesianness of $\bar{u}_o$}\\
			= (\mathrm{id}_I, c_B\circ \langle i_B\circ Y, u\circ I_0\circ \mathrm{disc}(t_X)_0\rangle\circ A) \tag{unwrapping $x$}\\
			= (\mathrm{id}_I, c_B\circ \langle i_B\circ Y, u\circ t_X\rangle\circ A) \tag{$I_0 = \mathrm{id}$ and $\mathrm{disc}(t_X)_0 = t_X$}
		\end{align*}
		The other way:
		\begin{align*}
			\underline{p\xi}_A = (\mathrm{id}_I, p\xi\circ A) = (\mathrm{id}_I, p_1\circ c_E\circ \langle \bar{u}_os, \iota_X\rangle\circ A) \tag{unwrapping $\xi$}\\
			= (\mathrm{id}_I, c_B\circ \langle p\bar{u}_os, p_1\circ\iota_X\rangle\circ A) \tag{$p$ respects composition}\\
			= (\mathrm{id}_I, c_B\circ \langle u\circ s_X, p_1\circ\iota_X\rangle\circ A) \tag{cartesianness of $\bar{u}_o$}\\
			= (\mathrm{id}_I, c_B\circ \langle u\circ s_X, i_B\circ X\rangle\circ A) \tag{pullback of $E_{X, 1}$}\\
		\end{align*}
		That $c_E\circ \langle u\circ s_X, i_B\circ X\rangle \circ A = c_E\circ \langle i_B\circ Y, u\circ t_X\rangle\circ A$ holds by the unit law of composition and because $A\circ s_X$ and $A\circ t_X$ map $I$ to the same fibre.
		\item To induce the unique arrow $u^*_1: E_{X, 1}\rightarrow E_{Y,1}$ we show the outer arrows in the pullback below commute.
		\begin{align*}
			p_1\circ \bar{u}_m = p\bar{u}_m \tag{internal whiskering}\\
			= \gamma \tag{unique lifting property}\\
			= i_B\circ Y \tag{unwrapping $\gamma$}\\
			= i_B\circ Y\circ \pi_{X, 2} \tag{$\pi_{X, 2}$ is terminal}
		\end{align*}
		\item Finally, we show that $u^*: E_X\rightarrow E_Y$, with the object component $u^*_0$ and the morphism component $u_1^*$ is an internal functor by showing the diagrams below commute.
		\[\begin{tikzcd}
			{E_{X, 1}} && {E_{Y, 1}} && {E_1} \\
			{E_{X, 0}} && {E_{Y, 0}} && {E_0}
			\arrow["{u^*_1}", from=1-1, to=1-3]
			\arrow["{t_X}", shift left=3, from=1-1, to=2-1]
			\arrow["{s_X}"', shift right=3, from=1-1, to=2-1]
			\arrow["{\iota_Y}", from=1-3, to=1-5]
			\arrow["{t_Y}", shift left=3, from=1-3, to=2-3]
			\arrow["{s_X}"', shift right=3, from=1-3, to=2-3]
			\arrow["{t_E}", shift left=3, from=1-5, to=2-5]
			\arrow["{s_E}"', shift right=3, from=1-5, to=2-5]
			\arrow["{i_X}"{description}, from=2-1, to=1-1]
			\arrow["{u^*_0}"', from=2-1, to=2-3]
			\arrow["{i_Y}"{description}, from=2-3, to=1-3]
			\arrow["{\pi_{Y, 1}}"', from=2-3, to=2-5]
			\arrow["{i_E}"{description}, from=2-5, to=1-5]
		\end{tikzcd}\]
		\[\begin{tikzcd}
			{E_{X, 1}\times_{E_{X, 0}}E_{X, 1}} && {E_{Y, 1}\times_{E_{Y, 0}}E_{Y, 1}} && {E_1\times_{E_0}E_1} \\
			{E_{X, 1}} && {E_{Y, 1}} && {E_1}
			\arrow["{\langle u^*_1\circ \pi_1, u^*_1\circ \pi_2\rangle}", from=1-1, to=1-3]
			\arrow["{c_X}"', from=1-1, to=2-1]
			\arrow["{\langle \iota_Y\circ \pi_1, \iota_Y\circ \pi_2\rangle}", from=1-3, to=1-5]
			\arrow["{c_Y}", from=1-3, to=2-3]
			\arrow["{c_E}", from=1-5, to=2-5]
			\arrow["{u^*_1}"', from=2-1, to=2-3]
			\arrow["{\iota_Y}"', from=2-3, to=2-5]
		\end{tikzcd}\]
		We have that the left inner square commutes if the outer square commutes, because the inclusion maps are monomorphisms and together they are functorial in $\mathrm{Cat}(\mathcal{E})$.
		\begin{itemize}
			\item respect for the source map:
			\begin{align*}
				s_E\circ \iota_Y\circ u^*_1 = s_E\circ \bar{u}_m \tag{UMP of $u^*_1$}\\
				= \tilde{u}_{m, 0} \tag{source of internal natural transformation}\\
				= \tilde{u}_{o, 0}\circ I_0\circ \mathrm{disc}(s_X) \tag{unwrapping $\tilde{u}_m$}\\
				= \tilde{u}_{o, 0}\circ s_X \tag{$I_0 = \mathrm{id}$ and $\mathrm{disc}(s_X)_0 = s_X$}\\
				= s_E\circ \bar{u}_o\circ s_X \tag{source of internal natural transformation}\\
				= \pi_{Y, 1}\circ u^*_0\circ s_X \tag{UMP of $u^*_0$}
			\end{align*}
			\item respect for the target map:
			\begin{align*}
				t_E\circ \iota_Y\circ u^*_1 = t_E\circ \bar{u}_m \tag{UMP of $u^*_1$}\\
				= \tilde{u}_{o, 0}\circ x_0 \tag{target of internal natural transformation}\\
				= s_E\circ \bar{u}_o\circ x_0 \tag{source of internal natural transformation}\\
				= \pi_{Y,1}\circ u^*_0\circ x_0 \tag{UMP of $u^*_0$}\\
				= \pi_{Y,1}\circ u^*_0\circ I_0\circ \mathrm{disc}(t_X) \tag{unwrapping $x_0$}\\
				= \pi_{Y, 1}\circ u^*_0\circ t_X \tag{$I_0 = \mathrm{id}$ and $\mathrm{disc}(t_X)_0 = t_X$}
			\end{align*}
			\item respect for identities:
			\begin{align*}
				i_E\circ \pi_{Y, 1}\circ u^*_0 = i_E\circ s_E\circ \bar{u}_0 \tag{UMP of $u^*_0$}\\
				= i_E\circ \tilde{u}_{o, 0} \tag{source of internal natural transformation}\\
				= \tilde{u}_{o, 1}\circ i_X \tag{$\tilde{u}_o$ is an internal functor}\\
				= \bar{u}_m\circ i_X \tag{$\bar{u}_m = \tilde{u}_{o, 1}$ on identities}\\
				= \iota_Y\circ u^*_1\circ i_X \tag{UMP of $u^*_1$}
			\end{align*}
			That $\bar{u}_m\circ i_X = \tilde{u}_m \circ i_X$ holds because $\tilde{u}_m$ is defined the same as $\bar{u}_m$ on identity morphisms. This refers to an earlier not that $\tilde{u}_{o, 1}$ can be used to define the reindexing on morphisms if $p $ is a discrete internal fibration.
			\item respect for composition:\newline
			The map $u^*_1$ is determined by the unique 2-cell $\bar{u}_m: \mathrm{disc}(E_{X, 1})_0\rightarrow E_1$, induced by the $p$-cartesian lift $\bar{u}_o$. 
			For two morphisms in $E_{X, 1}$, the composite of the two components of $\bar{u}_m$ at these morphisms in $E_1$ satisfies the same uniqueness properties as the component of $\bar{u}_m$ at the composite of these morphisms, showing that $u^*_1$ preserves composition.
			\begin{align*}
				c_E\circ \langle \iota_Y\circ \pi_1, \iota_Y\circ \pi_2\rangle \circ \langle u^*_1\circ\pi_1, u^*_1\circ\pi_2\rangle \\
				= c_E\circ \langle \iota_Y\circ u^*_1\circ \pi_1, \iota_Y\circ u^*_1\circ \pi_2\rangle \tag{pullback}\\
				= c_E\circ \langle \bar{u}_m\circ\pi_1, \bar{u}_m\circ \pi_2\rangle \tag{UMP of $u^*_1$}\\
				= \bar{u}_m\circ c_X \tag{uniqueness $\bar{u}_m$}\\
				= \iota_Y\circ u^*_1\circ c_X \tag{pullback}
			\end{align*}
			To show that $c_E\circ \langle \bar{u}_m\circ \pi_1, \bar{u}_m\circ \pi_2\rangle = \bar{u}_m\circ c_X$, we construct the 2-cells
				\begin{equation}\label{dg:compcart}
					\begin{tikzcd}[column sep=small]
						{\mathrm{disc}(E_{X, 1}\times _{E_{X, 0}}E_{X, 1})} && E && {\mathrm{disc}(E_{X, 1}\times _{E_{X, 0}}E_{X, 1})} && E \\
						& {E_X} &&& {E_X} & E & B
						\arrow[""{name=0, anchor=center, inner sep=0}, "{\tilde{u}_c}", from=1-1, to=1-3]
						\arrow["{x_c}"', from=1-1, to=2-2]
						\arrow[""{name=1, anchor=center, inner sep=0}, "{\tilde{u}_c}", from=1-5, to=1-7]
						\arrow["{x_c}"', from=1-5, to=2-5]
						\arrow["p", from=1-7, to=2-7]
						\arrow["{\iota_X}"', from=2-2, to=1-3]
						\arrow["{\tilde{u}_o}"', from=2-5, to=2-6]
						\arrow["p"', from=2-6, to=2-7]
						\arrow["{\xi_c}"', shorten <=5pt, shorten >=3pt, Rightarrow, from=0, to=2-2]
						\arrow["{\gamma_c}"', shorten <=5pt, shorten >=3pt, Rightarrow, from=1, to=2-6]
					\end{tikzcd}
				\end{equation}
			such that
			\[p\xi_c = p\tilde{u}_ox_c\circ \gamma_c\]
			Because $\bar{u}_o$ is cartesian, 2-cell $\zeta_c: \mathrm{disc}(E_{X, 1}\times_{E_{X, 0}E_{X, 1}})_0\rightarrow E_1$ from $\tilde{u}_c$ to $\tilde{u}_o\circ x_c$ is induced, unqiue such that $\xi_c = \bar{u}_ox_c\circ \zeta_c$ and $p\zeta_c = \gamma_c$.\newline
			Now we can show equality of $c_E\circ \langle \bar{u}_m\circ \pi_1, \bar{u}_m\circ \pi_2\rangle$ and $\bar{u}_m\circ c_X$ by showing that 
			\[c_E\circ \left\langle\bar{u}_ox_c, c_E\circ \langle \bar{u}_m\circ \pi_1, \bar{u}_m\circ \pi_2\rangle\right\rangle = \xi_c = c_E\circ \langle\bar{u}_ox_c,\bar{u}_m\circ c_X\rangle \]
			and
			\[p(c_E\circ \langle \bar{u}_m\circ \pi_1, \bar{u}_m\circ \pi_2\rangle) = \gamma_c = p\bar{u}_m\circ c_X\]
			Because the composition of internal natural transformations yields an internal natural transformation, the 2-cells $c_E\circ \langle \bar{u}_m\circ \pi_1, \bar{u}_m\circ \pi_2\rangle$ and $\bar{u}_m\circ c_X$ are valid 2-cells with the type
			\[\mathrm{disc}(E_{X, 1}\times_{E_{X, 0}} E_{X_1})_0 = E_{X, 1}\times_{E_{X, 0}} E_{X_1}\rightarrow E_1\]
			We have the following definitions for the 1-cells and 2-cells in diagrams~\ref{dg:compcart}.
			\begin{itemize}
				\item $\gamma_c := i_B\circ Y$
				\item $\xi_c := c_E\circ \langle \bar{u}_o\circ s_X\circ c_X, \iota_X\circ c_X\rangle$
				\item $x_c := I\circ \mathrm{disc}(t_X)\circ \mathrm{disc}(c_X)$
				\item $\tilde{u}_c := \tilde{u}_o\circ I\circ \mathrm{disc}(s_X)\circ \mathrm{disc}(c_X)$
			\end{itemize}
			Note that these 1-cells and 2-cells are the 1-cells and 2-cells defined in the diagrams~\ref{dg:cart} below composed with $\mathrm{disc}(c_X)$ and $c_X$ respectively, showing that $x_c$ and $\tilde{u}_c$ are valid internal functors, $\xi_c$ and $\gamma_c$ are valid internal natural transformations, and the equality
			\[p\xi_c = p\bar{u}_ox_c\circ \gamma_c\]
			holds.
			\begin{equation}\label{dg:cart}
				\begin{tikzcd}
				{\mathrm{disc}(E_{X, 1})} && E & {\mathrm{disc}(E_{X, 1})} && E \\
				& {E_X} && {E_X} & E & B
				\arrow[""{name=0, anchor=center, inner sep=0}, "{\tilde{u}_m}", from=1-1, to=1-3]
				\arrow["x"', from=1-1, to=2-2]
				\arrow[""{name=1, anchor=center, inner sep=0}, "{\tilde{u}_m}", from=1-4, to=1-6]
				\arrow["x"', from=1-4, to=2-4]
				\arrow["p", from=1-6, to=2-6]
				\arrow["{\iota_X}"', from=2-2, to=1-3]
				\arrow["{\tilde{u}_o}"', from=2-4, to=2-5]
				\arrow["p"', from=2-5, to=2-6]
				\arrow["\xi", shorten <=5pt, shorten >=3pt, Rightarrow, from=0, to=2-2]
				\arrow["\gamma", shorten <=5pt, shorten >=3pt, Rightarrow, from=1, to=2-5]
			\end{tikzcd}
    \end{equation}

    We now check that the maps $c_E\circ \langle \bar{u}_m\circ \pi_1, \bar{u}_m\circ \pi_2\rangle$ and $\bar{u}_m\circ c_X$ satisfy the unique lifting properties.\newline
			We externalise and check for all components $(I, A: I\rightarrow E_{X, 1}\times_{E_{X, 0}}E_{X, 1})$ in $\underline{\mathrm{disc}(E_{X, 1}\times_{E_{X, 0}E_{X, 1}})}$.
			\begin{align*}
				\underline{\bar{u}_ox_c}_A\circ \underline{(\bar{u}_m\circ c_X)}_A\\ 
				= (\mathrm{id}_I, c_E\circ \langle \bar{u}_m\circ c_X, \bar{u}_ox_c\rangle\circ A) \tag{def. vertical comp.}\\
				= (\mathrm{id}_I, c_E\circ \langle \bar{u}_m\circ c_X, \bar{u}_o\circ I_0\circ \mathrm{disc}(t_X)_0\circ \mathrm{disc}(c_X)_0\rangle\circ A) \tag{unwrapping $x_c$}\\
				= (\mathrm{id}_I, c_E\circ \langle \bar{u}_m\circ c_X, \bar{u}_o\circ I_0\circ t_X\circ c_X\rangle\circ A) \tag{discreteness}\\
				= (\mathrm{id}_I, c_E\circ \langle \bar{u}_m\circ c_X, \bar{u}_o\circ t_X\circ c_X\rangle\circ A) \tag{$I_0 = \mathrm{id}$}\\
				= (\mathrm{id}_I, c_E\circ \langle \bar{u}_o\circ s_X\circ c_X, \iota_X\circ c_X\rangle\circ A) \tag{uniqueness of $\bar{u}_m$}\\
				= \underline{\xi_c}_A
			\end{align*}
			and
			\begin{align*}
				\underline{\bar{u}_ox_c}_A\circ \underline{(c_E\circ \langle \bar{u}_m\circ \pi_1, \bar{u}_m\circ \pi_2\rangle)}_A \\
				= (\mathrm{id}_I, c_E\circ \left\langle c_E \circ \langle \bar{u}_m\circ \pi_1, \bar{u}_m\circ \pi_2\rangle, \bar{u}_ox_c\right\rangle\circ A) \tag{def. vertical comp.}\\
				= (\mathrm{id}_I, c_E\circ \left\langle c_E \circ \langle \bar{u}_m\circ \pi_1, \bar{u}_m\circ \pi_2\rangle, \bar{u}_o\circ t_X\circ c_X\right\rangle\circ A) \tag{unwrapping $x_c$}\\
				= (\mathrm{id}_I, c_E\circ \left\langle \bar{u}_m\circ \pi_1, c_E \circ \langle\bar{u}_m\circ \pi_2, \bar{u}_o\circ t_X\circ c_X\rangle\right\rangle\circ A) \tag{comp. is associative}\\
				= (\mathrm{id}_I, c_E\circ \left\langle \bar{u}_m\circ \pi_1, c_E \circ \langle\bar{u}_m\circ \pi_2, \bar{u}_o\circ t_X\circ \pi_2\rangle\right\rangle\circ A) \tag{taget of comp.}\\
				= (\mathrm{id}_I, c_E\circ \left\langle \bar{u}_m\circ \pi_1, c_E \circ \langle\bar{u}_o\circ s_X\circ \pi_2, \iota_X\circ \pi_2\rangle\right\rangle\circ A) \tag{unique property of $\bar{u}_m$}\\
				= (\mathrm{id}_I, c_E\circ \left\langle c_E \circ \langle \bar{u}_m\circ \pi_1, \bar{u}_o\circ s_X\circ \pi_2\rangle, \iota_X\circ \pi_2\right\rangle\circ A) \tag{comp. is associative}\\
				= (\mathrm{id}_I, c_E\circ \left\langle c_E \circ \langle \bar{u}_m\circ \pi_1, \bar{u}_o\circ t_X\circ \pi_1\rangle, \iota_X\circ \pi_2\right\rangle\circ A) \tag{pullback of comp.}\\
				= (\mathrm{id}_I, c_E\circ \left\langle c_E \circ \langle \bar{u}_o\circ s_X \circ \pi_1, \iota_X\circ \pi_1\rangle, \iota_X\circ \pi_2\right\rangle\circ A) \tag{unique property of $\bar{u}_m$}\\
				= (\mathrm{id}_I, c_E\circ \left\langle \bar{u}_o\circ s_X \circ \pi_1, c_E \circ \langle \iota_X\circ \pi_1, \iota_X\circ \pi_2\rangle\right\rangle\circ A) \tag{comp. is associative} \\
				= (\mathrm{id}_I, c_E\circ \left\langle \bar{u}_o\circ s_X \circ c_X, c_E \circ \langle \iota_X\circ \pi_1, \iota_X\circ \pi_2\rangle\right\rangle\circ A) \tag{source of comp.}\\
				= (\mathrm{id}_I, c_E\circ \langle \bar{u}_o\circ s_X \circ c_X, \iota_X\circ c_X\rangle\circ A) \tag{functoriality of $\iota_X$}\\
				= \underline{\xi_c}_A
			\end{align*}
			We have that $c_E\circ \langle \bar{u}_m\circ \pi_1, \bar{u}_m\circ \pi_2\rangle$ and $\bar{u}_m\circ c_X$ are both above $\mathrm{id}_Y = \gamma_c$ via $p$ because the composition maps respect the fibre stucture. \newline
			Because $c_E\circ \langle \bar{u}_m\circ \pi_1, \bar{u}_m\circ \pi_2\rangle$ and $\bar{u}_m\circ c_X$ are both unique with regards to the same properties, they must be equal.
		\end{itemize}
	\end{enumerate}
\end{proof}

We briefly review the necessary background on (2-)fibrations.
We show that the internal functor ~$p \colon \Pred \to \Univ$, constructed in~\cref{eq:internal-predicate-functor}, satisfies this definition and forms a fibration in the 2-category ~$\Cat{\Sh{\CC, J}}$ in the following sense.
\begin{definition}
\label{def:2fibration}
A \emph{fibration} in a 2-category~$\Cat{K}$ is a 1-cell~$p\colon\IntCat{E}\to\IntCat{B}$ such that for every 2-cell
\begin{equation*}
\beta\colon(\IntCat{X}\xra{b}\IntCat{B})\Rightarrow (\IntCat{X}\xra{e}\IntCat{E}\xra{p}\IntCat{B})
\end{equation*}
there is a \emph{$p$-cartesian lift}~$\alpha\colon(\IntCat{X}\xra{e'}\IntCat{E})\Rightarrow(\IntCat{X}\xra{e}\IntCat{E}).$
	
	A 2-cell $\alpha$ is $p$-cartesian when given the following data:
	\begin{itemize}
		\item all 1-cells ~$x\colon \IntCat{Y}\to \IntCat{X}$,
		\item a 1-cell ~$e^{\prime\prime}\colon \IntCat{Y}\to \IntCat{E}$,
		\item all 2-cells ~$\xi\colon e^{\prime\prime}\Rightarrow e\circ x$,
		\item and all 2-cells ~$\gamma\colon p\circ e^{\prime\prime}\Rightarrow p\circ e^\prime \circ x$
	\end{itemize}

	\noindent such that ~$p\xi = p\alpha x\circ \gamma$ (where left-and- right whiskering is given by juxtaposition), there exists a unique 2-cell ~$\colon e^{\prime\prime}\Rightarrow e^\prime$ such that
	\begin{equation*}
		p\zeta = \gamma \quad \xi = \alpha x\circ \zeta.
	\end{equation*}
	
	\noindent This condition can be seen as a lifting property of 2-cells. If one views 1-cells as objects and 2-cells as morphisms, the definition recovers the usual lifting condition of a Grothendieck fibration in an ordinary 1-category.
\end{definition}

\begin{definition}
	Given a 2-category ~$\mathcal{K}$, the 2-category ~$\co{\mathcal{K}}$ is defined as follows: 
	it has the same objects and 1-cells as ~$\mathcal{K}$, but the direction of all 2-cells is reversed. 
	That is, for any 1-cells ~$f, g\colon A \to B$, a 2-cell ~$\alpha\colon f \Rightarrow g$ in ~$\mathcal{K}^{\mathrm{co}}$ corresponds to a 2-cell ~$\alpha\colon g \Rightarrow f$ in ~$\mathcal{K}$.
	Vertical and horizontal composition of 2-cells is defined accordingly.
\end{definition}

\begin{definition}
Maintain the notation as in~\cref{def:2fibration}.
We denote 2-cells going the reverse direction with ~$\co{(-)}$.
Let ~$p\colon \IntCat{E}\to \IntCat{B}$ be a 1-cell in a 2-category.
The 1-cell $p$ is an opfibration if for all 2-cells ~$\co{\beta}\colon p\circ e\Rightarrow b$ has a $p$-cartesian 2-cell ~$\co{\alpha}\colon e\Rightarrow e^\prime$ above $\co{\beta}$.
A 2-cell ~$\co{\alpha}$  is $p$-opcartesian if for all 2-cells ~$\co{\xi}\colon e\circ x\Rightarrow e^{\prime\prime}$ and ~$\co{\gamma}\colon p\circ e^\prime\circ x\Rightarrow p\circ e^{\prime\prime}$ such that ~$p\co{\xi} = \co{\gamma}\circ p\co{\alpha}x$, there exists a unique 2-cell ~$\co{\zeta}\colon e' \Rightarrow d$ such that ~$\co{\xi} = \co{\zeta}\circ \co{\alpha}x$ and ~$p \co{\zeta} = \co{\gamma}$.
\end{definition}
	
The proof that $p$ is an opfibration proceeds by dualising the construction of cartesian 2-cells given in the previous theorem. 
In the fibration case, we are given a morphism ~$\delta\colon F\to G$ and a subobject ~$Q \subseteq G$, and construct the pullback ~$P \coloneqq \gamma^* Q \subseteq F$. 
This ensures that the triple ~$((F, \phi), (G, \psi), \delta) \in \mathsf{Pred}_1(A)$, with \( \phi \coloneqq \psi \circ \delta \), satisfies the condition for being a morphism in the internal category ~$\Pred$.
	
In the opfibration case, the construction is reversed: given a morphism ~$\delta \colon F \to G$ and a subobject ~$P \subseteq F$, we define ~$Q \subseteq G$ to be the smallest subobject such that ~$\gamma(P) \subseteq Q$, i.e., the \emph{image} of ~$ P$ along ~$\delta $. 
The subobject $ Q $ is then classified by a map ~$\psi \colon G \to \Omega$, uniquely determined by the requirement that ~$\phi = \psi \circ \delta$, where ~$\phi \colon F \to \Omega$ classifies $P$. 
This yields a triple ~$((F, \phi), (G, \psi), \delta) \in \Pred_1(A) $, now constructed via forward image rather than pullback.
	
This construction recalls the usual definition of opcartesian morphisms via left adjoints to pullback. 
Since Grothendieck toposes admit stable image factorisations, meaning for any morphism we can always define the smallest subobject through which it factors, we conclude that $p$ defines a bifibration in ~$\Cat{\Sh{\CC, J}}$.

\begin{lemma}
	Let $\mathcal{K}$ be a 2-category, and let ~$p\colon \IntCat{E}\to\IntCat{B}$ be a 1-cell in $\mathcal{K}$. A 2-cell ~$\alpha\colon f\Rightarrow g$ in $\mathcal{K}$ is opcartesian for $p$ if and only if $\alpha$ is cartesian for $p$ when regarded as a 2-cell in the 2-category $\co{\mathcal{K}}$.
\end{lemma}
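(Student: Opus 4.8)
The plan is to prove the lemma purely by unwinding the two definitions and observing that they are literal duals of one another. First I would recall precisely what passing from $\mathcal{K}$ to $\co{\mathcal{K}}$ does: the $0$-cells and $1$-cells are unchanged---in particular $p\colon\IntCat{E}\to\IntCat{B}$ remains the same $1$-cell---while every $2$-cell $\alpha\colon f\Rightarrow g$ of $\mathcal{K}$ is reinterpreted as a $2$-cell $\alpha\colon g\Rightarrow f$ of $\co{\mathcal{K}}$. The crucial bookkeeping point is how composition transforms: since $1$-cells are untouched, left and right whiskering of a $2$-cell by a $1$-cell (written as juxtaposition, e.g.\ $p\alpha$ and $\alpha x$) is preserved, whereas vertical composition is reversed, so a composite $\beta\circ\delta$ computed in $\co{\mathcal{K}}$ equals the composite $\delta\circ\beta$ computed in $\mathcal{K}$. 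Reversal of $2$-cells is moreover an involutive bijection on each hom-category.

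Next I would spell out the cartesian property of $\alpha$ as a $2$-cell of $\co{\mathcal{K}}$ using \cref{def:2fibration} verbatim, but with every symbol interpreted in $\co{\mathcal{K}}$. This yields: for all $1$-cells $x\colon\IntCat{Y}\to\IntCat{X}$, every $1$-cell $e''$, and all $\co{\mathcal{K}}$-$2$-cells $\xi\colon e''\Rightarrow e\circ x$ and $\gamma\colon p\circ e''\Rightarrow p\circ e'\circ x$ satisfying $p\xi = p\alpha x\circ\gamma$ (the composite taken in $\co{\mathcal{K}}$), there is a unique $\co{\mathcal{K}}$-$2$-cell $\zeta\colon e''\Rightarrow e'$ with $p\zeta=\gamma$ and $\xi = \alpha x\circ\zeta$. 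I would then translate this condition back into $\mathcal{K}$ using the dictionary above. Reversing each $\co{\mathcal{K}}$-$2$-cell sends $\alpha$, $\xi$, $\gamma$, $\zeta$ to the $\mathcal{K}$-$2$-cells $\co{\alpha}\colon e\Rightarrow e'$, $\co{\xi}\colon e\circ x\Rightarrow e''$, $\co{\gamma}\colon p\circ e'\circ x\Rightarrow p\circ e''$, and $\co{\zeta}\colon e'\Rightarrow e''$ (the target $d$ of the opcartesian definition being precisely $e''$), and converts every vertical composite into its opposite while leaving the whiskerings $p\co{\alpha}x$ and $\co{\alpha}x$ intact.

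Under this translation the hypothesis $p\xi = p\alpha x\circ\gamma$ becomes $p\co{\xi}=\co{\gamma}\circ p\co{\alpha}x$, and the two conclusions become $\co{\xi}=\co{\zeta}\circ\co{\alpha}x$ and $p\co{\zeta}=\co{\gamma}$. These are exactly the clauses of the opcartesian definition, and since reversal is a bijection on $2$-cells, existence-and-uniqueness of $\zeta$ corresponds bijectively to existence-and-uniqueness of $\co{\zeta}$. I would track the source and target of each $2$-cell through the reversal to confirm every composite typechecks on the correct side. This shows $\alpha$ is cartesian for $p$ in $\co{\mathcal{K}}$ if and only if $\alpha$ is opcartesian for $p$ in $\mathcal{K}$, as claimed.

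There is no genuine mathematical obstacle here; the content is entirely definitional. The only place demanding care is the direction bookkeeping---ensuring that whiskering is preserved while vertical composition is reversed, so that the single compatibility equation and the two triangle identities each flip the order of their factors exactly once and land on the opcartesian clauses rather than some mismatched variant.
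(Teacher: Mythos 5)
Your proof is correct and follows essentially the same route as the paper's: both unwind the two definitions and observe that reversing 2-cells in $\co{\mathcal{K}}$ (with whiskering preserved and vertical composition reversed) turns the opcartesian lifting property in $\mathcal{K}$ verbatim into the cartesian one in $\co{\mathcal{K}}$. Your version simply makes the direction bookkeeping more explicit than the paper's terse argument.
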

\begin{proof}
	By definition, the 2-category ~$\co{\mathcal{K}}$ has the same objects and 1-cells as $\mathcal{K}$, but the direction of all 2-cells is reversed.
	
	Recall that a 2-cell ~$\alpha$  is $p$-opcartesian if for all 2-cells ~$\xi\colon e\circ x\Rightarrow e^{\prime\prime}$ and ~$\gamma\colon p\circ e^\prime\circ x\Rightarrow p\circ e^{\prime\prime}$ such that ~$p\xi = \gamma\circ p\alpha x$, there exists a unique 2-cell ~$\zeta\colon e' \Rightarrow d$ such that ~$\xi = \zeta\circ \alpha x$ and ~$p \zeta = \gamma$.
	
	But this is precisely the definition for a cartesian 2-cell in $\co{\mathcal{K}}$, showing the universal lifting property for opcartesian 2-cells in $\mathcal{K}$ is exactly the universal lifting property for cartesian 2-cells in $\co{\mathcal{K}}$, completing the proof.
\end{proof}

\begin{corollary}
	For a base category ~$\EC$ and an opfibration ~$p\colon \IntCat{E}\to \IntCat{B}$ in ~$\Int{\Cat{E}}$, for all 2-cells ~$f\colon X\Rightarrow Y$, given the fibres ~$\IntCat{E}_X, \IntCat{E}_Y$, there exists an internal functor
	\begin{equation*}
		\exists^f\colon \IntCat{E}_X\rightarrow \IntCat{E}_Y
	\end{equation*}
	in ~$\Int{\Cat{E}}$ by following the procedure in~\cref{prop:internal-reindexing}, but in ~$\co{\Int{\Cat{E}}}$ instead of in ~$\Int{\Cat{E}}$.
\end{corollary}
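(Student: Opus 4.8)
The plan is to reduce the construction of $\exists^{f}$ to the reindexing construction of \cref{prop:internal-reindexing} by changing the ambient 2-category rather than repeating any diagram chase. By the preceding lemma, a 2-cell is $p$-opcartesian in $\Int{\Cat{E}}$ precisely when it is $p$-cartesian in $\co{\Int{\Cat{E}}}$; consequently, the hypothesis that $p \colon \IntCat{E} \to \IntCat{B}$ is an opfibration in $\Int{\Cat{E}}$ is equivalent, via \cref{def:2fibration}, to the statement that $p$ is a fibration in $\co{\Int{\Cat{E}}}$. This equivalence is the single conceptual step; everything else is transport of an already-established construction.

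First I would record that $\co{\Int{\Cat{E}}}$ has the same objects (internal categories) and the same 1-cells (internal functors) as $\Int{\Cat{E}}$, with only the 2-cells reversed. In particular, the fibres $\IntCat{E}_{X}$ and $\IntCat{E}_{Y}$ from \cref{def:internal-fibre}, being defined by pullbacks of 1-cells over the object and morphism parts, are literally the same internal categories whether computed in $\Int{\Cat{E}}$ or in $\co{\Int{\Cat{E}}}$. The 2-cell $f \colon X \Rightarrow Y$ becomes a 2-cell $Y \Rightarrow X$ in $\co{\Int{\Cat{E}}}$, so that reindexing along it in the co-direction (in the sense of \cref{thm:internal-reindexing}) targets the fibre over $X$ and lands in the fibre over $Y$, which is exactly the desired variance.

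Next I would apply the construction of \cref{prop:internal-reindexing} verbatim, now interpreted inside the fibration $p$ in $\co{\Int{\Cat{E}}}$. The key observation is that this construction is purely 2-categorical: it uses only a chosen cartesian lift, the universal properties of the pullbacks defining the fibres, the discrete internal category functor $\DiscCat{-}$ together with externalisation, and whiskering and vertical composition of 2-cells. Each ingredient is insensitive to the direction of the 2-cells — $\DiscCat{-}$ yields internal categories with trivial morphism part, and the defining pullbacks live entirely at the level of objects and 1-cells — so the construction runs unchanged and produces an internal functor in $\co{\Int{\Cat{E}}}$. Since the 1-cells of $\co{\Int{\Cat{E}}}$ coincide with the internal functors of $\Int{\Cat{E}}$, this is precisely the sought internal functor $\exists^{f} \colon \IntCat{E}_{X} \to \IntCat{E}_{Y}$ in $\Int{\Cat{E}}$.

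The main obstacle is not a hard computation but a bookkeeping verification: one must confirm that no step of the reindexing construction covertly depends on the orientation of 2-cells in a way that would fail to dualise. Concretely, the whiskering identity $p\xi = p\alpha x \circ \gamma$ used in \cref{def:2fibration} to induce the unique cartesian lift must reverse correctly into the opcartesian lifting identity $p\co{\xi} = \co{\gamma} \circ p\co{\alpha}x$, so that the universal property invoked in $\co{\Int{\Cat{E}}}$ is exactly the one supplied by $p$ being an opfibration in $\Int{\Cat{E}}$. Once this compatibility is checked, uniqueness of $\exists^{f}$ follows from the (op)cartesian universal property in the same way that uniqueness of the reindexing functor followed in \cref{prop:internal-reindexing}.
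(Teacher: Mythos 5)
Your proposal is correct and takes essentially the same route as the paper: the preceding lemma identifies $p$-opcartesian 2-cells in $\Int{\Cat{E}}$ with $p$-cartesian 2-cells in $\co{\Int{\Cat{E}}}$, so the construction of \cref{prop:internal-reindexing} runs verbatim in the co-2-category, and since objects and 1-cells there coincide with those of $\Int{\Cat{E}}$, the result is the desired internal functor $\exists^{f}\colon \IntCat{E}_{X}\to \IntCat{E}_{Y}$. Your closing bookkeeping check --- that the lifting identity $p\xi = p\alpha x \circ \gamma$ reverses into $p\co{\xi} = \co{\gamma} \circ p\co{\alpha}x$ --- is precisely the content of that lemma, so nothing further is needed.
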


\section{The Fibred Day Convolution Map \texorpdfstring{$\Pred_{F, 0}\DaySH \Pred_{F, 0} \to \Pred_{F \DaySH F, 0}$}{Pred DaySH Pred to Pred}}
\label{apdx:map}

Let $(\CC, J)$ be a Day-stable site with monoidal product $\cdot$, and $F$ a resource sheaf. We start with the diagram below
\begin{equation*}
\begin{tikzcd}
        & \op{\mathcal{C}} 
          \arrow[dr, bend left=30, "P_{\DaySH}"{name=D, above}] 
          \arrow[dr, bend right=30, "P \DaySH P"{name=U, description}] \\
        \op{(\mathcal{C} \times \mathcal{C})} 
          \arrow[ur, "\op{{\DaySH}}"{name=L, above}] 
          \arrow[rr, swap, "P \boxtimes P"{name=R, below}] 
        & & \Set
        \arrow[Rightarrow, from=U, to=D, "\alpha", shorten <=1pt, shorten >=1pt] 
  \arrow[Rightarrow, from=R, to=L, "\lambda", shorten <=6pt, shorten >=6pt]
      \end{tikzcd}
\end{equation*}

where
\begin{itemize}
	\item $\boxtimes$ is the external product,
	\item $P$ is defined as $\Pred_{F, 0}$,
	\item $P_\otimes$ is defined as $\Pred_{F\DaySH F, 0}$,
	\item $\lambda$ is given by Day convolution $\mathrm{Lan}_{\cdot}(F\boxtimes F)\simeq F\DaySH F$.
\end{itemize}

Given ~$\lambda$, we want to produce the map $\alpha\colon P\DaySH P\to P_\otimes$ using the UMP of the left kan extension.

For this, it suffices to produce a map $\beta\colon P\boxtimes P\to P_\otimes \circ \op{\cdot}$. 

We express $P\boxtimes P$ and $P_\otimes \circ \op{\cdot}$ diagramatically:

\begin{equation*}\begin{tikzcd}
	&& {\PSh{\CC}} \\
	{\PSh{\CC}\times \PSh{\CC}} &&&& {\PSh{\CC}\times \PSh{\CC}} \\
	&&&& {\PSh{\CC}} \\
	{\PSh{\slice{\CC}{A}}\times\PSh{\slice{\CC}{B}}} &&&& {\PSh{\slice{\CC}{A\cdot B}}} \\
	{\PSh{\slice{\CC}{A}}\times\PSh{\slice{\CC}{B}}} &&&& {\PSh{\slice{\CC}{A\cdot B}}} \\
	&& {\PSh{\slice{\CC}{A}\times\slice{\CC}{B}}}
	\arrow["\Delta"', from=1-3, to=2-1]
	\arrow["\Delta", from=1-3, to=2-5]
	\arrow["{\PSh{\dom_A}\times \PSh{\dom_B}}"', from=2-1, to=4-1]
	\arrow["\DaySH", from=2-5, to=3-5]
	\arrow["\PSh{\dom_{A\cdot b}}", from=3-5, to=4-5]
	\arrow["{\Omega^-_{\slice{\CC}{A}}\times \Omega^-_{\slice{\CC}{B}}}"', from=4-1, to=5-1]
	\arrow["{\Omega^-_{A\cdot B}}", from=4-5, to=5-5]
	\arrow["{\boxtimes_{\slice{\CC}{A}, \slice{\CC}{B}}}"', from=5-1, to=6-3]
	\arrow["{\PSh{\gamma_{A,B}}}", from=5-5, to=6-3]
\end{tikzcd}
\end{equation*}

where
\begin{itemize}
    \item $\Delta$ is the diagonal functor,
    \item $\dom_{-}$ is the domain functor ~$\slice{\CC}{-}\to \CC$,
    \item $\mathbf{PSh}(-) \colon \op{\Cats}\to \op{\Cats}$ is the endofunctor sending $\CC \mapsto \PSh{\CC}$ and $F \mapsto (-)\circ \op{F}$,
    \item $\boxtimes\colon \mathbf{PSh}\times \mathbf{PSh}\to \mathbf{PSh}\circ (-\times -)$ is the external product composed with the internal product in the domain:
    \begin{equation*}
        \boxtimes_{\CC, \DC}\colon \PSh{\CC}\times \PSh{\DC}\to \PSh{\CC\times \DC}
    \end{equation*},
    \item $\gamma_{A, B}\colon \slice{\CC}{A}\times \slice{\CC}{B}\to \slice{\CC}{A\cdot B}$ is given by assumption that ~$\slice{\CC}{-}$ is lax monoidal~\cref{def:day-stable},
    \item ~$\Omega^-_{\CC}\colon \op{\PSh{\CC}}\to \PSh{\CC}$ is the functor mapping a presheaf $F$ to the hom-set $[F, \Omega]\in \PSh{\CC}$.
\end{itemize}

Now, we fill in the diagram as follows:

\begin{tikzcd}
    &                                                               & \PSh{\CC} \arrow[rdd, "\Delta"] \arrow[ld, "\Delta", swap]                         &                                                           \\
    & \PSh{\CC}\times \PSh{\CC} \arrow[ld, "\DaySH", swap] \arrow[rdd, "\Omega^-_{\CC}\DaySH\Omega^-_{\CC}", swap] \arrow[rrdd, "\PSh{\dom_A}\times \PSh{\dom_B}", sloped] &                                                          &                                                           \\
\PSh{\CC} \arrow[d, "\PSh{\dom_{A\cdot B}}", swap] \arrow[rd, "\Omega^-_{\CC}", swap]          &                                                               &                                                          & \PSh{\CC}\times \PSh{\CC} \arrow[d, "\PSh{\dom_A}\times \PSh{\dom_B}"]                       \\
\PSh{\slice{\CC}{A\cdot B}} \arrow[d, ""{name = D, above}, "\Omega^-_{\slice{\CC}{A\cdot B}}", swap]   & \PSh{\CC} \arrow[ld, "\PSh{\dom_{A\cdot B}}"] \arrow[rd, "\PSh{\cdot}", swap]                               & \PSh{\CC}\times \PSh{\CC} \arrow[d, ""{name = U, below}, "\boxtimes_{\CC, \CC}"] \arrow[l, "\DaySH"] \arrow[rd, "{\scriptscriptstyle \PSh{\dom_A}\times \PSh{\dom_B}}",  pos=0.5, sloped, xshift=6pt] & \PSh{\slice{\CC/A}}\times \PSh{\slice{\CC}{B}} \arrow[d, "\Omega^-_{\slice{\CC}{A}}\times \Omega^-_{\slice{\CC}{B}}"]  \\
\PSh{\slice{\CC}{A\cdot B}} \arrow[rrd, "\PSh{\gamma_{A, B}}", swap] &                                                               & \PSh{\CC\times \CC} \arrow[d, "\PSh{\dom_A\times \dom_B}"{description}]                            & \PSh{\slice{\CC}{A}}\times \PSh{\slice{\CC}{B}} \arrow[ld, "\boxtimes_{\slice{\CC}{A}, \slice{\CC}{B}}"] \\
    &                                                               & \PSh{\slice{\CC}{A}\times \slice{\CC}{B}}                &                                                          
\arrow[Rightarrow, from=U, to= D, shorten <=7pt, shorten >=157pt, "\boldsymbol{\lambda}", swap, pos=0.07, draw = blue]
\end{tikzcd}

The diagram below trivially commutes.
\begin{equation*}
\begin{tikzcd}
    \PSh{\CC}\times \PSh{\CC} \arrow[r, "{\scriptscriptstyle\PSh{\dom_A}\times \PSh{\dom_B}}"] & \PSh{\slice{\CC}{A}}\times \PSh{\slice{\CC}{B}}  \\
    \PSh{\CC} \arrow[u, "\Delta"] \arrow[r, "\Delta"']                     & \PSh{\CC} \arrow[u, "{\scriptscriptstyle\PSh{\dom_A}\times \PSh{\dom_B}}"']
    \end{tikzcd}
\end{equation*}

The diagram below commutes by naturality of $\boxtimes$.
\begin{equation*}
    \begin{tikzcd}
        \PSh{\CC\times \CC} \arrow[r, "{\scriptscriptstyle\PSh{\dom_A}\times \PSh{\dom_B}}"]                                                     & \PSh{\slice{\CC}{A}\times \slice{\CC}{B}}                                                                 \\
        \PSh{\CC}\times \PSh{\CC} \arrow[r, "{\scriptscriptstyle\PSh{\dom_A}\times \PSh{\dom_B}}"'] \arrow[u, "{\boxtimes_{\CC,\CC}}"] & \PSh{\slice{\CC}{A}}\times \PSh{\slice{\CC}{B}} \arrow[u, "{\boxtimes_{\slice{\CC}{A},\slice{\CC}{B}}}"']
    \end{tikzcd}
\end{equation*}

The diagram below commutes because $\slice{\CC}{-}$ is lax monoidal.
\begin{equation*}
\begin{tikzcd}
    \PSh{\slice{\CC}{A\cdot B}}  \arrow[r, "\PSh{\gamma_{A, B}}"] & \PSh{\slice{\CC}{A}\times \slice{\CC}{B}}\\
    \PSh{\CC}  \arrow[u, "\PSh{\dom_{A\cdot B}}"] \arrow[r, "\PSh{\cdot}"'] & \PSh{\CC\times \CC} \arrow[u, "\PSh{\dom_A\times\dom_B}"']
\end{tikzcd}
\end{equation*}

The squares below are more tricky
\begin{equation*}
    \begin{tikzcd}
    \PSh{\CC} \arrow[r, "\Omega^-_{\CC}"] \arrow[d, "\PSh{\dom_{A\cdot B}}"'] 
        & \PSh{\CC} \arrow[d, "\PSh{\dom_{A\cdot B}}"] \\
    \PSh{\slice{\CC}{A\cdot B}} \arrow[r, "\Omega^-_{\slice{\CC}{A\cdot B}}"']
        & \PSh{\slice{\CC}{A\cdot B}} 
    \end{tikzcd}
\end{equation*}

\begin{equation*}
    \begin{tikzcd}
    \PSh{\CC}\times \PSh{\CC} \arrow[r, "\Omega^-_\CC \times \Omega^-_\CC"] \arrow[d, "\PSh{\dom_A} \times \PSh{\dom_B}"'] 
        & \PSh{\CC} \times \PSh{\CC} \arrow[d, "\PSh{\dom_A} \times \PSh{\dom_B}"] \\
    \PSh{\slice{\CC}{A}} \times \PSh{\slice{\CC}{B}} \arrow[r, "\Omega^-_{\slice{\CC}{A}} \times \Omega^-_{\slice{\CC}{B}}"']
        & \PSh{\slice{\CC}{A}} \times \PSh{\slice{\CC}{B}} 
    \end{tikzcd}
    \end{equation*}

We have to show that $[F, \Omega]\circ \op{(\dom_A)} = [F\circ \op{(\dom_A)}, \Omega]$ for all presheaves $F$ on $\CC$ and all $A\in \CC$, which follows by a Yoneda-type argument.\todo{Give details}

Finally, to make the last square commute, we have to assume that Day convolution preserves subobjects:
\begin{equation*}
    \begin{tikzcd}
        \PSh{\slice{\CC}{A}} \arrow[r, "\Omega^-_{\slice{\CC}{A}}"]  & \PSh{\slice{\CC}{A}} \\
        \PSh{\CC} \arrow[r, "\Omega_{\CC}^-"'] \arrow[u, "\PSh{\dom_A}"]& \PSh{\CC} \arrow[u, "\PSh{\dom_A}"']
    \end{tikzcd}
\end{equation*}

Now that all squares commute under the assumptions of a Day-stable site, the diagram collapses to the following.

\begin{equation*}
    \begin{tikzcd}
    \PSh{\CC} \arrow[d, "\Delta"'] & & \PSh{\CC} \arrow[dr, "\PSh{\cdot}"']\arrow[Rightarrow, from=U, "\lambda", shorten <=19pt, shorten >=6pt] & \PSh{\slice{\CC}{A} \times \slice{\CC}{B}}\\
    \PSh{\CC} \times \PSh{\CC} \arrow[r, "\Omega_{\CC}\times \Omega_{\CC}"']  & 
    \PSh{\CC} \times \PSh{\CC} \arrow[rr, "\DaySH"{name=U, below}] \arrow[ur, "\boxtimes_{\CC, \CC}"] & & 
    \PSh{\CC \times \CC} \arrow[u, "\PSh{\dom_A}\times \PSh{\dom_B}"{description}]  
    \end{tikzcd}
    \end{equation*}

Then, we define the to be produced map $\beta\colon \Pred_{F, 0}\boxtimes \Pred_{F, 0}\to \Pred_{F\DaySH F, 0} \circ \op{\cdot}$ as the whiskering
\begin{equation*}
    \begin{aligned}
    \beta :=\; & \PSh{\dom_A \times \dom_B}\,
                \lambda\, (\Omega^-_{\CC} \times \Omega^-_{\CC} \circ \Delta) \colon \\
              & \quad \boxtimes_{\slice{\CC}{A}, \slice{\CC}{B}} 
                \circ (\Omega^-_{\slice{\CC}{A}} \times \Omega^-_{\slice{\CC}{B}}) 
                \circ (\PSh{\dom_A} \times \PSh{\dom_B}) \\
              & \Rightarrow\;
                \PSh{\gamma_{A, B}} 
                \circ \Omega^-_{\slice{\CC}{A \cdot B}} 
                \circ \PSh{\dom_{A \cdot B}} 
                \circ \DaySH 
                \circ \Delta
    \end{aligned}
    \end{equation*}

\noindent inducing the map $\alpha\colon \Pred_{F, 0}\DaySH \Pred_{F, 0}\to \Pred_{F\DaySH F, 0}$ by the UMP of the left kan extension.

\section{Proof details}
\label{sec:proofdetails-internalcategories}
This appendix section provides detailed proofs corresponding to the results presented in ~\cref{sec:internalcategoriessheaves}.

\printProofs[internal-categories-sheaves]
\section{Proof details}
\label{sec:proofdetails-internallogic}
This appendix section provides detailed proofs corresponding to the results presented in ~\cref{sec:internallogic}.

\printProofs[internal-logic]
\end{document}